\definecolor{darkred}{rgb}{0.5,0,0}
\definecolor{darkblue}{rgb}{0,0,0.5}
\definecolor{darkgreen}{rgb}{0,0.5,0}
\newcommand{\dotcup}{\mathbin{\dot{\cup}}}
\newcommand{\notpreceq}{\mathbin{\centernot\preceq}}
\newcommand{\symdif}{\mathbin{\triangle}}
\renewcommand*{\mod}{\mathbin{\operatorname*{mod}}}
\DeclareMathOperator{\im}{im}
\DeclareMathOperator{\dom}{dom}
\DeclareMathOperator{\qr}{qr}
\DeclareMathOperator{\free}{free}
\DeclareMathOperator{\CR}{CR}
\DeclareMathOperator{\spasm}{spasm}
\DeclareMathOperator{\sub}{sub}
\DeclareMathOperator{\cl}{cl}
\DeclareMathOperator{\NS}{NS}
\DeclareMathOperator{\ABP}{ABP}
\DeclareMathOperator{\AP}{AP}
\DeclareMathOperator{\BP}{BP}
\DeclareMathOperator{\last}{last}
\DeclareMathOperator{\leaf}{leaf}
\DeclareMathOperator{\elim}{elim}
\DeclareMathOperator{\Str}{\mathbf{Str}}
\DeclareMathOperator{\Obj}{Obj}
\DeclareMathOperator{\Ar}{Ar}
\DeclareMathOperator{\id}{id}
\theoremstyle{definition}
\newtheorem{definition}{Definition}[section]
\newtheorem{remark}[definition]{Remark}
\theoremstyle{plain}
\newtheorem{lemma}[definition]{Lemma}
\newtheorem{theorem}[definition]{Theorem}
\newtheorem{proposition}[definition]{Proposition}
\title{Homomorphism Indistinguishability and Game Comonads for Restricted Conjunction and Requantification}
\author{Georg Schindling \\ TU Darmstadt \\ \texttt{\small \href{mailto:schindling@mathematik-tu-darmstadt.de}{schindling@mathematik-tu-darmstadt.de}}}
\begin{document}

  \maketitle

  \begin{abstract}
      The notion of homomorphism indistinguishability offers a combinatorial framework for characterizing equivalence relations of graphs,
      in particular equivalences in counting logics within finite model theory.
      That is, for certain graph classes, two structures agree on all homomorphism counts from the class if and only if they satisfy the same sentences in a corresponding logic.
      This perspective often reveals connections between the combinatorial properties of graph classes and the syntactic structure of logical fragments.
      In this work, we extend this perspective to logics with restricted requantification, refining the stratification of logical resources in finite-variable counting logics.
      Specifically, we generalize Lovász-type theorems for these logics with either restricted conjunction or bounded quantifier-rank and present new combinatorial proofs of existing results.
      To this end, we introduce novel path and tree decompositions that incorporate the concept of reusability and develop characterizations based on pursuit-evasion games.
      Leveraging this framework, we establish that classes of bounded pathwidth and treewidth with reusability constraints are homomorphism distinguishing closed.
      Finally, we develop a comonadic perspective on requantification by constructing new comonads that encapsulate restricted-reusability pebble games.
      We show a tight correspondence between their coalgebras and path/tree decompositions, yielding categorical characterizations of reusability in graph decompositions.
      This unifies logical, combinatorial, and categorical perspectives on the notion of reusability.
  \end{abstract}
  \paragraph*{Funding.} The research leading to these results has received funding from from the German Research Foundation DFG (SFB-TRR 195 “Symbolic Tools in Mathematics and their Application”).
  \paragraph*{Acknowledgements.} I would like to thank Irene Heinrich, Gaurav Rattan, and Pascal Schweitzer for valuable discussions. Also, I thank the anonymous reviewers for their helpful remarks on the content and presentation of the paper. 
  \section{Introduction}
A fundamental result in graph theory due to Lovász~\cite{Lovasz1967} states that two graphs~$G$ and~$H$ are isomorphic if and only if for every graph~$F$ it holds~$\hom(F,G) = \hom(F,H)$, where~$\hom(F,G)$ denotes the number of homomorphisms from~$F$ to~$G$. 
More generally,~$G$ and~$H$ are said to be \emph{homomorphism indistinguishable} over a graph class~$\mathcal{F}$, denoted by~$G \equiv_{\mathcal{F}} H$, if~$\hom(F,G) = \hom(F,H)$ for all~$F \in \mathcal{F}$.
In this formulation, the seminal result of Lovász states that the equivalence relation~$\equiv_{\mathcal{G}}$ is the same as graph isomorphism, where~$\mathcal{G}$ is the class of all graphs.
Understanding characterizations of the homomorphism indistinguishability relation~$\equiv_{\mathcal{F}}$ not only deepens our understanding of various graph invariants but also informs algorithmic approaches to problems like graph isomorphism~\cite{Roberson24}, 
subgraph counting~\cite{Curticapean2017}, and counting answers to conjunctive queries~\cite{Goebel2024}.
In recent years, the relation~$ \equiv_{\mathcal{F}}$ was characterized for several graph classes~$\mathcal{F}$ as a natural equivalence relation arising from logic and algebra. 
Notable examples include graphs of bounded treewidth~\cite{Dell2018, Dvorak2010}, bounded pathwidth~\cite{Grohe2022, Montacute2024}, bounded tree-depth~\cite{Grohe2020, Grohe2022}, and planar graphs~\cite{Mancinska2020}.  
Two central questions have emerged in this line of research and attract ongoing interest:  
\begin{enumerate}[label={(\arabic*)}]
    \item How do structural properties of the class~$\mathcal{F}$ relate to the semantics of the relation~$\equiv_{\mathcal{F}}$? \label{q1}
\end{enumerate}  
This first question often admits elegant characterizations in terms of mathematical logic, particularly when the graph classes are defined via \emph{graph decompositions}. In this context, logical equivalence provides a natural way to express homomorphism indistinguishability. 
For a logic on graphs~$\mathsf{L}$, two graphs~$G$ and~$H$ are said to be \emph{$\mathsf{L}$-equivalent} if they satisfy exactly the same sentences from~$\mathsf{L}$.
In~\cite{Dvorak2010} Dvořák shows that two graphs are homomorphism indistinguishable over the class of graphs of treewidth at most~$k$
if and only if they are equivalent in the \emph{$k$-variable fragment}~$\mathsf{C}^k$ of \emph{first-order counting logic}~$\mathsf{C}$. 
The techniques developed in~\cite{Dvorak2010} have been refined in~\cite{Fluck2024} to prove that homomorphism indistinguishability over the class~$\mathcal{T}^k_q$
of graphs admitting tree-decompositions of \emph{width~$k$ and depth~$q$} is the same as equivalence in~$\mathsf{C}^k_q$, the fragment of~$\mathsf{C}^k$ of quantifier-rank at most~$q$.
The general technique is to directly translate between formulas in~$\mathsf{C}^k_q$ and graphs from~$\mathcal{T}^k_q$ by induction on the structure of formulas and tree decompositions in both directions.  

Beyond the characterization of homomorphism indistinguishability relations, the following natural question is fundamental to understand these relations:  
\begin{enumerate}[resume, label={(\arabic*)}]
    \item When do different graph classes~$\mathcal{F}$ induce the same indistinguishability relation~$ \equiv_{\mathcal{F}}$? \label{q2}
\end{enumerate}
This second question was approached methodically by Roberson~\cite{Roberson2022} via introducing the notion of \emph{homomorphism distinguishing closedness}.
A graph class~$\mathcal{F}$ is called \emph{homomorphism distinguishing closed} (also \emph{h.d. closed}) if for every graph~$F \notin \mathcal{F}$ there exist graphs~$G, H$ with 
$G \equiv_{\mathcal{F}} H$ and~$\hom(F, G) \neq \hom(F, H)$, i.e., if no graphs can be added to~$\mathcal{F}$ without changing the relation~$\equiv_{\mathcal{F}}$.
The significance of this notion is that any two distinct homomorphism distinguishing closed classes must induce distinct homomorphism indistinguishability relations.
In turn, equivalence relations on graphs characterized by homomorphism indistinguishability can be separated by separating the underlying graph classes, given they are homomorphism distinguishing closed.
In general, it appears to be a hard task to establish that a given class is homomorphism distinguishing closed, leading to only a short list of known examples. 
These include the class of graphs of bounded degree~\cite{Roberson2022}, bounded tree-depth~\cite{Fluck2024}, bounded treewidth~\cite{Neuen2024}, bounded depth treewidth~\cite{Adler2024}, and 
essentially profinite classes~\cite{Seppelt2024}. Beyond the investigation of specific graph classes, in~\cite{Seppelt2024} the relation between closure properties of $\mathcal{F}$
and preservation properties of $\equiv_{\mathcal{F}}$ was studied systematically guided by the two aforementioned main questions.
In~\cite{Roberson2022} Roberson conjectures that every graph class which is closed under taking disjoint unions and minors is homomorphism distinguishing closed. 
For certain such graph classes~$\mathcal{F}$, homomorphism distinguishing closedness has been successfully proven when~$\equiv_{\mathcal{F}}$ is characterized by a \emph{model-comparison game} and membership in~$\mathcal{F}$ is determined by a \emph{pursuit-evasion game}. 
A key tool in such proofs is the \emph{CFI-construction}~\cite{Cai1992}, which has been instrumental in separating the homomorphism indistinguishability relations of~$\mathsf{C}^k$-equivalence and~$\mathsf{C}^{k+1}$-equivalence. 
For a graph~$G$, the construction yields two \emph{CFI-graphs}~$X(G)$ and~$\widetilde{X}(G)$ for which their distinguishability by~$\mathsf{C}^k$ depends on the structural complexity of~$G$.    
Crucial technical challenges arise, particularly in proving the connection between pursuit-evasion and model-comparison games for the CFI-construction (see~\cite{Neuen2024}) and establishing the \emph{monotonicity} of pursuit-evasion games (see~\cite{Adler2024}). 
The monotonicity of a game ensures that when searchers have a winning strategy, the reachable positions for the evading player only decreases as the game progresses.

For many logics of interest in finite model theory, like~$\mathsf{C}^k_q$, their equivalence can be characterized in terms of model-comparison games such as
\emph{Ehrenfeucht–Fraïssé} or \emph{pebble games} (see~\cite{Ebbinghaus1995}). 
This correspondence was utilized in~\cite{Abramsky2017} to give a novel approach to logical resources in terms of \emph{game comonads}. 
The central observation is that model-comparison games induce comonads on categories of relational structures.
In this framework, several essential constructions from finite model theory can be given a categorical account, see~\cite{Abramsky2024} for a survey. 
In particular, coalgebras for some game comonads encode combinatorial parameters of structures \cite{Abramsky2021} leading to a uniform approach to homomorphism indistinguishability developed in~\cite{Dawar2021}. 
There, the first characterization of~$\mathsf{C}^k_q$-equivalence by homomorphism counts was shown for graphs admitting \emph{pebble forest covers}.
The comonadic approach was recently used to show a categorical characterization of the graph parameter pathwidth and prove that 
homomorphism indistinguishability over graphs of pathwidth at most~$k$ is logical equivalence in the \emph{restricted conjunction logic}~${\land}\mathsf{C}^k$~\cite{Montacute2024}
by building on previous work of Dalmau~\cite{Dalmau2005}.

The concept of \emph{requantification}, recently introduced in~\cite{Rassmann2025}, allows for a more refined view on stratification by logical resources in finite variable counting logics.
The logic~$\mathsf{C}^{(k_1,k_2)}$ is defined as the fragment of~$\mathsf{C}$ using at most~$k_1+k_2$ distinct variables of which only~$k_1$ may be requantified, i.e. quantified within scopes of their own quantification.
To analyze the expressive power of~$\mathsf{C}^{(k_1,k_2)}$, the \emph{bijective~$(k_1,k_2)$-pebble game}~$\BP^{(k_1,k_2)}$ and the \emph{$q$-round~$(k_1,k_2)$-cops-and-robber game}~$\CR^{(k_1,k_2)}_q$ were introduced, incorporating reusability into model-comparison and pursuit-evasion games, respectively.
In the context of Question~\ref{q1}, it is only natural to ask whether logics with restricted conjunction and requantification admit homomorphism indistinguishability characterizations. 
Subsequently, Question~\ref{q2} asks whether the corresponding graph classes are homomorphism distinguishing closed. 

\subsection*{Contribution}
In this work, we extend the study of homomorphism indistinguishability to graph classes with restricted reusability and thereby provide characterizations by counting logics with restricted requantification.  
A central contribution of our work is to show that decomposition-based techniques provide flexible tools for characterizing homomorphism indistinguishability relations, giving novel answers to Question~\ref{q1}.
Furthermore, we use these techniques for establishing homomorphism distinguishing closedness, giving new answers to Question~\ref{q2}. 
By embedding these results into the broader framework of game comonads, we provide a unified categorical perspective on requantification in finite variable logics. 
In the following, we give a more detailed description of our contribution in terms of techniques and results: 

\paragraph*{Graph decompositions.}
We answer an open question from~\cite{Rassmann2025} by characterizing the class of graphs~$\mathcal{T}^{(k_1,k_2)}_q$ where the cops have a winning strategy for~$\CR^{(k_1,k_2)}_q$ by various graph decompositions, which adapt the concept of reusability (\Cref{thm:tree:characterization}).
Furthermore, we introduce the node-searching game~$\NS^{(k_1,k_2)}$ where only~$k_1$ of the~$k_1+k_2$ searchers may be reused and characterize the class of searcher-win graphs~$\mathcal{P}^{(k_1,k_2)}$ by novel path decompositions (\Cref{thm:path:characterization}).  
We demonstrate a new effect that differentiates pathwidth from treewidth in the context of reusability.
Namely, for bounded pathwidth, non-reusable resources can be employed uniformly for the full decomposition while for bounded treewidth their usage highly depends on intermediate parts of the decomposition (\Cref{prop:search:non-reusable:first}).   
Moreover, we prove that both games~$\CR^{(k_1,k_2)}_q$ and~$\NS^{(k_1,k_2)}$ are monotone by showing that reusability is compatible with monotonicity of the non-restricted games (\Cref{prop:non-reusable:monotone}).

\paragraph*{Characterizations by logical equivalence.}
The newly introduced decompositions for the graph classes~$\mathcal{P}^{(k_1,k_2)}$ and~$\mathcal{T}^{(k_1,k_2)}_q$ form the basis of our homomorphism indistinguishability results,
providing more fine-grained answers to Question~\ref{q1}.  
By imposing constraints on requantification in the restricted conjunction logic~${\land}\mathsf{C}^{k}$ we obtain the new fragment~${\land}\mathsf{C}^{(k_1,k_2)}$
and show that~${\land}\mathsf{C}^{(k_1,k_2)}$-equivalence is exactly the same as homomorphism indistinguishability over~$\mathcal{P}^{(k_1,k_2)}$ (\Cref{thm:path:hom:ind}). 
This extends the Lovász-type theorem for~$\mathcal{P}^{(k,0)}$ from~\cite{Montacute2024} to the setting of restricted reusability. 
Interestingly, this also reproves the previous result in a purely combinatorial manner by adapting the constructive techniques from~\cite{Dvorak2010,Fluck2024}. 
We further underline the versatility of this strategy by proving that~$\mathsf{C}^{(k_1,k_2)}_q$-equivalence is homomorphism indistinguishability over~$\mathcal{T}^{(k_1,k_2)}_q$ (\Cref{thm:tree:hom:ind}). 
Also here we show how the interplay of requantification and restricted conjunction differentiates the two logics: We prove a normal form result for ${\land}\mathsf{C}_{\infty \omega}^{(k_1,k_2)}$
with respect to requantification (\Cref{prop:primitive}) which in stark contrast was ruled out for the logic $\mathsf{C}^{(k_1,k_2)}$ in \cite{Rassmann2025}.   

\paragraph*{Homomorphism distinguishing closedness.}
We utilize the established framework for counting homomorphisms from the class~$\mathcal{P}^{(k_1,k_2)}$ to prove our main technical result: 
For every graph~$G \notin \mathcal{P}^{(k_1,k_2)}$ the CFI-graphs~$X(G), \widetilde{X}(G)$ are~${\land}\mathsf{C}^{(k_1,k_2)}$-equivalent (\Cref{thm:logic:abp,lem:cfi:path:game}). 
Using a similar argument for the class~$\mathcal{T}^{(k_1,k_2)}_q$, we obtain that the classes~$\mathcal{P}^{(k_1,0)}$, the closure of $\mathcal{P}^{(k_1,k_2)}$ under disjoint unions, and~$\mathcal{T}^{(k_1,k_2)}_q$ are homomorphism distinguishing closed (\Cref{thm:pathwidth:hdc,thm:treewidth:hdc}).
This gives new answers to Question~\ref{q2} and further exemplifies the technique of using games to establish homomorphism distinguishing closedness.  
In the light of Roberson's conjecture, \Cref{thm:pathwidth:hdc} is particularly interesting as the class of graphs of pathwidth at most $k$ must exclude a fixed forest as a minor~\cite{Robertson1983}. 
Next, we employ an argument from~\cite{Neuen2024} to give an exact characterization which subgraph counts are recognized by the logics~${\land}\mathsf{C}^{(k_1,k_2)}$ and~$\mathsf{C}^{(k_1,k_2)}_q$.
For the logic~$\mathsf{C}^{(k_1,k_2)}_q$, this characterizes the ability of a reusability-restricted Weisfeiler-Leman variant to detect subgraph counts (\Cref{rem:invariance}).

\emph{Note:} Recently, \Cref{lem:cfi:path:game} and a part of its consequence \Cref{thm:pathwidth:hdc} were independently obtained for the case without constraints on reusability in the PhD thesis \cite{Seppelt2024/2025}.

\paragraph*{A comonadic perspective.}
Finally, we give a comonadic account of requantification as a logical resource. 
The \emph{pebbling comonad}~\cite{Abramsky2017} and \emph{pebble-relation comonad}~\cite{Montacute2024} were constructed from organizing the respective pebble games as endofunctors on categories of relational structures.  
We use reusability-restricted variants of these pebble games from~\cite{Rassmann2025} and this work to obtain similar constructions, namely the comonads~$\mathbb{P}^{(k_1,k_2)}$ and~$\mathbb{PR}^{(k_1,k_2)}$.
By proving close correspondences between coalgebras of these comonads and our newly defined path and tree decompositions, we obtain categorical characterizations of reusability in graph decompositions (\Cref{thm:coalgebras}).  
Finally, we show that coKleisli isomorphisms correspond to Duplicator winning strategies in the corresponding pebble games and hence characterize 
equivalence for the logics~${\land}\mathsf{C}^{(k_1,k_2)}$ and~$\mathsf{C}^{(k_1,k_2)}_q$ (\Cref{thm:isomorphism}). We also devise 
restricted pebble games to capture coKleisli morphisms and thereby characterize preservation in counting-free logics with restricted requantification (\Cref{thm:morphism}). 
  \section{Preliminaries}
We write $\mathbb{N} = \{0,1,2,\dots\}$ for the set of natural numbers, $\mathbb{N}_+ = \mathbb{N} \setminus \{0\}$ for the set of positive integers, 
and for $n\in\mathbb{N}_+$ we define $[n]\coloneqq \{1,\dots,n\}$.
Unless stated explicitly otherwise, we let $k_1,k_2 \in \mathbb{N}$ throughout the paper. 
We fix the \emph{variable sets}  (also called \emph{pebble sets})~$[x_{k_1}] \coloneqq \{x_1,\dotsc,x_{k_1}\}$, $[y_{k_2}] \coloneqq \{y_1,\dotsc,y_{k_2}\}$, and~$[x_{k_1}, y_{k_2}] \coloneqq \{x_1,\dotsc,x_{k_1},y_1,\dotsc,y_{k_2}\}$.
For the following definitions we let $V$ be a set. We write $2^V$ for the \emph{power set} of $V$ and set $\binom{V}{2} = \{U \in 2^V : |U|=2\}$.
A \emph{partial function} $\alpha \colon [x_{k_1}, y_{k_2}] \rightharpoonup V$ assigns to every variable $z \in [x_{k_1}, y_{k_2}]$ at most one element $\alpha(z) \in V$.
If $\alpha$ does not assign an element to $z$, we write $\alpha(z) = \bot$. Also, we write $\im(\alpha)$ and $\dom(\alpha)$ for the \emph{image} and \emph{domain} of $\alpha$ respectively.
We write $V^+$, $V^n$ and $V^{\leq n}$ for the sets of non-empty finite sequences, sequences of length $n$, and sequences of length at most $n$ over $V$ respectively. 
We denote sequences of elements $s_1,\dotsc,s_n \in V$ by $\overline{s} = [s_1,\dotsc,s_n] \in V^n$ and for $\overline{s}, \overline{t} \in V^+$ we write $\overline{s} \sqsubseteq \overline{t}$ if $\overline{s}$ is a \emph{prefix} of $\overline{t}$.
The \emph{concatenation} of $\overline{s}$ and $\overline{t}$ is denoted by $\overline{s}\overline{t}$. For $i,j \in [n]$ with $i \leq j$ we define $\overline{s}[i,j] \coloneqq [s_i,s_{i+1},\dots,s_{j}]$ and $\overline{s}[i] \coloneqq s[i,i]$. 
Also, we indicate that $s_i$ occurs in $\overline{s}$ by writing $s_i \in \overline{s}$. 
For a variable $z \in [x_{k_1}, y_{k_2}]$ and $v \in V$ we write $\alpha[z/v]$ for the partial function that is obtained from $\alpha$ by replacing the image $\alpha(z)$ by $v$.
Given a sequence $\overline{s} = [(z_1,v_1), \dotsc, (z_n,v_n)] \in ([x_{k_1}, y_{k_2}] \times V)^n$ and $z \in [x_{k_1}, y_{k_2}]$ we denote by $\last_z(\overline{s})$ the last $v_i$ such that $(z,v_i) \in \overline{s}$.
We call the first entry of each element in $\overline{s}$ a \emph{pebble index} or \emph{variable index}. 
For a proposition $P$, we use the \emph{Iverson bracket} $[P] \in \{0,1\}$ to indicate whether $P$ is satisfied.

\paragraph*{Finite model theory.}
We fix a finite signature $\sigma$ of relation symbols and associate to each $R \in \sigma$ an \emph{arity} $\operatorname{ar}(R) \in \mathbb{N}_+$.
A \emph{$\sigma$-structure} $\mathcal{A}$ consists of a \emph{universe} of elements $V(\mathcal{A})$ and interpretations $R^{\mathcal{A}} \subseteq V(\mathcal{A})^{\operatorname{ar}(R)}$ for each $R \in \sigma$.
The \emph{Gaifman graph} of a $\sigma$-structure $\mathcal{A}$ is defined as 
$$G(\mathcal{A}) \coloneqq (V(\mathcal{A}), \{\{a,b\} : a,b \in V(\mathcal{A}), a \neq b, \text{and there exist } R \in \sigma, \overline{c} \in R^{\mathcal{A}} \text{ with } a,b \in \overline{c}\})$$
For $\sigma$-structures $\mathcal{A}$ and $\mathcal{B}$ we say that $\mathcal{B}$ is a \emph{substructure} of $\mathcal{A}$ if $V(\mathcal{B}) \subseteq V(\mathcal{A})$ and $R^{\mathcal{B}} \subseteq R^{\mathcal{A}}$ for each $R \in \sigma$.
Every set of elements $A \subseteq V(\mathcal{A})$ induces a substructure of $\mathcal{A}$ with universe $A$ and relations $R^\mathcal{A} \cap A^{\operatorname{ar}(R)}$ for $R \in \sigma$.
A \emph{homomorphism} between $\sigma$-structures $\mathcal{A}$ and $\mathcal{B}$ is a function $h \colon V(\mathcal{A}) \to V(\mathcal{B})$ such that for all $R \in \sigma$
we have that $(v_1,\dotsc,v_{\operatorname{ar}(R)}) \in R^{\mathcal{A}}$ implies $(h(v_1),\dotsc,h(v_{\operatorname{ar}(R)}))\in R^{\mathcal{B}}$.
The function $h$ is called an \emph{isomorphism} if it is a bijective homomorphism and $h^{-1}$ is a homomorphism.
Let $\mathsf{L}$ be a logic over the signature $\sigma$ with variable set $\mathcal{V}$ (for a formal definition see \cite{Ebbinghaus2017}).
For a formula $\varphi \in \mathsf{L}$ we write $\varphi(v_1,\dots,v_n)$ to indicate that the set of \emph{free variables} of $\varphi$, which we denote by $\free(\varphi)$, is a subset of $\{v_1,\dots,v_n\}$.
Given a $\sigma$-structure $\mathcal{A}$ and an assignment $\alpha \colon \mathcal{V} \rightharpoonup V(\mathcal{A})$ we write $\mathcal{A}, \alpha \models \varphi$
to indicate that $\mathcal{A}$ satisfies $\varphi$ with $\free(\varphi)$ interpreted according to $\alpha$.
For a tuple $\overline{a}\in V(\mathcal{A})^n$ we may write $\mathcal{A}, \overline{a} \models \varphi$ by assigning $v_i \mapsto a_i$.
For $\sigma$-structures $\mathcal{A}, \mathcal{B}$ we write $A \equiv_{\mathsf{L}} \mathcal{B}$ if  $\mathcal{A}$ and $\mathcal{B}$ satisfy exactly the same sentences from $\mathsf{L}$.
We write $A \Rrightarrow_{\mathsf{L}} \mathcal{B}$ if every sentence from $\mathsf{L}$ satisfied by $\mathcal{A}$ is also satisfied by $\mathcal{B}$.
\emph{First-order counting logic} $\mathsf{C}$ extends first-order logic $\mathsf{FO}$ by counting quantifiers $\exists^{\geq n} \varphi$ for $\varphi \in \mathsf{C}$.
We say that a variable $x_i$ is \emph{requantified} in a logical formula if it either occurs free and bound or if it is quantified within the scope of a quantification over $x_i$.
The logic $\mathsf{C}^{(k_1,k_2)}$ is obtained from $\mathsf{C}$ by fixing the variable set $[x_{k_1}, y_{k_2}]$ and requiring
that only variables from $[x_{k_1}]$ are requantified. Finally, $\mathsf{C}^{(k_1,k_2)}_q$ is the fragment of $\mathsf{C}^{(k_1,k_2)}$ with quantifier-rank at most $q$ (see \cite{Rassmann2025} for details).

\paragraph*{Graphs.}
A finite \emph{graph} is a pair \(G=(V(G),E(G))\) consisting of a finite set \(V(G)\) of \emph{vertices} and a set \(E(G)\subseteq\binom{V(G)}{2}\) of \emph{edges}.
For an edge $\{u,v\} \in E(G)$ we also write $uv \in E(G)$. Given a set $W \subseteq V(G)$ we define the \emph{induced subgraph} $G[W] = (W, \{uv \in E(G): u,v \in W\})$.
For a graph $G$ and $v\in V(G)$ we write $E(v)$ for the set of edges incident to $v$.
We denote the set of vertex sets of \emph{connected components} (i.e. maximal connected subgraphs) of $G$ by $\mathcal{C}_G$.
A \emph{rooted tree} is a pair $(T,r)$ such that $T$ is a tree and $r \in V(T)$ is a designated vertex, called the \emph{root} of $T$.
With a rooted tree we associate a partial order $\preceq_T$ on the vertices of $T$ by setting $s \preceq_T t$ exactly if $s$ is on the unique path from $r$ to $t$.
The \emph{height} of a rooted tree is the maximal number of vertices on a path from the root to a leaf. 
We denote the set of leaves of a tree $T$ by $\leaf(T)$. 
A \emph{rooted forest} is a pair $(F, \overline{r})$ such that if $F_1,\dotsc,F_p$ are the connected components of $F$ we have $\overline{r} = (r_1,\dotsc,r_p)$ and $(F_i, r_i)$ is a rooted tree for each $i \in [p]$.
A \emph{labeled graph} is a graph $G$ together with a finite set of \emph{labels} $L$ and a partial labeling function $\nu_G \colon L \rightharpoonup V(G)$.
We write $L_G \coloneqq \dom(\nu)$ for the set of labels occurring in $G$.
A labeled graph is called \emph{fully labeled} if the labeling function is surjective.
We denote the class of all $[x_{k_1}, y_{k_2}]$-labeled graphs by $\mathcal{G}_{[x_{k_1}, y_{k_2}]}$.
For a label $z \in [x_{k_1}, y_{k_2}]$ and $v \in V(G) \cup \{\bot\}$ we define the labeled graph $G[z \to v]$ as $G$ with labeling function $\nu_{G[z \to v]} \coloneqq \nu_G[z / v]$.
For a tuple of labels $\overline{z} \in [x_{k_1}, y_{k_2}]^{n}$ and $\overline{v} \in (V(G) \cup \{\bot\})^n$ we inductively define
$G_1 \coloneqq G[z_1 \to v_1]$ and $G_{i+1} \coloneqq G_{i}[z_{i+1} \to v_{i+1}]$ for $i \in [n-1]$ and set $G[\overline{z} \to \overline{v}] \coloneqq G_n$.
The product of two labeled graphs $G_1, G_2 \in \mathcal{G}_{[x_{k_1}, y_{k_2}]}$ is the graph $G_1G_2$
obtained by taking the disjoint union of $G_1$ and $G_2$, identifying vertices with the same label,
and suppressing any loops or parallel edges that might be created.
Note that for a graph $G \in \mathcal{G}_{[x_{k_1}, y_{k_2}]}$ the labeling $\nu_G$ is a partial variable assignment and hence we may write $G \models \varphi$ for $G, \nu_G \models \varphi$
if $\free(\varphi) \subseteq L_G$.

Two central concepts in this work are \emph{tree decompositions} and \emph{path decompositions}, which we briefly introduce next. For a more detailed exposition, we refer the reader to \cite{Bodlaender1998}.
\begin{definition}
  Let $G$ be a graph.
  A \emph{tree decomposition} of $G$ is a tuple $(T,\beta)$ such that $T$ is a tree and $\beta \colon V(T) \to 2^{V(G)}$ is a function such that
  \begin{itemize}
    \item $\bigcup_{t \in V(T)} \beta(t) = V(G)$,
    \item for all $uv \in E(G)$ there exists a $t \in V(T)$ with $u,v \in \beta(t)$, and
    \item for all $v \in V(G)$ the set of vertices $\beta^{-1}(\{v\}) = \{t \in V(T): v \in \beta(t)\}$ is connected in $T$.
  \end{itemize}
  The \emph{width} of $(T, \beta)$ is $\max_{t \in V(T)} |\beta(t)|-1$ and the \emph{treewidth} of $G$ is the minimal width of a tree decomposition of $G$.
  If $T$ is a rooted tree with root $r \in V(T)$ we call $(T,r,\beta)$ a \emph{rooted tree decomposition}.
  A \emph{path decomposition} of $G$ is a tree decomposition $(P, \beta)$ such that the underlying tree $P$ is a path.
  If the decomposition is rooted, then we define the root to be an endpoint of $P$. 
  The \emph{width} of $(P,\beta)$ is again $\max_{p \in V(P)} |\beta(p)|-1$ and the \emph{pathwidth} of $G$ is the minimal width of a path decomposition of $G$.
\end{definition}

\paragraph*{The CFI-construction.}
We use the \emph{CFI-construction} introduced in \cite{Cai1992} in the variant presented in \cite{Roberson2022}.
Let $G$ be a connected graph and $U \subseteq V(G)$. For each $v \in V(G)$ we set $\delta_{v, U} \coloneqq |\{v\} \cap U|$.
The \emph{CFI-graph} $X_U(G)$ is defined by
\begin{align*}
  V(X_U(G)) & \coloneqq \{(v,S) : v \in V(G), S \subseteq E(v), |S| \equiv \delta_{v, U} \mod 2\}, \\
  E(X_U(G)) & \coloneqq \{(v,S)(u,T) : uv \in E(G), uv \notin S \symdif T\}.
\end{align*}
The connected graph $G$ is called the \emph{base graph} of $X_U(G)$. We also define  $\rho \colon V(X_U(G)) \to G, (u, S) \mapsto u$.
For $v \in V(G)$ we also denote $F_U(v) \coloneqq \{(v,S) : S \subseteq E(v),  |S| \equiv \delta_{v, U} \mod 2\}$ for the vertices in $X_U(G)$ associated with $v$.
These vertices are also referred to as \emph{gadget vertices} of $v$ and $F_U(v)$ as the corresponding \emph{gadget}.

\begin{lemma}[{\cite[Lemma 6.2]{Cai1992}}, {\cite[Lemma 3.2]{Roberson2022}}] \label{lem:cfi:isomorphic}
  For all sets of base vertices $S, T \subseteq V(G)$, the graphs $X_T(G)$ and $X_S(G)$ are isomorphic if and only if $|S| \equiv |T| \mod 2$.
\end{lemma}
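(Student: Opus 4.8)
The plan is to prove the two directions separately, using throughout that $\delta_{v,U} = [v \in U]$, so the construction only depends on $U$ through the parities $\delta_{v,U}$ at the vertices. For the implication that $|S| \equiv |T| \pmod 2$ forces $X_S(G) \cong X_T(G)$, I would assemble the isomorphism from local \emph{edge toggles}. For a fixed edge $e = uv \in E(G)$, consider the map $\phi_e$ sending $(w,A) \mapsto (w, A \symdif \{e\})$ whenever $w \in \{u,v\}$ and fixing every other gadget vertex. Since $e \in E(u) \cap E(v)$, applying $\phi_e$ changes the parity of the edge-set precisely at the gadgets of $u$ and $v$, matching the change of $\delta_{\cdot,U}$ under $U \mapsto U \symdif \{u,v\}$; hence $\phi_e$ is a bijection from $V(X_U(G))$ onto $V(X_{U \symdif \{u,v\}}(G))$. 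I would then verify adjacency-preservation by a short case distinction on how a base edge $w_1w_2$ of an edge $(w_1,A_1)(w_2,A_2)$ meets $\{u,v\}$: if $w_1w_2 = e$ both endpoints toggle and $A_1 \symdif A_2$ is unchanged; if $w_1w_2$ shares exactly one endpoint with $e$, then $A_1 \symdif A_2$ changes only in the coordinate $e \neq w_1w_2$; and if $w_1w_2$ is disjoint from $\{u,v\}$ nothing changes. In every case the defining condition $w_1w_2 \notin A_1 \symdif A_2$ is preserved, so $\phi_e \colon X_U(G) \cong X_{U \symdif \{u,v\}}(G)$.

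It then remains to realize the passage from $S$ to $T$ as a composition of such toggles. As $|S| \equiv |T| \pmod 2$, the set $D = S \symdif T$ has even cardinality; I would pair its elements arbitrarily and, using connectivity of $G$, join each pair $\{a,b\}$ by a path. Toggling $\phi_e$ along all edges $e$ of such a path flips the parity at each internal vertex twice and at the endpoints once, so its net effect is exactly to toggle $\{a,b\}$. Composing these toggles over all pairs produces an isomorphism $X_S(G) \cong X_{S \symdif D}(G) = X_T(G)$.

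The converse, that $X_S(G) \cong X_T(G)$ forces $|S| \equiv |T| \pmod 2$, is where I expect the real work. By the direction just proved, every $X_U(G)$ with $|U|$ even is isomorphic to $X_\emptyset(G)$ and every one with $|U|$ odd to $X_{\{w\}}(G)$ for a fixed $w$, so there are at most two isomorphism types and it suffices to show $X_\emptyset(G) \not\cong X_{\{w\}}(G)$. The guiding invariant is a global parity obstruction: a choice of one vertex $(v,A_v)$ in each gadget $F_U(v)$ whose induced subgraph is a copy of $G$ under $\rho$ (a \emph{transversal}) corresponds exactly to an edge-labelling $b \colon E(G) \to \{0,1\}$ with $b_e = [e \in A_v]$ for $e \in E(v)$ that is consistent across edges and satisfies $\sum_{e \in E(v)} b_e \equiv \delta_{v,U} \pmod 2$ at every $v$. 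Summing this identity over all $v$ counts each edge twice, yielding $0 \equiv \sum_v \delta_{v,U} \equiv |U| \pmod 2$; hence such a transversal exists precisely when $|U|$ is even. In $X_\emptyset(G)$ the all-empty transversal $\{(v,\emptyset) : v \in V(G)\}$ is such a copy of $G$, while $X_{\{w\}}(G)$ admits none.

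The main obstacle is promoting this existence statement to a genuine isomorphism invariant, since ``being a transversal'' refers to the gadget partition rather than to the abstract graph. Concretely, I would argue that any isomorphism $\psi \colon X_\emptyset(G) \to X_{\{w\}}(G)$ must respect gadgets, mapping each $F_\emptyset(v)$ onto a gadget $F_{\{w\}}(v')$ and thereby inducing an automorphism $v \mapsto v'$ of $G$; this rigidity is the delicate point, and is where connectivity of $G$ together with care for low-degree vertices enters. Granting it, $\psi$ would carry the all-empty copy of $G$ to a transversal copy of $G$ inside $X_{\{w\}}(G)$, contradicting the parity computation, so the two isomorphism types are distinct. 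I expect the gadget-rigidity of $\psi$ to be the crux of the whole argument, since no elementary counting invariant — vertex count, edge count, or degree sequence — separates the two graphs in general, and this robustness against local invariants is precisely what makes the CFI construction useful.
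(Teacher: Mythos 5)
The paper does not prove this lemma at all --- it is imported from \cite[Lemma 6.2]{Cai1992} and \cite[Lemma 3.2]{Roberson2022} --- so there is no in-paper argument to compare against; I can only assess your proposal on its own terms. Your forward direction is correct and complete: the edge toggles $\phi_e$ are well defined between $X_U(G)$ and $X_{U\symdif\{u,v\}}(G)$, your three-case adjacency check is right (and, being an ``iff'' in each case, also handles non-edges), and composing toggles along paths joining an arbitrary pairing of $S\symdif T$ gives the isomorphism. This is the standard argument and is essentially the same mechanism as the path isomorphism the paper quotes as \Cref{lem:cfi:twist:iso}.

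The converse is where the proposal does not close, and you have put your finger on the gap yourself (``Granting it\dots''). The transversal-parity obstruction is the right invariant of the \emph{pair} $(X_U(G),\rho)$, but to turn it into an invariant of the abstract graph you need every isomorphism $\psi\colon X_\emptyset(G)\to X_{\{w\}}(G)$ to map gadgets to gadgets, and this gadget-rigidity is not a deferrable technicality --- it is the entire content of the hard direction, and the route through it is a dead end for general connected $G$. The gadget partition is simply not recoverable from the abstract graph: for $G=C_n$ one has $X_\emptyset(C_n)\cong 2C_n$, whose automorphism group rotates one component independently of the other and therefore does not preserve the gadget partition even setwise. Consequently, if any isomorphism $X_\emptyset(G)\to X_{\{w\}}(G)$ respecting gadgets existed, composing it with such an automorphism would produce one that does not; so the universally quantified rigidity claim can only hold vacuously, i.e., it is logically equivalent to the non-isomorphism you are trying to establish and cannot serve as a stepping stone towards it. Any actual proof must either work with the coloured or pointed CFI construction, where gadget-preservation comes for free, or replace the transversal count by an invariant that makes no reference to $\rho$, which is what the cited sources do. As written, the proposal proves the easy direction and restates, rather than proves, the hard one.
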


Thus, we set $X(G) \coloneqq X_\emptyset(G)$ and $\widetilde{X}(G) \coloneqq X_{\{v\}}(G)$ for some $v \in V(G)$ as the isomorphism type only depends on the parity of $|U|$.
Note that the vertex sets of the graphs $X(G)$ and $\widetilde{X}_u(G)$ only differ in $F_{\emptyset}(u)$ and $F_u(u)$.

\begin{lemma}[{\cite[Lemma 11]{Neuen2024}}] \label{lem:cfi:twist:iso}
  Let $G$ be a connected graph, $u,v \in V(G)$, and $P$ be a path from $u$ to $v$ in $G$.
  Then there exists an isomorphism $\varphi_{u,v} \colon X_{\{u\}}(G) \to X_{\{v\}}(G)$ such that for all $(w, S) \in V(X_{\{u\}}(G))$ it holds that
  \begin{enumerate}
    \item $\rho(\varphi_{u,v}(w,S)) = w$, and
    \item if $w \in V(G) \setminus P$ then $\varphi_{u,v}(w,S) = (w, S)$.
  \end{enumerate}
\end{lemma}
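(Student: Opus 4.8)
The plan is to construct $\varphi_{u,v}$ explicitly as a \emph{twist} along the path $P$, realized by symmetric-differencing each local edge-set with the path edges incident to its base vertex. Write $E(P)$ for the edge set of $P$ and, for each $w \in V(G)$, set $P_w \coloneqq E(w) \cap E(P)$, the set of path edges incident to $w$. I define
$$\varphi_{u,v}(w,S) \coloneqq (w,\, S \symdif P_w).$$
Properties~1 and~2 are then immediate: the base vertex is preserved by construction, so $\rho(\varphi_{u,v}(w,S)) = w$; and if $w \in V(G) \setminus P$ then no path edge is incident to $w$, so $P_w = \emptyset$ and $\varphi_{u,v}(w,S) = (w,S)$.

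Next I would verify that $\varphi_{u,v}$ actually lands in $V(X_{\{v\}}(G))$, which amounts to the parity bookkeeping $|S \symdif P_w| \equiv \delta_{w,\{v\}} \mod 2$. Since $|S \symdif P_w| \equiv |S| + |P_w| \mod 2$ and $|S| \equiv \delta_{w,\{u\}} \mod 2$, this reduces to counting how many path edges meet $w$. Assuming $P$ is simple, an internal vertex of $P$ has $|P_w| = 2$ (parity unchanged, and $\delta_{w,\{u\}} = \delta_{w,\{v\}} = 0$), a vertex off $P$ has $|P_w| = 0$ (again both deltas $0$), while the two endpoints $u$ and $v$ each have exactly one incident path edge, so the parity flips at precisely these two vertices. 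This is exactly what transports the odd gadget from $u$ to $v$. (The degenerate case $u = v$ gives the empty path and the identity map.)

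The key step is edge preservation. Given an edge $(w,S)(w',S')$ of $X_{\{u\}}(G)$, so $ww' \in E(G)$ and $ww' \notin S \symdif S'$, I would use commutativity and associativity of $\symdif$ to rewrite
$$(S \symdif P_w) \symdif (S' \symdif P_{w'}) = (S \symdif S') \symdif (P_w \symdif P_{w'}).$$
Now for the specific edge $ww'$, membership in $P_w$ and in $P_{w'}$ both coincide with membership in $E(P)$, since $ww'$ is incident to both of its endpoints; hence $ww' \notin P_w \symdif P_{w'}$. Combined with $ww' \notin S \symdif S'$ this yields $ww' \notin (S \symdif P_w) \symdif (S' \symdif P_{w'})$, i.e.\ $\varphi_{u,v}(w,S)\,\varphi_{u,v}(w',S')$ is an edge of $X_{\{v\}}(G)$.

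Finally, the analogous twist from $X_{\{v\}}(G)$ back to $X_{\{u\}}(G)$ is a two-sided inverse, since $S \symdif P_w \symdif P_w = S$, and it is a homomorphism by the same argument with the roles of $u$ and $v$ exchanged; hence $\varphi_{u,v}$ is an isomorphism. I expect the parity case analysis at the endpoints versus the interior of $P$ to be the only genuinely delicate point, as it is precisely what pins down why the twist lands in the correct target structure $X_{\{v\}}(G)$ rather than back in $X_{\{u\}}(G)$.
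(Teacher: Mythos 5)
Your proof is correct and is exactly the standard ``twist along the path'' argument for this lemma: the paper itself states it without proof, citing \cite{Neuen2024}, and your map $(w,S)\mapsto(w,S\symdif(E(w)\cap E(P)))$ with the parity check at the endpoints versus interior of $P$ is the argument given there. Nothing to add.
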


\paragraph*{Homomorphism indistinguishability.}
We denote the number of homomorphisms from a possibly labeled graph $F$ to a graph $G$ by $\hom(F,G)$.
For a class of labeled graphs $\mathcal{F}$ we write $\mathbb{R}\mathcal{F}$ for the class of all formal finite linear combinations with real coefficients of graphs in $\mathcal{F}$.
For a linear combination $\mathfrak{F} = \sum_i c_i F_i \in \mathbb{R}\mathcal{F}$ and a labeled graph $G$ we define $\hom(\mathfrak{F}, G) \coloneqq \sum_i c_i \hom(F_i, G)$ and $L_{\mathfrak{F}}$ to be the set of labels occurring in $\mathfrak{F}$.
We collect some basic facts about $\hom(\cdot, G)$:
\begin{lemma}\label{lem:hom:counts}
  Let $F,G \in \mathcal{G}_{[x_{k_1},y_{k_2}]}$ with $L_F \subseteq L_G$. Then the following assertions hold:
  \begin{enumerate}
    \item If $F$ is fully labeled, then $\hom(F,G)=1$ if for all $z_1,z_2 \in L_F$ we have
          \begin{itemize}
            \item $\nu_F(z_1) = \nu_F(z_2) \Longrightarrow \nu_G(z_1) = \nu_G(z_2)$, and
            \item $\nu_F(z_1)\nu_F(z_2) \in E(F) \Longrightarrow \nu_G(z_1)\nu_G(z_2) \in E(G)$.
          \end{itemize}
          Otherwise, it holds that $\hom(F,G)=0$.
    \item If $F = F_1F_2$ for $F_1,F_2 \in \mathcal{G}_{[x_{k_1}, y_{k_2}]}$, then $\hom(F,G) = \hom(F_1,G)\hom(F_2,G)$
    \item If $z \in L_{F}$ and $F' = F[z \to \bot]$, then $\hom(F',G) = \sum_{v \in V(G)} \hom(F, G[z \to v])$
  \end{enumerate}
\end{lemma}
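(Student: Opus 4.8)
**The plan is to prove each of the three assertions in \Cref{lem:hom:counts} separately, as they correspond to three basic operations on labeled graphs: being fully labeled, products, and delabeling (forgetting a label).**

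For assertion (1), the key observation is that a fully labeled graph $F$ has its labeling function $\nu_F$ surjective, so every vertex of $F$ carries at least one label from $L_F$. Since $L_F \subseteq L_G$ and $\nu_G$ is defined on all of $L_F$, I would define a candidate map $h \colon V(F) \to V(G)$ by sending a vertex $\nu_F(z)$ to $\nu_G(z)$. The first bullet condition ($\nu_F(z_1) = \nu_F(z_2) \Rightarrow \nu_G(z_1) = \nu_G(z_2)$) is exactly what makes this assignment well-defined as a function on $V(F)$, since a single vertex may carry several labels. Surjectivity of $\nu_F$ guarantees $h$ is defined everywhere. The second bullet condition then guarantees $h$ is a homomorphism: any edge of $F$ is $\nu_F(z_1)\nu_F(z_2)$ for some labels, and the condition forces $\nu_G(z_1)\nu_G(z_2) \in E(G)$. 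Moreover this $h$ is the \emph{unique} homomorphism respecting the labeling, because its value on each vertex is forced by the label. Hence $\hom(F,G) = 1$ when both conditions hold. Conversely, if either condition fails, then no label-respecting map can be a well-defined homomorphism, so $\hom(F,G) = 0$.

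For assertion (2), I would use the standard fact that homomorphisms out of a product (quotient) of labeled graphs decompose. A homomorphism from $F = F_1 F_2$ to $G$ restricts to homomorphisms $F_1 \to G$ and $F_2 \to G$ that agree on the identified (same-label) vertices; conversely any such compatible pair glues to a homomorphism from $F_1 F_2$. Because $G$ is itself labeled and homomorphisms must respect the labeling, the values on the glued vertices are pinned down on both sides by $\nu_G$, so the compatibility is automatic. This yields a bijection between $\hom(F,G)$ and the product set $\hom(F_1,G) \times \hom(F_2,G)$, giving the product formula. The only care needed is that suppressing parallel edges and loops in forming $F_1 F_2$ does not change the homomorphism count into a simple graph $G$, which is immediate.

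For assertion (3), the idea is that delabeling a vertex amounts to summing over all possible images of that vertex. Writing $F' = F[z \to \bot]$, a homomorphism $h \colon F' \to G$ must send the vertex previously labeled $z$ somewhere in $V(G)$; fixing that image to be $v$ corresponds exactly to counting homomorphisms into the relabeled target $G[z \to v]$. Partitioning the homomorphisms from $F'$ according to the image $v$ of this vertex gives $\hom(F',G) = \sum_{v \in V(G)} \hom(F, G[z \to v])$. The \textbf{main obstacle}—more a bookkeeping subtlety than a genuine difficulty—is handling the case where the vertex carrying label $z$ in $F$ also carries other labels: then delabeling only $z$ leaves the vertex still labeled, so one must check that summing over images of $G$'s label $z$ correctly accounts for the constraint. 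I expect this to reduce to careful tracking of the labeling functions, and the formula follows once the vertex-image correspondence is set up cleanly.
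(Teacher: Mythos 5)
The paper does not actually prove \Cref{lem:hom:counts}; it is stated as a collection of basic facts about homomorphism counts of labeled graphs, so there is no in-paper argument to compare against. Your three arguments are the standard ones and are essentially correct: in (1) the well-definedness/edge-preservation dichotomy plus uniqueness of the label-forced map is exactly right, and in (3) your ``bookkeeping subtlety'' about a vertex carrying several labels resolves the way you expect (the extra labels simply kill all but one summand on the right-hand side).

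The one claim I would push back on is in (2), where you assert that suppressing loops created in forming $F_1F_2$ ``is immediate'' and does not change the count. Suppressing parallel edges is indeed harmless, but suppressing a loop is not: if an edge $uv \in E(F_2)$ has both endpoints identified through a chain of shared labels, then $\hom(F_2,G)=0$ (the edge would have to map to a loop of $G$), while after suppression $F_1F_2$ may admit a homomorphism to $G$, so the product formula $\hom(F_1F_2,G)=\hom(F_1,G)\hom(F_2,G)$ can fail in this degenerate case. This is really a caveat on the lemma as literally stated rather than on your argument; in every use the paper makes of it, $F_1$ and $F_2$ are labeled induced subgraphs of a common graph whose labelings agree on shared labels, so no loops are ever created and your bijection between $\hom(F,G)$ and $\hom(F_1,G)\times\hom(F_2,G)$ goes through exactly as you describe. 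It would be worth adding one sentence either excluding the loop case or noting that compatibility of the labelings rules it out.
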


For a class $\mathcal{F}$ of graphs we say that two graphs $G$ and $H$ are \emph{homomorphism indistinguishable over $\mathcal{F}$} if for all $F \in \mathcal{F}$ it holds that $\hom(F,G) = \hom(F,H)$.
In this case we write $G \equiv_\mathcal{F} H$.
The \emph{homomorphism distinguishing closure} of $\mathcal{F}$ is defined as the class
$$\cl(\mathcal{F}) \coloneqq \{F \in \mathcal{G}: \forall G, H \in \mathcal{G} \quad G \equiv_{\mathcal{F}} H \Rightarrow \hom(F,G) = \hom(F, H)\}.$$
The class $\mathcal{F}$ is called \emph{homomorphism distinguishing closed} (also \emph{h.d. closed}) if $\cl(\mathcal{F}) = \mathcal{F}$.
We will use the following sufficient condition for h.d. closedness of graph classes satisfying certain closure conditions. 
\begin{lemma}[{\cite[Proposition 47]{Fluck2023}}] \label{lem:sufficient:hdc}
  Let $\mathcal{F}$ be a graph class that is closed under taking disjoint unions and summands (i.e. $F_1 \dotcup F_2 \in \mathcal{F}$ exactly if $F_1, F_2 \in \mathcal{F}$).
  If for every connected graph $G \notin \mathcal{F}$ it holds that $X(G) \equiv_{\mathcal{F}} \widetilde{X}(G)$, then $\mathcal{F}$ is homomorphism distinguishing closed.
\end{lemma}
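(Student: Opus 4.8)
The plan is to prove the two inclusions $\mathcal{F} \subseteq \cl(\mathcal{F})$ and $\cl(\mathcal{F}) \subseteq \mathcal{F}$ separately. The first is immediate from the definition of $\cl$. For the second I argue contrapositively: given $F \notin \mathcal{F}$, I will exhibit graphs $G,H$ with $G \equiv_{\mathcal{F}} H$ but $\hom(F,G) \neq \hom(F,H)$, witnessing $F \notin \cl(\mathcal{F})$. Writing $F = C_1 \dotcup \dots \dotcup C_n$ as the disjoint union of its connected components, closure under disjoint unions and summands forces at least one component $G_0 \coloneqq C_{i_0} \notin \mathcal{F}$, since otherwise all $C_i \in \mathcal{F}$ would give $F \in \mathcal{F}$. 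As $G_0$ is connected and not in $\mathcal{F}$, the hypothesis yields $X(G_0) \equiv_{\mathcal{F}} \widetilde{X}(G_0)$, where by \Cref{lem:cfi:isomorphic} we may take $\widetilde{X}(G_0) = X_{\{v_0\}}(G_0)$ for any $v_0 \in V(G_0)$. These are my basic distinguishing candidates.

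The main subtlety is that $F$ may be disconnected, so using $X(G_0),\widetilde{X}(G_0)$ directly can fail: a component $C_i$ admitting no homomorphism to $G_0$ would force $\hom(F,X(G_0)) = \hom(F,\widetilde{X}(G_0)) = 0$. I would circumvent this by padding. Set $N \coloneqq \max_i |V(C_i)|$ and take $G \coloneqq X(G_0) \dotcup K_N$ and $H \coloneqq \widetilde{X}(G_0) \dotcup K_N$. That $G \equiv_{\mathcal{F}} H$ follows from the hypothesis together with summand-closure of $\mathcal{F}$: for any $F_0 \in \mathcal{F}$ with connected components $P_s$ (each $P_s \in \mathcal{F}$), multiplicativity of $\hom(-,G)$ over disjoint unions in the source and additivity of $\hom(P_s,-)$ over disjoint unions in the target give $\hom(F_0,G) = \prod_s\bigl(\hom(P_s,X(G_0)) + \hom(P_s,K_N)\bigr)$, and each $\hom(P_s,X(G_0)) = \hom(P_s,\widetilde{X}(G_0))$ because $P_s \in \mathcal{F}$; hence $\hom(F_0,G) = \hom(F_0,H)$.

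For $\hom(F,G) \neq \hom(F,H)$ I would expand over the components of $F$: with $a_i \coloneqq \hom(C_i,X(G_0))$, $\tilde a_i \coloneqq \hom(C_i,\widetilde{X}(G_0))$ and $\beta_i \coloneqq \hom(C_i,K_N) > 0$ (positive since $\chi(C_i) \le |V(C_i)| \le N$), one has $\hom(F,G) = \prod_i(a_i + \beta_i)$ and $\hom(F,H) = \prod_i(\tilde a_i + \beta_i)$. The crux is the homomorphism-counting behaviour of the CFI construction, which I would establish by analysing lifts along $\rho$. Every homomorphism $C \to X_U(G_0)$ projects to some $\phi \colon C \to G_0$, and the lifts of a fixed $\phi$ are exactly the solutions of an affine $\mathbb{F}_2$-system in the indicators $x_{c,e} = [e \in S_c]$: one parity equation $\sum_{e \in E(\phi(c))} x_{c,e} = \delta_{\phi(c),U}$ per vertex and one consistency equation $x_{c,e} + x_{c',e} = 0$ with $e = \phi(c)\phi(c')$ per edge $cc' \in E(C)$. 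The coefficient matrix does not depend on $U$, so the number of lifts of $\phi$ is either $0$ or $2^{d(\phi)}$ for a fixed $d(\phi)$; for $U = \emptyset$ the all-zero assignment is always a solution, so $X(G_0)$ realises the maximum $2^{d(\phi)}$ for every $\phi$. Summing over $\phi$ gives $a_i \ge \tilde a_i$ for all $i$. For the component $G_0$ and $\phi = \id$, summing all parity equations over $X_{\{v_0\}}(G_0)$ yields $2\sum_e x_e \equiv 0$ on the left but $|\{v_0\}| = 1$ on the right; the system is infeasible, so $\id$ contributes $2^{d(\id)} > 0$ to $a_{i_0}$ and $0$ to $\tilde a_{i_0}$, whence $a_{i_0} > \tilde a_{i_0}$. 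As all factors $a_i + \beta_i$ and $\tilde a_i + \beta_i$ are strictly positive with $a_i + \beta_i \ge \tilde a_i + \beta_i$ everywhere and strict inequality at $i_0$, the products satisfy $\hom(F,G) > \hom(F,H)$, completing the argument.

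The step I expect to be most delicate is making the CFI count strict without cancellation, i.e.\ arguing that $\hom(C,X(G_0)) - \hom(C,\widetilde{X}(G_0))$ is a sum of nonnegative contributions indexed by the projections $\phi$. The affine-system viewpoint is what makes this monotonicity transparent, since each $\phi$ contributes $0$ or the same power of two to both graphs with $X(G_0)$ never below $\widetilde{X}(G_0)$, and it reduces the required strictness to the single elementary parity (handshake) obstruction for the identity projection. The other point requiring care is the reduction from the disconnected graph $F$ to a connected base graph: rather than proving that $\cl(\mathcal{F})$ is itself closed under summands, the padding by $K_N$ keeps the full graph $F$ while guaranteeing that every component contributes a strictly positive factor, so that the strict gain at the component $G_0$ is not annihilated by a vanishing factor elsewhere.
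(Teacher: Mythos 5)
Your proof is correct. The paper itself does not prove this lemma --- it is imported from \cite{Fluck2023} --- so the comparison is against the standard argument behind that citation, and your route shares its CFI core but differs in one structural step. The usual proof first establishes that $\cl(\mathcal{F})$ inherits closure under summands from $\mathcal{F}$, thereby reducing to a \emph{connected} $F \in \cl(\mathcal{F}) \setminus \mathcal{F}$, and then contradicts $F \in \cl(\mathcal{F})$ using $X(F) \equiv_{\mathcal{F}} \widetilde{X}(F)$ together with $\hom(F,X(F)) \neq \hom(F,\widetilde{X}(F))$. You avoid proving anything about $\cl(\mathcal{F})$: keeping the possibly disconnected $F$, you pad the CFI pair of one bad component $G_0$ by $K_N$, which instead requires the componentwise monotonicity $\hom(C,X(G_0)) \geq \hom(C,X_U(G_0))$ for every connected $C$. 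Your affine-$\mathbb{F}_2$ derivation of this is sound: the coefficient matrix of the lift system depends only on the projection $\phi$, the all-zero assignment witnesses feasibility when $U=\emptyset$, so each $\phi$ contributes its full $2^{d(\phi)}$ to $X(G_0)$ and either $0$ or $2^{d(\phi)}$ to $\widetilde{X}(G_0)$; the strictness for $\phi=\id$ is the usual handshake-parity obstruction (strictly speaking one first substitutes the edge equations $x_{u,e}=x_{v,e}$ and then sums the vertex equations, but that is cosmetic). The verification that $X(G_0)\dotcup K_N \equiv_{\mathcal{F}} \widetilde{X}(G_0)\dotcup K_N$ via multiplicativity over source components and additivity over target components, and the positivity of the padding factors $\hom(C_i,K_N)$, are both correct. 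The net effect is that your version is more self-contained, since no closure property of $\cl(\mathcal{F})$ is needed, at the price of carrying the inequality $\hom(C_i,X(G_0)) \geq \hom(C_i,\widetilde{X}(G_0))$ for all components rather than a single non-equality for $G_0$ alone.
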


\paragraph*{Category theory.}  We assume only very basic background in category theory, see \cite{Abramsky2010} for details. 
For a category $\mathbf{C}$ we denote its objects by $\Obj(\mathbf{C})$ and its morphisms (or arrows) by $\Ar(\mathbf{C})$.
We denote the category of $\sigma$-structures with their homomorphisms by~$\Str(\sigma)$.
For a functor $J \colon \mathbf{J} \to \mathbf{C}$ a \emph{relative comonad on $J$} is given by
\begin{itemize}
  \item an \emph{object map} $\mathbb{G} \colon \Obj(\mathbf{J}) \to \Obj(\mathbf{C})$,
  \item a \emph{counit morphism} $\varepsilon_{\mathcal{A}} \colon \mathbb{G}\mathcal{A} \to J\mathcal{A}$ for every $\mathcal{A} \in \Obj(\mathbf{J})$,
  \item and a \emph{coextension operation} $(\cdot)^\ast$ associating with each morphism $f \colon \mathbb{G}\mathcal{A} \to J\mathcal{B}$ another morphism
        $f^\ast \colon \mathbb{G}\mathcal{A} \to \mathbb{G}\mathcal{B}$ for $\mathcal{A}, \mathcal{B} \in \Obj(\mathbf{J})$
\end{itemize}
such that for all morphisms $f \colon \mathbb{G}\mathcal{A} \to J\mathcal{B}$, $g \colon \mathbb{G}\mathcal{B} \to J\mathcal{C}$ we have
$\varepsilon^\ast_{\mathcal{A}} = \id_{\mathbb{G}\mathcal{A}}$, $\varepsilon_{\mathcal{B}} \circ f^\ast = f$, and $(g \circ f^\ast)^\ast = g^\ast \circ f^\ast$.

For the relative comonad $(\mathbb{G}, \varepsilon, (\cdot)^\ast)$ we define the associated \emph{coKleisli category} $\mathcal{K}(\mathbb{G})$: 
\begin{itemize}
  \item $\Obj(\mathcal{K}(\mathbb{G}))$ is the class of objects $\Obj(\mathbf{J})$.
  \item $\Ar(\mathcal{K}(\mathbb{G}))$ are all morphisms $f \colon \mathbb{G} \mathcal{A} \to J \mathcal{B}$ for $\mathcal{A}, \mathcal{B} \in \Obj(\mathbf{J})$.
  \item The composition $\circ_{\mathcal{K}}$ is defined by setting $g \circ_{\mathcal{K}} f \coloneqq g \circ f^\ast$.
  \item The identity morphisms are given by the counit $\varepsilon_{\mathcal{A}}$ for $\mathcal{A} \in \Obj(\mathbf{J})$. 
\end{itemize}
The notion of a \emph{comonad in coKleisli form} and the corresponding coKleisli category can be recovered when $\mathbf{J} = \mathbf{C}$ and $J = \id_{\mathbf{C}}$.
From this, a \emph{comonad in standard form} $(\mathbb{G}, \varepsilon, \delta)$ on the category $\mathbf{C}$ can be obtained 
by setting $\mathbb{G}f \coloneqq (f \circ \varepsilon_\mathcal{A})^\ast$ (turning $\mathbb{G}$ into a functor) and $\delta_{\mathcal{A}} \coloneqq \id_{\mathbb{G}\mathcal{A}}^\ast$ for $\mathcal{A} \in \Obj(\mathbf{C})$. 
For a comonad in standard form, a \emph{coalgebra over $\mathbb{G}$} is a pair $(\mathcal{A}, \alpha)$ where $\mathcal{A} \in \Obj(\mathbf{C})$ and $\Ar(\mathbf{C}) \ni \alpha \colon \mathcal{A} \to \mathbb{G}\mathcal{A}$ such that 
$\delta_{\mathcal{A}} \circ \alpha = \mathbb{G}\alpha \circ \alpha$ and $\varepsilon_{\mathcal{A}} \circ \alpha = \id_{\mathcal{A}}$.
  \section{Graph decompositions with restricted reusability} \label{sec:decompositions}

In this section, we introduce several graph decompositions that incorporate constraints on reusability of vertices within the parts of each decomposition. 
Specifically, we define four distinct types of decompositions, each of which comes in two variants corresponding to two underlying structural models:
the \emph{path model} and the \emph{(bounded-depth) tree model}. 
\begin{enumerate}
  \item \emph{path decompositions with exception sets} and \emph{tree decomposition with exception sets}
  \item \emph{linear pebble forest covers} and \emph{pebble forest covers}
  \item \emph{node searching games} and \emph{cops-and-robber games}
  \item \emph{construction caterpillars} and \emph{construction trees}
\end{enumerate} 
Each decomposition is parameterized by two values: $k_1$, representing the number of \emph{reusable resources}, and $k_2$, representing the number of \emph{non-reusable resources}.
Our goal is to show that, for each of the two models, the corresponding decompositions define the same graph class with aligned parameters. This equivalence allows us to explore the concept of reusability in graph decompositions from multiple, yet consistent, perspectives.
The motivation for each decomposition arises from its significance in homomorphism indistinguishability over the associated graph class, establishing a unifying theme for our study.
In \Cref{fig:table} we summarize the significance of the various decompositions introduced in this section for our results on homomorphism indistinguishability.

\begin{figure}[h]
  \centering
  \resizebox{\textwidth}{!}{
    \begin{tabular}{ c||c|c|c|c } 
      ~ & graph class & \makecell{homomorphism \\ indistinguishability} & h.d. closedness & game comonads \\
      \hline
      \textbf{path} & 
      \makecell{$\mathcal{P}^{(k_1,k_2)}$ defined by \\ path decompositions} & 
      \makecell{${\land}\mathsf{C}^{(k_1,k_2)}$-equivalence \\ via construction caterpillars \\ $\mathcal{LP}^{(k_1,k_2)}$} & 
      \makecell{only for ${\dotcup}\mathcal{P}^{(k_1,k_2)}$ \\ via $\NS^{(k_1,k_2)}$} & 
      \makecell{coalgebras over $\mathbb{PR}_{(k_1,k_2)}$ \\ correspond to linear \\ component forest covers} \\
      \hline
      \textbf{tree} & 
      \makecell{$\mathcal{T}^{(k_1,k_2)}_q$ defined by \\ bounded depth \\ tree decompositions} & 
      \makecell{$\mathsf{C}^{(k_1,k_2)}_q$-equivalence \\ via construction trees \\ $\mathcal{LT}^{(k_1,k_2)}_q$} & 
      \makecell{true via \\ $\CR^{(k_1,k_2)}_q$} & 
      \makecell{coalgebras over $\mathbb{P}_{(k_1,k_2)}^q$ \\ correspond to \\ forest covers}
    \end{tabular}
  }
  \caption{Utilization of graph decompositions for homomorphism indistinguishability and game comonads}
  \label{fig:table}
\end{figure}

We begin by introducing the notion of \emph{exception sets} for tree decompositions of bounded depth and for path decompositions. 
The key idea is to generalize the concept of a tree decomposition of width $k_1$ by permitting up to $k_2$ \emph{exceptions} along each branch of the underlying tree. 
However, there is an important restriction: if a vertex is designated as an exception at some node $t$ in the rooted tree decomposition, then it must remain fixed, i.e., cannot be replaced by a different vertex at any descendant of $t$. 
In this sense, the exception status is not reusable along the subtree rooted at $t$.

\begin{definition}
  Let $G$ be a graph, $k_2 \in \mathbb{N}$, $k_1 \in \mathbb{N} \cup \{-[k_2>0]\}$, and $k_1+k_2, q \in \mathbb{N}_+$. 
  A rooted tree decomposition~$(T, r, \beta)$ of $G$ has 
  \begin{itemize}
    \item \emph{width} $(k_1,k_2)$ if for each leaf $\ell \in V(T)$ there exists a set 
    of \emph{exceptions} $S_{\ell} \subseteq V(G)$ with $|S_{\ell}| \leq k_2$ such that~$\max_{t \preceq_T \ell} |\beta(t) \setminus S_{\ell}|-1 \leq k_1$,
    \item and \emph{depth} $\max_{v \in V(T)}|\bigcup_{t \preceq_T v} \beta(t)|$.
  \end{itemize}
  We write $\mathcal{T}^{(k_1+1,k_2)}$ for the class of all graphs admitting a tree decomposition of width $(k_1,k_2)$ 
  and $\mathcal{T}^{(k_1+1,k_2)}_q$ for the subclass admitting such a cover of depth $q$.

  If $(P, r, \beta)$ is a path decomposition, we also define that $(P,r, \beta)$ has \emph{component width} $(k_1,k_2)$ if for each connected component $C \subseteq V(G)$ 
  there is an exception set $S_C \subseteq C$ with $|S_C| \leq k_2$ such that $\max_{p \in V(P)} |\beta(p) \setminus \bigcup_{C \in \mathcal{C}_G}S_C|-1 \leq k_1$. 

  We write $\mathcal{P}^{(k_1+1,k_2)}$ for the class of all graphs admitting a path decomposition of width $(k_1,k_2)$.
\end{definition}
For $k_2=0$ we recover the class $\mathcal{T}^{(k_1+1,0)}$ of graphs of treewidth at most $k_1$ but for $k_2>0$ we allow the technical nuisance of having $k_1=-1$ in order to avoid a case distinction. 

Note that for the \emph{width} of a path decomposition we only require the existence of one single set $S$ of exceptions.
Thus, for fixed $k_1$ the class of graphs admitting a path decomposition of width $(k_1,k_2)$ can be seen as an approximation of the class of graphs of pathwidth at most $k_1$ up to deleting $k_2$ vertices. 

For the proofs of our characterization theorems, we use the well-known notion of \emph{nice} tree decompositions (see \cite{Bodlaender1998}) and show that they are compatible with reusability. 

\begin{definition}
    Let $(T, r, \beta)$ be a rooted tree decomposition of a graph $G$. A node $t \in V(T)$ is called
    \begin{itemize}
        \item \emph{introduce node} if it has exactly one child $s$ such that there exists a vertex $v \in V(G)$ with the property $\beta(t) = \beta(s) \cup \{v\}$,
        \item \emph{forget node} if it has exactly one child $s$ such that there exists a vertex $v \in V(G)$ with the property $\beta(s) = \beta(t) \cup \{v\}$, or
        \item \emph{join node} if it has exactly two children $s_1,s_2$ with $\beta(t) = \beta(s_1) = \beta(s_2)$.
    \end{itemize}
    A rooted tree decomposition is called \emph{nice} if every node that is not a leaf is either an introduce, join, or forget node.
    A rooted path decomposition is called \emph{nice} if every node that is not the endpoint  of the underlying path is an introduce or forget node.
\end{definition}

\begin{lemma} \label{lem:nice:decomp}
    Let $G$ be a graph with a tree decomposition $(T, r, \beta)$ of width~$(k_1,k_2)$ and depth~$q$.
    Then $G$ has a nice tree decomposition $(T', r', \beta')$ of width $(k_1,k_2)$ and depth~$q$.
    Furthermore, each leaf $\ell \in V(T')$ the set $S_{\ell}$ can be chosen such that for each node $t \in V(T')$ and all children $t'$ of $t$ it holds that $S_{\ell} \cap \beta(t) \subseteq \beta(t')$.
    The same holds for path decompositions of width~$(k_1,k_2)$.
\end{lemma}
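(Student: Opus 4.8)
The plan is to adapt the standard construction of nice tree decompositions (see \cite{Bodlaender1998}) while carefully tracking the exception sets. Recall that the classical procedure turns an arbitrary rooted tree decomposition into a nice one by three local operations: (a) inserting a chain of forget nodes between a node and its parent to remove vertices one at a time, (b) inserting a chain of introduce nodes to add vertices one at a time, and (c) duplicating a node with several children into a binary tree of join nodes with identical bags. The crucial observation I would make first is that none of these operations changes the \emph{collection of bags along any root-to-leaf branch} as a multiset of sets; they only interpolate intermediate bags that are subsets of bags already present, and they only introduce copies of existing bags at join nodes. Since width $(k_1,k_2)$ is defined branch-by-branch in terms of $\max_{t \preceq_T \ell} |\beta(t) \setminus S_\ell|$, and depth is defined in terms of $\bigcup_{t \preceq_T v} \beta(t)$, both quantities are monotone under passing to subsets of bags and invariant under adding duplicate bags.

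Concretely, I would argue as follows. For each leaf $\ell'$ of the new decomposition $(T',r',\beta')$, identify the leaf $\ell$ of the original tree from which $\ell'$ descends (the operations (a)--(c) give a natural surjection of new leaves onto old leaves, up to the binary splitting at join nodes), and set $S_{\ell'} \coloneqq S_\ell$. Every bag $\beta'(t')$ on the branch from $r'$ to $\ell'$ is a subset of some bag $\beta(t)$ on the branch from $r$ to $\ell$, hence $|\beta'(t') \setminus S_{\ell'}| \le |\beta(t) \setminus S_\ell| \le k_1+1$, so the width bound is preserved. Likewise $\bigcup_{t' \preceq_{T'} v'} \beta'(t') \subseteq \bigcup_{t \preceq_T v} \beta(t)$ for the corresponding nodes, so depth does not increase. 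For path decompositions there are no join nodes, so only operations (a) and (b) are used and the single branch structure is preserved verbatim; here the component width condition, which refers to one global family $(S_C)_{C \in \mathcal{C}_G}$ rather than per-leaf sets, transfers immediately since $\max_p |\beta'(p) \setminus \bigcup_C S_C| \le \max_p |\beta(p) \setminus \bigcup_C S_C|$.

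The more delicate part is the final ``furthermore'' clause, asserting that $S_\ell$ can be chosen so that $S_\ell \cap \beta(t) \subseteq \beta(t')$ for every node $t$ and every child $t'$ of $t$. Intuitively this says that once an exception vertex enters a bag on the branch it is never forgotten and then reintroduced below $t$, which matches the informal statement that the exception status is fixed along the subtree. To arrange this I would, after making the decomposition nice, process each branch to its leaf $\ell$ and replace $S_\ell$ by the set of exception vertices that survive: formally, for $v \in S_\ell$, by the connectivity axiom of tree decompositions the set $\{t : v \in \beta(t)\}$ is connected, so along a single root-to-leaf path the nodes containing $v$ form a contiguous interval. The condition $S_\ell \cap \beta(t) \subseteq \beta(t')$ then amounts to requiring that if $v$ is an exception and $v \in \beta(t)$, then $v$ also lies in the relevant child bag; by the interval property this can fail only at the bottom endpoint of $v$'s interval, which I can push down to the leaf by the niceness of the decomposition (forget nodes remove exactly one vertex, so I can reorder the forgets so that exception vertices are forgotten last, i.e. closest to the root among nodes not containing them is impossible). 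I expect \textbf{this reordering of forget nodes to be the main obstacle}: one must verify that permuting the order in which vertices are forgotten along a branch still yields a valid tree decomposition (the connectivity and edge-covering axioms must be rechecked) and that it does not inflate the width beyond $(k_1,k_2)$ or the depth beyond $q$. The key enabling fact is that exception vertices count against $k_2$, not against $k_1$, so keeping them in the bags longer costs nothing against the width budget, and since they were already present in $\bigcup_{t \preceq_T \ell}\beta(t)$ the depth is unaffected.
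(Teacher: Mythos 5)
Your main construction --- contracting redundant edges, expanding high-degree nodes into binary join trees, and interpolating forget-then-introduce chains --- is exactly the paper's, and your explicit observation that every new bag on a branch is a subset of some original bag on the corresponding branch (so that width $(k_1,k_2)$ and depth are monotone under the operations) is more careful than what the paper actually writes down. A minor slip: for path decompositions the lemma concerns width $(k_1,k_2)$ with a single exception set for the one leaf, not component width, but your subset argument covers that case regardless.

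The genuine gap is in the \emph{furthermore} clause, which you rightly flag as the main obstacle but whose proposed resolution does not work. Reordering the forgets inside a forget chain only permutes \emph{when} an exception vertex is dropped; the condition $S_\ell \cap \beta(t) \subseteq \beta(t')$ for every node $t$ and every child $t'$ forces an exception vertex, once present, to persist into \emph{every} descendant bag, so it may never be forgotten above a non-leaf node at all. Achieving this requires inserting the vertex into bags it did not previously belong to, and your ``key enabling fact'' --- that this is free because exceptions are charged to $k_2$ --- fails on branches to other leaves $\ell''$ with $v \notin S_{\ell''}$: the enlarged bags can be common ancestors of $\ell$ and $\ell''$, where $v$ is charged to $k_1$, or forces $|S_{\ell''}|>k_2$. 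Concretely, for $k_1=0$, $k_2=1$, take the double star with $v$ adjacent to $a,b,c$ and $a$ adjacent to $d,e,f$, decomposed as the tree rooted at $\{v,a\}$ with one branch $\{v,b\}-\{v,c\}$ (exception necessarily $v$) and another branch $\{a,d\}-\{a,e\}-\{a,f\}$ (exception necessarily $a$). This has width $(0,1)$, but $a \in S_\ell \cap \beta(\{v,a\})$ for the second leaf while the child $\{v,b\}$ does not contain $a$, and no reordering of forgets or reassignment of exception sets repairs this: one must restructure the tree itself (e.g.\ hang $\{a,d\},\{a,e\},\{a,f\}$ as separate leaves off a bag $\{a\}$, each with its own exception). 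So the clause is not a by-product of nicification; it needs a separate argument. In fairness, the paper's own proof of this lemma is silent on the furthermore clause as well, even though the clause is used essentially later (e.g.\ for the monotonicity step $S_\ell \cap \beta(\tau(u)) \subseteq \beta(\tau(w))$ in the proof of \Cref{thm:tree:characterization}), so the part both you and the paper actually prove is sound --- but the missing argument is a real issue, not a formality.
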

\begin{proof}
    Given a tree decomposition $(T, r, \beta)$ of $G$ of width $(k_1,k_2)$ and depth $q$ we construct a nice decomposition of the same width and depth as follows:

    First, make the decomposition \emph{non-redundant} by contracting all edges $st \in E(T)$ with $\beta(s) \subseteq \beta(t)$ to a single bag.
    We replace each node $t \in V(T)$ with degree $d \geq 3$ by a binary tree $B_t$ with $d$ leaves.
    For each of the nodes $t' \in V(B_t)$ we set $\beta(t') \coloneqq \beta(t)$.
    Then each of the subtrees induced by the children of $t$ in $T$ is attached to exactly one of the laves of $B_t$.
    Now let $t \in V(T)$ such that its parent $p(t)$ has degree $2$.
    We replace the edge $p(t)t \in E(T)$ by a path of length $|\beta(p(t)) \mathbin{\triangle} \beta(t)|$, on which we first forget the nodes from $\beta(p(t)) \setminus \beta(t)$ one by one and afterwards introduce the nodes from $\beta(t) \setminus \beta(p(t))$.

    To obtain a nice path decomposition of the same width it suffices to make it non-redundant and then use the forget-and-introduce technique.
\end{proof}

Next, we extend the notion of \emph{pebble forest covers} to incorporate the concept of reusability. 
Originally introduced in \cite{Abramsky2017} as \emph{$k$-traversals}, these structures provide a combinatorial characterization of coalgebras over the pebbling comonad $\mathbb{P}_k$, finally demonstrating how the comonadic structure of $\mathbb{P}_k$ can be used to characterize treewidth. 
Our aim is to follow a similar approach to characterize reusability (specifically, width $(k_1, k_2)$) within tree decompositions of bounded depth via comonadic methods. 
To this end, we use \emph{non-reusable pebbles} $y_1, \dotsc, y_{k_2}$, which mark fixed positions in a forest cover that cannot be reassigned, thereby encoding the non-reusability constraint directly into the structure.

\begin{definition}
  Let $G$ be a graph and $k_1+k_2 \in \mathbb{N}_+$.
  A \emph{$(k_1,k_2)$-pebble forest cover} of $G$ is a tuple $(F, \overline{r}, p)$ where $(F, \overline{r})$ is a rooted forest with $V(F) = V(G)$
  and $p \colon V(G) \to [x_{k_1}, y_{k_2}]$ is a pebbling function such that
  \begin{enumerate}
    \item if $uv \in E(G)$, then $u \preceq_F v$ or $v \preceq_F u$,
    \item if $uv \in E(G)$ and $u \prec_F v$, then for all $w \in V(G)$ with $u \prec_F w \preceq_F v$ it holds that $p(u) \neq p(w)$, and
    \item if $u \in V(G)$ and $p(u) \in [y_{k_2}]$, then for all $w \in V(G)$ with $u \prec_F w$ it holds that~$p(u) \neq p(w)$.
  \end{enumerate}
  The forest cover $(F, \overline{r}, p)$ has \emph{depth} $q \in \mathbb{N}_+$ if $(F, \overline{r})$ has height $q$.
  We call $(F, \overline{r}, p)$ a \emph{linear forest cover} if it additionally holds that
  \begin{enumerate}
    \item every connected component of $F$ is a path, and
    \item \label{lfc:non-reusable} if $u \in V(G)$ and $p(u) \in [y_{k_2}]$, then for every $w \in V(G) \setminus \{u\}$ it holds that~$p(u) \neq p(w)$.
  \end{enumerate}
  If we relax \Cref{lfc:non-reusable} such that for $u \in V(G)$ with $p(u) \in [y_{k_2}]$ only for every $w$ in the same path of $F$ as $u$ it must hold~$p(u) \neq p(w)$,
  we say that $(F, \overline{r}, p)$ is a \emph{linear component forest cover}. 
\end{definition}

The \emph{tree-depth} of a graph $G$ is the minimum $q \in \mathbb{N}_+$ such that $G$ admits a forest cover of depth at most $q$~\cite{Nesetril2006}. The class of graphs of tree-depth at most $q$ is denoted by $\mathcal{T}_q$. 

The classes $\mathcal{P}^{(k,0)}$, $\mathcal{T}^{(k,0)}$, and $\mathcal{T}_q$ can be characterized in terms of pursuit-evasion games by \cite{Seymour1993}, \cite{Bienstock1991, Kirousis1985}, and \cite{Giannopoulou2012} respectively.
The characterization of $\mathcal{T}^{(k,0)}$ was recently refined to hold for~$\mathcal{T}^{(k,0)}_q$ in~\cite{Fluck2024}. 
In \cite{Rassmann2025} a cops-and-robber game with constraints on the reusability of cops and the number of rounds was introduced. 
We recall its definition and also modify it to match the game-theoretic characterization of $\mathcal{P}^{(k,0)}$. 

\begin{definition}
  Let $G$ be a graph and let $k_1+k_2,q \in \mathbb{N}_+$.
  The \emph{cops-and-robber game} $\CR^{(k_1,k_2)}_q(G)$ is defined as follows:
  The game is played between a group of $k_1+k_2$ \emph{cops} denoted by the elements in $[x_{k_1}, y_{k_2}]$ and one \emph{robber}.
  The position of the cops is given by a function $\gamma \colon [x_{k_1}, y_{k_2}] \rightharpoonup V(G)$ and the position of the robber is a vertex $v \in V(G)$.
  We denote the connected component of $v$ in the graph $G - \im(\gamma)$ by~$C^\gamma_v$.
  In one round of the game, the following steps are performed
  \begin{enumerate}
    \item The cops choose one cop $z \in [x_{k_1}] \cup \{y \in [y_{k_2}] : \gamma(y) = \bot\}$ and declare a destination $w \in V(G) \cup \{\bot\}$.
    \item The robber chooses a vertex $v'$ in $C^{\gamma[z / \bot]}_v$.
    \item If $v' \in \im(\gamma[z/w])$ the cops win. Otherwise, the game continues from the new position $(\gamma[z/w], v')$.
  \end{enumerate}
  Initially, neither the cops nor the robber are placed on the graph.
  The robber wins if the cops do not win after $q$ rounds. 
  The game variant $\CR^{(k_1,k_2)}$ is defined in the same way with the modification that the robber wins if the cops never win a round.

  The \emph{node searching game} $\NS^{(k_1,k_2)}(G)$ is defined as the variant of $\CR^{(k_1,k_2)}$ in which the robber is \emph{invisible} to the cops.
  That is, the choice of the assignment $\gamma_i$ can only depend on $\gamma_{i-1}$, but not on $v$. 
  Here the cops are called \emph{searchers} instead and the robber is called \emph{fugitive}.
  In this formulation the only difference to $\CR^{(k_1,k_2)}$ is that searchers do not know the position $v$, but the fugitive knows the position~$\gamma$.

  The strategy of the cops or searchers is called \emph{monotone} if in each round it holds that $C^\gamma_v \supseteq C^{\gamma[z / \bot]}_v$.
  We say that a game variant is \emph{monotone} if the existence of a winning strategy implies the existence of a monotone winning strategy. 
\end{definition}

A winning strategy of the searchers in the game $\NS^{(k_1,k_2)}(G)$ can be specified as a sequence of positions $\gamma_1,\dots,\gamma_m \colon [x_{k_1}, y_{k_2}] \rightharpoonup V(G)$ 
while in the game $\CR^{(k_1,k_2)}$ the cop strategy depends on the moves of the robber and therefore each position $\gamma_i$ additionally depends on the robber position $v_i$ in round $i$. 
The proof of \cite[Theorem 15]{Rassmann2025} hinges on this fact and shows that in the game $\CR^{(k_1,k_2)}$ the use of non-reusable cops is restricted to a pattern involving
arbitrarily long sequences of reusing all reusable cops before utilizing a new non-reusable cop. 
We show that the situation is entirely different for $\NS^{(k_1,k_2)}$ due to the invisibility of the fugitive. 

\begin{proposition} \label{prop:search:non-reusable:first}
  The searchers have a winning strategy in $\NS^{(k_1,k_2)}(G)$ if and only if they have a winning strategy in $\NS^{(k_1,0)}(G)$ with $k_2$ initial fixed placements of non-reusable searchers. 
\end{proposition}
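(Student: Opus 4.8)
The plan is to pass to the standard reformulation of the invisible search game in terms of possible fugitive positions and then show that a non-reusable searcher may always be moved to the very start of the play. Since the fugitive is invisible, a searcher strategy is nothing but a sequence of positions $\gamma_0, \gamma_1, \dots, \gamma_m$, and I track the \emph{contaminated set} $R_i \subseteq V(G) \setminus \im(\gamma_i)$ of vertices the fugitive could occupy after round $i$, starting from $R_0 = V(G)$. A round that lifts searcher $z$ and places it at $w$ first replaces $R_{i-1}$ by the expansion $\bigcup_{v \in R_{i-1}} C^{\gamma_{i-1}[z/\bot]}_{v}$ and then deletes $w$; the searchers win precisely when $R_i = \emptyset$ for some $i$. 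The decisive structural fact is that a non-reusable searcher $y_j \in [y_{k_2}]$ may be chosen only while $\gamma(y_j) = \bot$, so once placed it never moves and hence occupies a single vertex $u_j$ from its placement round $t_j$ onward.

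For the \emph{if} direction I realize a winning $\NS^{(k_1,0)}$ strategy with fixed placements $u_1, \dots, u_{k_2}$ inside $\NS^{(k_1,k_2)}(G)$ by first spending $k_2$ rounds placing each $y_j$ at $u_j$ and then copying the reusable moves verbatim. A direct unwinding of the contamination dynamics shows that after these initial rounds the contaminated set equals $V(G) \setminus \{u_1, \dots, u_{k_2}\}$, which is exactly the starting contaminated set of the fixed-placement game; the remaining play coincides, so the realized strategy is winning.

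For the \emph{only if} direction I take a winning strategy $\gamma_0, \dots, \gamma_m$, set $u_j \coloneqq \gamma_{t_j}(y_j)$, and build a candidate strategy $\gamma'$ for $\NS^{(k_1,0)}$ with fixed placements $u_1, \dots, u_{k_2}$ by keeping every reusable move in its original order while simply deleting the placement rounds. Writing $\phi(i)$ for the number of reusable moves among the first $i$ rounds, I prove by induction on $i$ the two invariants $\im(\gamma'_{\phi(i)}) \supseteq \im(\gamma_i)$ and $R'_{\phi(i)} \subseteq R_i$. On a reusable round both strategies perform the same move; since the modified occupied set is the larger one, more vertices are deleted from $G$, the components in the expansion are smaller, and both the expansion and the subsequent removal of $w$ preserve the inclusion. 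On a placement round (deleted in $\gamma'$) the original play expands $R_{i-1}$ and then removes $u_j$ while $R'$ is unchanged; here I use that $u_j$ is permanently occupied in $\gamma'$, so $u_j \notin R'_{\phi(i-1)}$ and the inclusion survives. Evaluating at $i = m$ gives $R'_{\phi(m)} \subseteq R_m = \emptyset$, so $\gamma'$ wins.

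The main obstacle is the mismatch in round structure: each placement round grants the invisible fugitive an extra expansion step in $\NS^{(k_1,k_2)}$ that has no counterpart in the fixed-placement game. The crux is that this additional contamination is harmless precisely because every $u_j$ is blocked from the outset in $\gamma'$ and therefore never enters $R'$, which is what lets the inclusion $R'_{\phi(i)} \subseteq R_i$ absorb the free expansions. This is also exactly the point where invisibility is essential: in $\CR^{(k_1,k_2)}$ the cops may defer and adapt non-reusable placements to the robber's visible position, so the analogous front-loading statement fails, matching the more intricate usage pattern established for the visible game in \cite{Rassmann2025}.
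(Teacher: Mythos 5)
Your proposal is correct and follows essentially the same route as the paper: front-load the non-reusable searchers onto their eventual positions $I=\im(\gamma_m|_{[y_{k_2}]})$ and prove by induction that the contaminated sets of the modified play are contained in those of the original, using that each $u_j$ is blocked from the outset. The only differences are cosmetic bookkeeping (you delete the placement rounds via a reindexing $\phi$, while the paper keeps them as no-op moves and strips duplicates at the end) and that you spell out the trivial ``if'' direction, which the paper omits.
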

\begin{proof}
  Let $G$ be a graph such that the searchers have a winning strategy $\gamma_1,\dotsc,\gamma_m \colon [x_{k_1}, y_{k_2}] \rightharpoonup V(G)$ for the game $\NS^{(k_1,k_2)}(G)$ (w.l.o.g using all searchers). 
  The set $I \coloneqq  \im(\gamma_m|_{[y_{k_2}]})$ is the set of vertices in $G$ which are covered by non-reusable searchers in the course of the game. 
  We argue that placing all non-reusable searchers on $I = \{v_1,\dotsc,v_{k_2}\}$, followed by the strategy $\gamma'_{i+k_2} \coloneqq \gamma_i[y_1,\dotsc,y_{k_2}/v_1,\dotsc,v_{k_2}]$ for $i \in [m]$ induces a winning strategy for the searchers.
  Let $z_i$ be the searcher used to get from $\gamma_i$ to $\gamma_{i+1}$ and accordingly $z'_i$ from $\gamma'_i$ to $\gamma'_{i+1}$.
  Note that $\gamma'_1,\dots,\gamma'_{k_2+m}$ is not a valid strategy since it actually includes picking up a searcher $y_j$ and placing it back on the same vertex, 
  but clearly each such move can be omitted to obtain a valid strategy. 
  The set of potential positions of the fugitive after round $i\in [m]$ is given by $R'_i \coloneqq \bigcup_{v \in R_{i-1}} C^{\gamma'_i[z'_i/\bot]}_v$ with $R'_0 \coloneqq V(G)$.
  Since $\gamma_1,\dotsc,\gamma_m$ is a winning strategy for the game $\NS^{(k_1,k_2)}(G)$, for $R_i \coloneqq \bigcup_{v \in R_{i-1}} C^{\gamma_i[z_i/\bot]}_v$, $R_0 \coloneqq V(G)$ we have $R_{m+1} \coloneqq \bigcup_{v \in R_{m}} C^{\gamma_m}_v = \emptyset$.
  We show that for $i \in [m]$ we have $R'_{i+k_2} \subseteq R_i$ by induction on $i$. For the base case we have $R'_{k_2} \subseteq V(G) = R_0$ and for the inductive step 
  $$ R'_{i+k_2} = \bigcup_{v \in R'_{i+k_2-1}} C^{\gamma'_{i+k_2}[z'_{i+k_2}/\bot]}_v \subseteq \bigcup_{v \in R_{i-1}} C^{\gamma'_{i+k_2}[z'_{i+k_2}/\bot]}_v \subseteq \bigcup_{v \in R_{i-1}} C^{\gamma_{i}[z_{i}/\bot]}_v = R_i.$$
  The first inclusion follows from the inductive hypothesis and the second inclusion from $C^{\gamma'_{i+k_2}[z'_{i+k_2}/\bot]}_v \subseteq C^{\gamma_{i}[z_{i}/\bot]}_v$ for every $v \in R_{i-1}$. 
  Finally, it follows that for the final potential positions of the fugitive we have $R'_{m+k_2} \coloneqq \bigcup_{v \in R'_{m-1+k_2}} C^{\gamma'_{m-1+k_2}}_v = \emptyset$.
  After removing all duplicates from $\gamma'_1,\dots,\gamma'_{k_2+m}$ we obtain a valid winning strategy of length $m$ for the searchers. 
\end{proof}

To prove the characterization of a graph parameter by a pursuit-evasion game an important step often is to establish that the respective game is monotone. 
Alongside this, in some cases proving the monotonicity of a game is the crucial step in establishing homomorphism distinguishing closedness of the associated graph class, 
see \cite{Adler2024, Fluck2024} for a discussion. Our goal is to show monotonicity of the games introduced here in order to utilize this property to 
prove game-theoretic characterizations of $\mathcal{P}^{(k_1,k_2)}$ and $\mathcal{T}^{(k_1,k_2)}_r$ towards homomorphism distinguishing closedness.

The monotonicity of $\NS^{(k_1,0)}$ was proven in \cite{Kirousis1986} and recently \cite{Adler2024} established the monotonicity of $\CR^{(k_1,0)}_q(G)$.
We build on top of these results to show that reusability is compatible with monotonicity by replacing parts of winning strategies by monotone strategies. 
The notion of monotonicity we use here is usually referred to as \emph{robber-monotonicity} in the literature.

\begin{proposition} \label{prop:non-reusable:monotone}
    Let $k_1+k_2,q \in \mathbb{N}_+$ and $G$ be a graph. Then both games $\NS^{(k_1,k_2)}(G)$ and $\CR^{(k_1,k_2)}_q(G)$ are monotone. 
\end{proposition}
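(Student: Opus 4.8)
The plan is to reduce the monotonicity of the reusability-restricted games to the already-established monotonicity of the corresponding non-restricted games, namely $\NS^{(k_1,0)}$ from \cite{Kirousis1986} and $\CR^{(k_1,0)}_q$ from \cite{Adler2024}. The key insight is that a winning strategy in a $(k_1,k_2)$-game can be viewed as a winning strategy in a $(k_1,0)$-game on a modified instance where the non-reusable searchers/cops have been fixed in place, so that monotonicity can be imported from the non-restricted setting and then transferred back.

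For the node-searching game, I would start from \Cref{prop:search:non-reusable:first}, which already tells us that a winning strategy in $\NS^{(k_1,k_2)}(G)$ is equivalent to a winning strategy in $\NS^{(k_1,0)}(G)$ with $k_2$ initial fixed placements $v_1,\dots,v_{k_2}$ of non-reusable searchers. The latter is just the ordinary node-searching game played on the graph $G - \{v_1,\dots,v_{k_2}\}$ but where the searchers additionally guard the fixed vertices; equivalently it is a non-restricted search on the relevant components. Since $\NS^{(k_1,0)}$ is monotone by \cite{Kirousis1986}, the derived strategy can be replaced by a robber-monotone one. The point I must verify is that fixing the $k_2$ vertices throughout does not interfere with monotonicity: since the fixed searchers never move, the condition $C^{\gamma}_v \supseteq C^{\gamma[z/\bot]}_v$ only needs to be checked for the reusable searchers, and these are exactly the moves governed by the monotone $\NS^{(k_1,0)}$ strategy. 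Translating the monotone strategy back through the bijection of \Cref{prop:search:non-reusable:first} (prepending the $k_2$ fixed placements and deleting the redundant pick-up-and-replace moves) preserves monotonicity, because prepending fixed placements and removing trivial moves cannot shrink any reachable robber-component.

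For the cops-and-robber game $\CR^{(k_1,k_2)}_q$, the situation is more delicate because here the cop strategy depends on the robber's moves and the bound-$q$ on rounds must be respected. I would invoke the structural result \cite[Theorem 15]{Rassmann2025}, which says that in a winning play the non-reusable cops are deployed in a rigid pattern: one proceeds through arbitrarily long phases of reusing all reusable cops, committing a fresh non-reusable cop only between phases. Within each such phase the play is exactly a non-restricted $\CR^{(k_1,0)}_{q'}$ game on the current component (with the already-placed non-reusable cops acting as fixed boundary guards). Applying the monotonicity of $\CR^{(k_1,0)}_{q}$ from \cite{Adler2024} phase-by-phase yields a monotone strategy for each phase; since a monotone phase uses no more rounds than the original, the total round count stays within $q$, and concatenating monotone phases preserves the global monotonicity condition because placing a new non-reusable cop only removes vertices from the robber's reachable component.

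\textbf{The main obstacle} I expect is the phase-by-phase concatenation for $\CR^{(k_1,k_2)}_q$: I must argue that replacing each phase's sub-strategy by a monotone one is globally consistent, i.e.\ that the fixed non-reusable cops placed in earlier phases remain a legitimate (and monotonicity-preserving) boundary for the monotone sub-strategy of a later phase, and that the round budget is not exceeded when the monotone strategies are stitched together. The key lemma to isolate is that monotonizing a sub-play on a component, while treating externally-fixed cops as immovable guards, never increases the number of rounds and never enlarges the robber space at the seams; granting this, the induction on the number of non-reusable cops used closes the argument for both games uniformly.
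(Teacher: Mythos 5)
Your treatment of $\NS^{(k_1,k_2)}(G)$ is essentially the paper's proof: apply \Cref{prop:search:non-reusable:first} to push all non-reusable placements to the front, note that these initial moves are monotone by definition, and monotonize the residual $\NS^{(k_1,0)}(G-I)$ strategy via \cite{Kirousis1986}. Your additional remark that the immovable guards cannot interfere with monotonicity is correct and is left implicit in the paper.

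For $\CR^{(k_1,k_2)}_q(G)$ your high-level plan --- cut each play at the rounds where a fresh non-reusable cop is committed and monotonize the resulting phases using \cite{Adler2024} --- is also the paper's plan (the paper organizes it as an induction on $k_2$, handling the first phase with \cite{Adler2024} and the tail with the inductive hypothesis). The genuine gap is precisely the step you defer as a ``key lemma'': the monotonicity theorem of \cite{Adler2024} applies to \emph{winning} strategies, i.e., strategies that capture the robber, whereas an intermediate phase does not capture the robber --- it only confines it to a component in which the next non-reusable cop will be dropped. So it is not true that ``within each phase the play is exactly a non-restricted $\CR^{(k_1,0)}_{q'}$ game'': the cops do not win that game within the phase, and there is nothing for the cited monotonicity theorem to act on. The paper bridges this by restricting to the induced subgraph $G[R(\gamma_{i-1},v_{i-1})]$ on the vertices ever occupied by reusable cops before the phase boundary and attaching a clique $K_{k_1}$ (via a matching to the final reusable-cop positions); on this auxiliary graph the cops genuinely win $\CR^{(k_1,0)}_{i-1+k_1}$, so \cite{Adler2024} yields a monotone winning strategy there, which translates back into a monotone strategy on $G$ that forces the phase-boundary position in $i-1$ moves within the round budget. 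Without some such device, your phase-by-phase argument does not go through as stated; the remaining pieces (placing a non-reusable cop is monotone by definition, and the concatenation respects $q$ and global monotonicity) are fine.
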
 
\begin{proof}
  For the game $\NS^{(k_1,k_2)}(G)$ by \Cref{prop:search:non-reusable:first} every winning strategy of the searchers can be transformed into a strategy with 
  $k_2$ initial placements of non-reusable searchers on $I \subseteq V(G)$ followed by only utilizing reusable searchers afterwards. 
  The initial $k_2$ moves of this strategy are monotone by definition.
  For the remaining game $\NS^{(k_1,0)}(G-I)$ the searchers have a winning strategy which can be made monotone by \cite{Kirousis1986}.

  For the game $\CR^{(k_1,k_2)}_q(G)$ we argue by induction on $k_2$.
  The base case for $k_2=0$ is the main result of~\cite{Adler2024}.
  For the inductive step, let the cops have a fixed winning strategy for $\CR^{(k_1,k_2)}_q(G)$ and assume that the game variant $\CR^{(k_1,k_2-1)}_q$ is monotone for all $k_1,q \in \mathbb{N}_+$. 
  Let $P$ be the set of all positions $(\gamma_i, v_i)$ occurring in the winning strategy such that $\gamma_i$ uses a non-reusable cop for the first time.
  The index $i$ again denotes the round of the game.   
  Then for each $(\gamma_i, v_i) \in P$ the cops have a winning strategy for the game $\CR^{(k_1,k_2-1)}_{q-i}(G[C^{\gamma_i}_{v_i}] - \im \gamma_i)$ by assumption. 
  By the inductive hypothesis the cops also have a monotone strategy for the same game. 
  Now it suffices to show that the cops can force a position $(\gamma_i, v_i) \in P$ in $i$ rounds using a monotone strategy. 
  We let $P'$ be the set of all positions $(\gamma_{i-1}, v_{i-1})$ in the winning strategy preceding a position in $P$. 
  Note that all positions in $P'$ are reached by the cops only employing reusable cops. 
  Let $R(\gamma_{i-1}, v_{i-1})$ be the set of all vertices in $G$ that are occupied by reusable cops before reaching $(\gamma_{i-1}, v_{i-1}) \in P'$. 
  We define the graph $G[R(\gamma_{i-1}, v_{i-1})] + K_{k_1}$ as the induced subgraph $G[R(\gamma_{i-1}, v_{i-1})]$ with 
  a the complete graph $K_{k_1}$ attached to the position of $k_1-1$ cops by adding a perfect matching between $\{\gamma_{i-1}(x_j) : j \in [k_1 -1]\}$ and $K_{k_1}$. 
  Then the cops have a winning strategy for the game $\CR^{(k_1,0)}_{i-1+k_1}(G[R(\gamma_{i-1}, v_{i-1})] + K_{k_1})$.
  Again using \cite{Adler2024}, there exists a monotone  winning strategy for this game.
  This in turn gives a monotone strategy for the game $\CR^{(k_1,0)}(G)$ in which a position $(\gamma_{i-1}, v_{i-1}) \in P'$ can be forced in $i-1$ moves while only employing reusable cops.  
  Finally, we obtain a monotone winning strategy for $\CR^{(k_1,k_2)}_q(G)$:
  First, use the monotone strategy to force $(\gamma_{i-1}, v_{i-1})$ in $\CR^{(k_1,0)}_{i-1}(G)$, 
  followed by placing a non reusable cop (which is monotone by definition) and then use the monotone strategy for the remaining game $\CR^{(k_1,k_2-1)}_{q-i}(G[C^{\gamma_i}_{v_i}] - \im \gamma_i)$. 
\end{proof}

To characterize homomorphism indistinguishability over the class $\mathcal{T}^{(k,0)}_q$, in \cite{Fluck2024} the notion of \emph{construction trees} was introduced, building on techniques from \cite{Dvorak2010}. 
We extend this concept to accommodate the classes $\mathcal{T}^{(k_1,k_2)}_q$ and $\mathcal{P}^{(k_1,k_2)}$. 
This generalized definition forms the basis of our framework for analyzing homomorphism indistinguishability over these broader classes.

\begin{definition}
    Let $G$ be a $[x_{k_1}, y_{k_2}]$-labeled graph and $k_1+k_2 \in \mathbb{N}_+$.
    A \emph{$(k_1,k_2)$-construction tree} for $G$ is a tuple $(T, \lambda, r)$, where $(T,r)$ is a rooted tree and $\lambda \colon V(T) \to \mathcal{G}_{[x_{k_1}, y_{k_2}]}$
    is a function assigning $[x_{k_1}, y_{k_2}]$-labeled graphs to the nodes of $T$ such that
    \begin{itemize}
      \item $\lambda(r) = G$,
      \item all leaves $\ell \in V(T)$ are assigned fully labeled graphs,
      \item all internal nodes $t \in V(T)$ with exactly one child $t'$ are \emph{elimination nodes}, that is~$\lambda(t)$ is obtained from $\lambda(t')$ by deleting a label,
      \item all internal nodes $t \in V(T)$ with more than one child are \emph{product node}s, that is~$\lambda(t)$ is the product of its children,
      \item if $t \in V(T)$ is an elimination node deleting a label $y \in [y_{k_2}]$, then for all $s \prec_T t$ it holds that $y \notin L_{\lambda(s)}$.
    \end{itemize}
    The \emph{elimination depth} of $(T, \lambda, r)$ is the maximum number of elimination nodes on any path from the root $r$ to a leaf.
    If $G$ has a $(k_1,k_2)$-construction tree of elimination depth $q$ we say that $G$ is \emph{$(k_1,k_2,q)$-constructible}.
    We write $\mathcal{LT}^{(k_1,k_2)}_q$ for the class of $(k_1,k_2,q)$-constructible labeled graphs.
    If $G$ has a $(k_1,k_2)$-construction tree $T$ such that each product node $v \in V(T)$ has at most one child which is not a leaf, we say that 
    $T$ is a \emph{construction caterpillar} and $G$ is \emph{linearly $(k_1,k_2)$-constructible}.
    We write $\mathcal{LP}^{(k_1,k_2)}$ for the class of linearly $(k_1,k_2)$-constructible labeled graphs.
\end{definition}

We obtain the following characterizations for the classes $\mathcal{T}^{(k_1,k_2)}_q$ and $\mathcal{P}^{(k_1,k_2)}$, showing that all previous definitions are equivalent for unlabeled graphs. 

\begin{theorem} \label{thm:path:characterization}
  For a graph $G$ and $k_1, k_2 \in \mathbb{N}$ with $k_1+k_2\geq1$ the following are equivalent:
  \begin{enumerate}
    \item\label{path:decomposition} $G \in \mathcal{P}^{(k_1,k_2)}$, i.e., $G$ has a path decomposition of width $(k_1-1,k_2)$.
    \item\label{path:cover}         $G$ admits a linear $(k_1, k_2)$-pebble forest cover.
    \item\label{path:game}          The searchers have a winning strategy for the game $\NS^{(k_1,k_2)}(G)$.
    \item\label{path:construction}  $G \in \mathcal{LP}^{(k_1,k_2)}$, i.e., $G$ is linearly $(k_1,k_2)$-constructible.
  \end{enumerate}
\end{theorem}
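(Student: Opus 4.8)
The plan is to prove the cyclic chain \ref{path:decomposition} $\Rightarrow$ \ref{path:cover} $\Rightarrow$ \ref{path:game} $\Rightarrow$ \ref{path:construction} $\Rightarrow$ \ref{path:decomposition}. The guiding principle, already visible in \Cref{prop:search:non-reusable:first}, is that in the \emph{path} model the $k_2$ non-reusable resources may be committed once and for all to a single vertex set $S$ with $|S| \le k_2$: in a path decomposition I would add $S$ to every bag, so that $\max_p |\beta(p) \setminus S| \le k_1$ exactly certifies that $G - S$ has pathwidth at most $k_1 - 1$; in a forest cover I would reserve the pebbles $y_1,\dots,y_{k_2}$ for the vertices of $S$; and in a construction caterpillar I would use the labels $[y_{k_2}]$ for $S$ and never re-eliminate them. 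Each formalism thereby splits into a \emph{reusable core} on $G - S$, governed by the classical theory of pathwidth $k_1 - 1$, together with a \emph{fixed frame} $S$ handled uniformly by the non-reusable resources.

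For \ref{path:decomposition} $\Rightarrow$ \ref{path:cover} I would first pass to a nice path decomposition via \Cref{lem:nice:decomp}, so that the single leaf's exception set $S$ satisfies $S \cap \beta(t) \subseteq \beta(t')$ along each edge $tt'$; this forces the $y$-pebbled vertices to sit consistently at the top of the forest. Ordering the non-exception vertices by the moment they are forgotten yields a linear forest whose paths correspond to the $k_1$ reusable slots; assigning each such vertex its slot-index as an $x$-pebble and each vertex of $S$ a distinct $y$-pebble gives a linear $(k_1,k_2)$-pebble forest cover, the cover conditions $(2)$ and $(3)$ encoding precisely that a reusable pebble is free before reuse and that the $y$-pebbles are globally distinct (condition~\ref{lfc:non-reusable}). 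For \ref{path:cover} $\Rightarrow$ \ref{path:game} I would read the cover as a monotone sweep: traversing each path of the forest from its root, placing searcher $p(v)$ on $v$ when $v$ is reached and lifting a reusable searcher only once its pebble is next needed — exactly as prescribed by conditions $(2)$ and $(3)$ — while the non-reusable searchers land on $S$ and stay, matching the initial fixed placements of \Cref{prop:search:non-reusable:first}.

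The implication \ref{path:game} $\Rightarrow$ \ref{path:construction} is where the labeled-graph machinery enters. Invoking \Cref{prop:non-reusable:monotone} I may assume the strategy is monotone, and by \Cref{prop:search:non-reusable:first} it fixes $k_2$ searchers on a set $I$ and then sweeps $G - I$ with $k_1$ reusable searchers. Reading this sweep from one end produces the spine of a construction caterpillar: each placement on a fresh vertex becomes a product node gluing in an atomic fully-labeled gadget (the vertex with its back-edges to currently occupied vertices), and each lift becomes an elimination node deleting that label. Labeling the $I$-searchers with $[y_{k_2}]$ and eliminating each only at the very end secures the defining condition on $y$-eliminations. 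The closing implication \ref{path:construction} $\Rightarrow$ \ref{path:decomposition} reverses this dictionary: the label set $L_{\lambda(t)}$ at each spine node becomes the bag at the corresponding path node, products and eliminations become introduce and forget steps, and the $[y_{k_2}]$-labels become the exception set $S$, so that $\max_p |\beta(p) \setminus S| \le k_1$ yields width $(k_1 - 1, k_2)$.

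The main obstacle I anticipate is not any single construction but the \emph{coherence of the four non-reusability conditions}: the exception set fixed under $\preceq_T$, the globally distinct $y$-pebbles of condition~\ref{lfc:non-reusable}, the once-placed non-reusable searchers, and the clause forbidding elimination of a $y$-label while it still occurs at an ancestor must all be shown to correspond under each translation. The reduction to a single frame $S$ via \Cref{prop:search:non-reusable:first} is what makes this tractable, but one must still verify that committing $S$ up front costs nothing — that an optimal decomposition can always be reshaped so its non-reusable resources occupy one fixed set — and that the slot-to-pebble bookkeeping in \ref{path:decomposition} $\Rightarrow$ \ref{path:cover} respects the connectivity axiom of path decompositions. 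Since the purely reusable core of each step is classical (pathwidth versus node searching, and the Dvořák--Fluck translation between decompositions and construction trees), both the novelty and the care lie in threading the non-reusable $y$-resources consistently through all four viewpoints.
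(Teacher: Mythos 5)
Your plan is, at its core, the paper's own: the same four formalisms are linked by the same ingredients you name --- nice decompositions via \Cref{lem:nice:decomp}, the ``commit the non-reusable resources to a single set up front'' principle via \Cref{prop:search:non-reusable:first}, the classical pathwidth/node-search correspondence for the reusable core on $G-S$, and the Dvořák--Fluck dictionary between (nice) path decompositions and construction caterpillars (the paper proves $\ref{path:construction}\Leftrightarrow\ref{path:decomposition}$, $\ref{path:decomposition}\Leftrightarrow\ref{path:cover}$, $\ref{path:decomposition}\Leftrightarrow\ref{path:game}$ rather than your cycle, and it does not need \Cref{prop:non-reusable:monotone} here, but that is only a reorganization). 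Your identification of the single exception set $S$ as the common ``fixed frame'' is exactly the right reading of the path model.

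There is, however, one step that fails as written. In $\ref{path:decomposition}\Rightarrow\ref{path:cover}$ you order the vertices ``by the moment they are forgotten'' and assign each vertex its searcher slot as its $x$-pebble. Take $G$ the path $u\text{--}v\text{--}w$ with $k_1=2$, $k_2=0$ and the nice decomposition $\{u\},\{u,v\},\{v\},\{v,w\}$: the forget order may come out as $u\prec_F w\prec_F v$ (the tie between $v$ and $w$ is unspecified in your recipe), and the slot assignment gives $p(u)=p(w)=x_1$, $p(v)=x_2$; the edge $uv$ together with $u\prec_F w\preceq_F v$ then violates the cover condition requiring $p(u)\neq p(w)$ for every $w$ strictly between the endpoints of an edge. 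The paper's construction orders the vertices by \emph{introduction} time $\tau(v)$ and defines $p(v)$ to avoid the pebbles of the other vertices present in $\beta(\tau(v))$; the connectivity of $\beta^{-1}(u)$ then guarantees that any $w$ with $\tau(u)\prec_P\tau(w)\preceq_P\tau(v)$ still has $u\in\beta(\tau(w))$, which is precisely what the cover condition needs. Relatedly, the phrase ``a linear forest whose paths correspond to the $k_1$ reusable slots'' cannot be meant literally: the paths of the cover must be the connected components of $G$ (adjacent vertices have to be $\preceq_F$-comparable), while the slots are encoded only in the pebbling function. With the ordering corrected, the remainder of your outline --- including the threading of the $y$-resources through condition~\ref{lfc:non-reusable}, the once-placed searchers, and the no-$y$-above-its-elimination clause --- matches the paper's argument.
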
 
\begin{proof}
    $\ref{path:construction} \Longleftrightarrow \ref{path:decomposition} :$

    Let $G$ be linearly $(k_1,k_2)$-constructible and let $(T, \lambda, r)$ be a (without loss of generality) binary $(k_1,k_2)$-construction caterpillar.
    The caterpillar $T$ contains a central path $P$ from $r$ to a leaf with maximum distance from $r$.
    For each node $p \in V(P)$ we define $\beta(p)$ to be the set of all labeled vertices in the graph $\lambda(p) \in \mathcal{G}_{[x_{k_1}, y_{k_2}]}$.
    We show that $(P, \beta, r)$ is a path decomposition of $G$ of width $(k_1,k_2)$.
    For the endpoint $\ell \in V(P)$ opposite from $r$ we define $S_{\ell}$ as the set of $[y_{k_2}]$-labeled vertices in $\lambda(\ell)$.
    Then $|S_{\ell}| \leq k_2$ and for every $p \preceq_P \ell$ it holds that $|\beta(p) \setminus S_{\ell}| \leq k_1$ since each label in $[y_{k_2}]$ is assigned to at most one vertex in $\bigcup_{p \preceq_T \ell}\beta(p)$.
    It remains to show that $(P, \beta, r)$ is a path decomposition.
    Since every leaf of the construction caterpillar must be fully labeled, for each edge $uv \in E(G)$ there exists a leaf $\ell \in V(T)$ such that $u,v \in V(\lambda(\ell))$.
    If $\ell$ is a forget node, then $\ell \in V(P)$ and otherwise $\ell$ is the child of a join node $p \in V(P)$ so $u,v \in V(\lambda(p)) = \beta(p)$.
    Also, $\beta^{-1}(v)$ is the set of nodes $p \in V(P)$ such that $p$ is labeled in $\lambda(p)$. Since labels can only be introduced at leaves and only labeled vertices are identified at product nodes, this set is connected in $P$.

    For the converse direction let $(P,r,\beta)$ be a nice path decomposition of width $(k_1-1,k_2)$ of $G$.
    The $(k_1,k_2)$-construction caterpillar $(T,r,\lambda)$ of $G$ is defined as follows:
    We obtain $T$ from $P$ by appending a new leaf $\ell_p$ at every introduce node of $p \in V(P)$ and extend $\beta$ to $T$ by setting $\beta(\ell_p) = \beta(p)$ for every introduce node $p \in V(P)$.
    For the assignment of labeled graphs let $t \in V(T)$ and set $\lambda(t) \coloneqq G[\bigcup_{t \preceq_{T} s} \beta(s)]$.
    For the labeling of $\lambda(t)$ construct a coloring $c \colon V(G) \to [x_{k_1}, y_{k_2}]$ by traversing the caterpillar $T$ from the root to the leaves. Whenever $t \in V(T)$ is a forget node with child $s$
    and $\beta(s) \setminus \beta(t) = \{v\}$ if $v \in S$, we set $c(v)$ to be the smallest (with respect to its index) element in $[y_{k_2}]$ that is not used in $\beta(t)$ and otherwise to be the smallest element in $[x_{k_1}]$ that is not used in $\beta(t)$.
    Now for each node $t \in V(T)$ the labeling of $\lambda(t)$ is given by ${c|_{\beta(t)}}^{-1}$.
    Then $(T, \lambda, r)$ is a $(k_1,k_2)$-construction caterpillar of $G$:
    We have $\lambda(r) = G[\bigcup_{r \preceq_{T'} s} \beta(s)] = G$.
    All leaves are assigned fully labeled graphs (empty graphs are fully labeled).
    Let $t \in V(T')$ be an internal node with exactly one child, then $t$ was a forget node in the nice path decomposition and now is an elimination node in the construction tree.
    Otherwise, $t$ was an introduce node and now has children $\ell_t$ and $s$ with $\lambda(\ell_t) = G[\beta'(t)]$ being fully labeled and $\lambda(s) = G[(\beta'(t) \setminus \{v\}) \cup \bigcup_{s \preceq_{T'} s'} \beta(s')]$.
    In particular, every product node in $T$ has most one child which is not a leaf.
    As the labelings of $\lambda(\ell_t)$ and $\lambda(s)$ coincide on $\beta'(t) \setminus \{v\}$, we obtain $\lambda(t) = \lambda(\ell_t) \lambda(s)$.
    By \Cref{lem:nice:decomp} after deleting a label $y \in [y_{k_2}]$ the same label does not occur at any node above.

    $\ref{path:decomposition} \Longleftrightarrow \ref{path:cover} :$

    Let $(P, \beta)$ be a nice path decomposition of width $(k_1-1,k_2)$ with exception set $S$ and let $P=p_1 \dotsc p_n$ inducing the linear ordering $p_1 \prec_P p_2 \prec_P \ldots \prec_P p_n$.
    We define $\tau \colon V(G) \to V(P), v \mapsto p_{\min \{i \in [n] : v \in \beta(p_i)\}}$, so for $v \in V(G)$ the node $\tau(v) \in P$ has smallest possible distance to the endpoint $p_1$.
    The function $\tau$ is injective since the path decomposition is nice.
    In particular, by~\Cref{lem:nice:decomp} the exception set $S$ is linearly ordered by setting $s_i \prec_S s_j$ exactly if $\tau(s_i) \prec_P \tau(s_j)$.
    Let $S = \{s_1 \ldots, s_{k_2}\}$ with $s_1 \prec_S \ldots \prec_S s_{k_2}$.
    We define $V(L) \coloneqq V(G)$ and set $u \prec_L v$ exactly if $\tau(u) \prec_P \tau(v)$ and $u,v$ are in the same connected component of $G$.
    Since $\preceq_P$ is a linear order and $\tau$ is injective also each component of $\preceq_L$ is a linear order and hence $L$ is a linear forest.
    Let the vector $\overline{r}$ contain the minimal elements of the order $\preceq_L$.
    The pebbling function $p \colon V(G) \to [x_{k_1}, y_{k_2}]$ is defined by induction on $\preceq_L$.
    Consider $v \in V(L)$ and assume that $p(v')$ is defined for all $v' \prec_L v$, then set
    $$ p(v) \coloneqq \begin{cases}
            \min [x_{k_1}] \setminus \{p(w) : w \in \beta(\tau(v)) \setminus \{v\}\} & \text{ if } v \notin S \\
             y_j & \text{ if } v = s_j \in S
        \end{cases} $$
    We argue that $(L, \overline{r}, p)$ is a linear $(k_1,k_2)$-pebble forest cover.
    For $uv \in E(G)$ it holds $\tau(u) \preceq_P \tau(v)$ or $\tau(v) \preceq_P \tau(u)$ since $\preceq_P$ is a linear order and $u,v$ are in the same connected component of $G$.
    Now let $uv \in E(G), w \in V(L)$ with $u \prec_L v$ and $u \prec_L w \preceq_L v$.
    Then $w$ must be in the connected component of $u,v$ and there exists a node $p_i \in V(P)$ with $u,v \in \beta(p_i)$ such that $\tau(u) \prec_P \tau(w) \preceq_P \tau(v) \preceq_P p_i$.
    Thus, $u \in \beta(\tau(v)) \cap \beta(p_i)$ and since $\beta^{-1}(u)$ is connected this yields $u \in \beta(\tau(w))$.
    By the definition of the pebbling function this implies $p(u) \neq p(w)$.
    Finally, let $u \in V(G)$ with $p(u) \in [y_{k_2}]$, so $u \in S$ by the definition of $p$.
    For $w \in V(G) \setminus \{u\}$ with $p(w) \in [x_{k_1}]$ this directly implies $p(u) \neq p(w)$. 
    If $p(w) \in [y_{k_2}]$ we have $u \neq w = s_i \in S$ and hence $p(u) \neq p(w)$ by the linear ordering of $S$. 

    Let $(F, \overline{r}, p)$ be a linear $(k_1,k_2)$-pebble forest cover of $G$. We define a path decomposition $(P, \beta)$ of $G$ as follows:
    First, assume that $G$ is connected. Then the forest $F$ consist of a single path with one distinguished endpoint $r$:
    Suppose that $F$ contains two connected components, since $G$ is connected there must be an edge between them and since $F$ is a forest cover, the endpoints must be in the same linear order. 
    We set $P \coloneqq F$ and for $v \in V(P)$ let 
    $$\beta(v) \coloneqq \{u \preceq_P v : \text{ for all } w \in V(G) \text{ it holds } u \prec_P w \preceq_P v \Rightarrow p(u) \neq p(w)\}.$$
    Then $\bigcup_{v \in V(P)} \beta(v) = V(G)$ and for each edge $uv \in E(G)$ it holds $u,v \in \beta(v)$ or $u,v \in \beta(u)$ and
    for every $v \in V(G)$ the set $\beta^{-1}(\{v\})$ precedes $v$ in $P$ and therefore is connected.
    We let $\ell$ be the endpoint of $P$ opposite from $r$ and set $S \coloneqq \{w \in V(P): w \preceq_P \ell,~ p(w) \in [y_{k_2}]\}$.
    Then $|S| \leq k_2$ and $\max_{t \preceq_P \ell} |\beta(t) \setminus S| \leq k_1$.
    Now let $G$ have connected components $C_1,\dotsc,C_m$. By the previous construction, we obtain path decompositions $(P_1,\beta_1),\dotsc,(P_k,\beta_m)$ of the connected components each of width $(k_1-1, k_{2,m})$.
    Then $(P, \beta)$ with $P \coloneqq P_1\dotsc P_m$ and $\beta(v) \coloneqq \beta_i(v)$ for the unique index $i \in [m]$ with $v \in \beta_i(P-i)$ is a path decomposition for $G$ of width $(k_1-1, k_{2,m})$:
    We have $\bigcup_{v \in V(P)} \beta(v) = \bigcup_{i=1}^k \bigcup_{v \in V(P_i)} \beta_i(v) = \bigcup_{i=1}^k V(C_i) = V(G)$,
    each edge $uv \in E(G)$ is contained in some component $C_i$ and hence in a bag of $(P_i, \beta_i)$, and
    for all $v \in V(G)$ the set $\beta^{-1}(v) = \beta^{-1}_i(v)$ is connected in $P_i$ and thus in $P$.
    Moreover, for each decomposition $(P_i, \beta_i)$ there exists a set $S_i$ of exceptions from which we define $S \coloneqq \bigcup_{i=1}^m S_i$ with $|S| = \sum_{i=1}^m k_{2,i} \leq k_2$.
    Finally, $\max_{t \preceq_P \ell_m} |\beta(t) \setminus S| = \max_{i \in [m]}\max_{t \preceq_{p_i} \ell_i} |\beta(t) \setminus S_i| \leq k_1$.

    $\ref{path:decomposition} \Longleftrightarrow \ref{path:game} :$

    Let $(P,\beta)$ be a nice path decomposition for $G$ of width $(k_1-1, k_2)$.
    The winning strategy for the searchers in $\NS^{(k_1,k_2)}(G)$ is as follows:
    In the first $k_2$ rounds of the game place the non-reusable searchers successively on the elements of $S$.
    We let $P=p_1 \dotsc p_n$ and for $i \in [n]$ the placement of the searchers in round $k_2+i$ be $S \cup \beta(p_i)$.
    Since the path decomposition is nice this yields a (re)placement of exactly one searcher in every round
    and the number of reusable searchers placed in round $i$ is bounded by $|\beta(p_i)| \leq k_1$.
    For every edge $p_ip_{i+1} \in E(P)$ the searcher-occupied set $S \cup (\beta(p_i) \cap \beta(p_{i+1}))$ separates $\cup_{j=1}^i \beta(p_j)$ from $\cup_{j=i+1}^n \beta(p_j)$.
    Thus, in each round the fugitive is forced to play in $\cup_{j=i+1}^n \beta(p_j)$ and caught eventually since every edge $uv \in E(G)$ is contained in some bag $\beta(p_j)$.

    We turn a winning strategy for the searchers in $\NS^{(k_1,k_2)}(G)$ into a path decomposition $(P, \beta)$ of $G$ with width $(k_1-1, k_2)$:
    By \Cref{prop:search:non-reusable:first} we may assume that in the winning strategy, the searchers first place the $k_2$ non-reusable searchers 
    on a set $I \subseteq V(G)$ with $|I|\leq k_2$ and then have a winning strategy for the game $\NS^{(k_1,0)}(G-I)$.
    By \cite[5.1]{Bienstock1991} there is a path decomposition $P$ of width $(k_1-1,0)$ for the graph $G-I$. 
    We define the set of exceptions $S$ to be the same as $I$ and infer that $P$ with $I$ added to each bag is a path decomposition of width $(k_1-1,k_2)$ for $G$. 
\end{proof}

\begin{theorem} \label{thm:tree:characterization}
  For a graph $G$ and $k_1,k_2,q \in \mathbb{N}$ with $k_1+k_2,q \geq 1$ the following are equivalent:
  \begin{enumerate}
    \item\label{tree:decomposition}  $G\in \mathcal{T}^{(k_1,k_2)}_q$, i.e., $G$ has a tree decomposition of width $(k_1-1,k_2)$ and depth~$q$.
    \item\label{tree:cover}          $G$ admits a $(k_1,k_2)$-pebble forest cover of depth $q$.
    \item\label{tree:game}           The cops have a winning strategy for the game $\CR^{(k_1,k_2)}_q(G)$.
    \item\label{tree:construction}   $G \in \mathcal{LT}^{(k_1,k_2)}_q$, i.e., $G$ is $(k_1,k_2,q)$-constructible.
  \end{enumerate}
\end{theorem}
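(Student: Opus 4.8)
The plan is to mirror the structure of the proof of \Cref{thm:path:characterization}, proving each of \ref{tree:cover}, \ref{tree:game}, and \ref{tree:construction} equivalent to the tree-decomposition formulation \ref{tree:decomposition} while carefully tracking how the depth parameter $q$ manifests in each description. Throughout I would pass to nice decompositions via \Cref{lem:nice:decomp}, so that introduce/forget/join nodes change bags by at most one vertex and exceptions behave well along every branch. The conceptual difference to the path case is that the linear orders are replaced by the tree order $\preceq_T$ and a forest order $\preceq_F$, branching being handled by product nodes on the construction side and join nodes on the decomposition side, and that the single global exception set of a path decomposition becomes the per-leaf family $S_\ell$. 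For $\ref{tree:construction} \Longleftrightarrow \ref{tree:decomposition}$ I would keep the full branching of the construction tree: contract a nice tree decomposition into a construction tree by turning forget nodes into elimination nodes and join nodes into product nodes, setting $\lambda(t) = G[\bigcup_{t \preceq_T s} \beta(s)]$ with the induced labeling, and conversely read a construction tree as a decomposition by letting $\beta(t)$ be the set of labeled vertices of $\lambda(t)$. The key bookkeeping is that a root-to-leaf path forgets each of its vertices exactly once, so the number of elimination nodes on that path equals $|\bigcup_{t \preceq_T \ell} \beta(t)|$, matching elimination depth $q$ with decomposition depth $q$; the constraint that an eliminated label $y \in [y_{k_2}]$ never reappears above corresponds, through \Cref{lem:nice:decomp}, to the non-reusability condition $S_\ell \cap \beta(t) \subseteq \beta(t')$ on exceptions.

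For $\ref{tree:decomposition} \Longleftrightarrow \ref{tree:cover}$ I would take a nice rooted decomposition with empty root bag and define $\tau(v)$ to be the $\preceq_T$-minimal node of the connected set $\beta^{-1}(v)$; niceness makes $\tau$ injective. The forest $F$ on $V(G)$ is ordered by $u \prec_F v$ exactly if $\tau(u) \prec_T \tau(v)$ and $u,v$ share a bag, and the pebbling function $p$ is built by induction along $\preceq_F$, assigning each non-exception vertex the least reusable pebble in $[x_{k_1}]$ not occupied by the other vertices of its bag and each exception vertex a dedicated pebble in $[y_{k_2}]$. Conditions (1)--(2) of a pebble forest cover then follow from the edge and connectivity axioms exactly as in the linear case, condition (3) is precisely the non-reusability of exceptions below their vertex, and the height of $(F, \overline{r})$ equals $q$ because a maximal chain in $\preceq_F$ realizes $\max_v |\bigcup_{t \preceq_T v} \beta(t)|$. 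The converse reverses this, reading off $\beta(v) = \{u \preceq_F v : p(u) \neq p(w) \text{ for all } u \prec_F w \preceq_F v\}$ and taking the exception set to be the $[y_{k_2}]$-pebbled vertices.

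For $\ref{tree:decomposition} \Longleftrightarrow \ref{tree:game}$ I would translate the chase against a visible robber into a descent of the decomposition: the cops keep the non-reusable cops on the exceptions of the branch the robber enters and maintain the reusable cops on the current bag minus exceptions, relocating one cop per round toward the robber's component. The separators $\im(\gamma)$ shrink the robber's region, and the number of rounds is controlled by the number of vertices committed (introduced) along the chosen branch, i.e. by the depth $q$. Conversely, given a winning cop strategy I would first pass to a \emph{monotone} one using \Cref{prop:non-reusable:monotone} and fix the non-reusable cops as in the reasoning behind \Cref{prop:search:non-reusable:first}, then take the game tree of the monotone strategy as the underlying tree with $\beta$ recording the cop positions $\im(\gamma)$: monotonicity yields the connectivity axiom, the round bound yields depth $q$, and the non-reusable cops become the exception set.

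The step I expect to be the main obstacle is the precise matching of the depth parameter $q$ across all four descriptions simultaneously, in particular the correspondence between the $q$-round bound of $\CR^{(k_1,k_2)}_q$ and the decomposition depth $\max_v |\bigcup_{t \preceq_T v} \beta(t)|$, together with verifying that the per-branch, non-reusable behaviour of the exceptions is transported faithfully through every translation. Monotonicity from \Cref{prop:non-reusable:monotone} is essential for the game-to-decomposition direction, since without it the cop positions need not assemble into a genuine (connected) tree decomposition.
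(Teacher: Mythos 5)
Most of your plan coincides with the paper's proof: the translations $\ref{tree:construction} \Leftrightarrow \ref{tree:decomposition}$ and $\ref{tree:decomposition} \Leftrightarrow \ref{tree:cover}$ are carried out exactly as you describe (nice decompositions via \Cref{lem:nice:decomp}, $\beta(t)$ as the labeled vertices of $\lambda(t)$, $\tau$ mapping each vertex to the highest bag containing it, elimination depth matching $\max_v |\bigcup_{t \preceq_T v}\beta(t)|$), and the cops' winning strategy is obtained from the cover/decomposition in essentially the way you sketch (the paper routes $\ref{tree:cover} \Rightarrow \ref{tree:game}$ through the forest cover rather than the bags, but the content is the same).

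There is, however, a genuine gap in your direction $\ref{tree:game} \Rightarrow \ref{tree:decomposition}$: you propose to ``fix the non-reusable cops as in the reasoning behind \Cref{prop:search:non-reusable:first}.'' That proposition is specific to the node-searching game $\NS^{(k_1,k_2)}$, and its proof hinges on the fugitive being \emph{invisible}: the searchers' strategy is a single fixed sequence, so the final positions of the non-reusable searchers are known in advance and can be placed first. For $\CR^{(k_1,k_2)}_q$ the robber is visible, the strategy is a tree, and the positions at which non-reusable cops are spent genuinely depend on the robber's choices; the paper explicitly emphasizes (citing the proof of \cite[Theorem 15]{Rassmann2025}) that the analogue of \Cref{prop:search:non-reusable:first} fails for the cops-and-robber game, and this is precisely the effect that forces tree decompositions of width $(k_1,k_2)$ to carry a \emph{per-leaf} family of exception sets $S_\ell$ rather than one global set. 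The correct construction, as in the paper, is to take the game tree of a monotone winning strategy (monotonicity from \Cref{prop:non-reusable:monotone}, as you say), let $\beta$ record the cop positions $S(\gamma, C)$ at each node, and then define $S_\ell$ separately for each leaf $\ell$ as the set of $[y_{k_2}]$-cop positions occurring along the branch to $\ell$; the width bound $\max_{t \preceq_T \ell}|\beta(t)\setminus S_\ell| \le k_1$ then holds branchwise because along any single play only reusable cops are moved more than once. Your closing sentence (``the non-reusable cops become the exception set,'' singular) together with the appeal to \Cref{prop:search:non-reusable:first} suggests a global normalization that does not exist here; once you replace it with the per-branch bookkeeping, the argument goes through.
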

\begin{proof}
  We follow the proofs of \cite[Theorem 14, Lemma 16]{Fluck2023} and show how their constructions preserve reusability constraints:

    $\ref{tree:construction} \Longleftrightarrow \ref{tree:decomposition}:$

    Let $G$ be $(k_1,k_2,q)$-constructible and let $(T, \lambda, r)$ be a $(k_1,k_2)$-construction tree of elimination depth $q$.
    For each node $t \in V(T)$ we define $\beta(t)$ to be the set of all labeled vertices in the graph $\lambda(t) \in \mathcal{G}_{[x_{k_1}, y_{k_2}]}$.
    We show that $(T, \beta, r)$ is a tree decomposition of $G$ of width $(k_1,k_2)$ and depth $q$.
    Concerning the depth of the decomposition, the number of vertices contained in bags on a path from $r$ to a leaf in $T$ is exactly the number of elimination nodes on the path which is $q$.
    For each leaf $\ell \in V(T)$ we define $S_{\ell}$ as the set of $[y_{k_2}]$-labeled vertices in $\lambda(\ell)$.
    For the width we have $|S_{\ell}| \leq k_2$ and for every $t \preceq_T \ell$ it holds that $|\beta(t) \setminus S_{\ell}| \leq k_1$ since each label in $[y_{k_2}]$ is assigned to at most one vertex in $\bigcup_{t \preceq_T \ell}\beta(t)$.
    Finally, $(T, \beta, r)$ is a tree decomposition by \cite[Appendix A.1]{Fluck2023}.

    For the converse direction let $(T,r,\beta)$ be a nice tree decomposition of width $(k_1,k_2)$ and depth $q$.
    We construct a new tree decomposition $(T', r, \beta')$ for which at every introduce node $t \in V(T)$ we append a new leaf $\ell_t$ to obtain $T'$.
    For every $t \in V(T')$ we now set $\lambda(t) \coloneqq G[\bigcup_{t \preceq_{T'} s} \beta(s)]$, which is the induced subgraph of $G$ on all vertices that occur in bags $\beta(s)$ for $t \preceq_{T'} s$.
    We define the bag $\beta'(t) = \beta(t)$ for $t \in V(T)$ and $\beta'(\ell_t) = \beta(t)$ for the new leaves.
    The explicit labeling is then given by the inverse of the injective function $c \colon V(G) \to [x_{k_1}, y_{k_2}]$ that assigns labels as follows:
    We traverse the tree $T'$ from the root to the leaves. Whenever $t$ is a forget node with child $s$ and $\beta'(s) \setminus \beta'(t) = \{v\}$ if $v \in S_\ell$ for some $t \preceq_{T'} \ell$, we set $c(v)$ to be the smallest element in $[y_{k_2}]$ that is not used in $\beta'(t)$
    and otherwise to be the smallest element in $[x_{k_2}]$ that is not used in $\beta'(t)$.
    Now for each $t \in V(T')$ the labeling of $\lambda(t)$ is given by ${c|_{\beta(t)}}^{-1}$.
    Then $(T', \lambda, r)$ is a $(k_1,k_2)$-construction tree of $G$ of depth $q$:
    By \Cref{lem:nice:decomp} after deleting a label $y \in [y_{k_2}]$ it does not occur at at any node above.
    We have $\lambda(r) = G[\bigcup_{r \preceq_{T'} s} \beta(s)] = G$.
    All leaves are assigned fully labeled graphs (empty graphs are fully labeled).
    Let $t \in V(T')$ be an internal node with exactly one child, then $t$ was a forget node in the nice tree decomposition and now is an elimination node in the construction tree.
    If $t \in V(T')$ was a join node in the nice tree decomposition with two children $s_1,s_2$ and then $\lambda(t) = \lambda(s_1)\lambda(s_2)$.
    Otherwise, $t$ was an introduce node and now has children $\ell_t$ and $s$ with $\lambda(\ell_t) = G[\beta'(t)]$ being fully labeled and $\lambda(s) = G[(\beta'(t) \setminus \{v\}) \cup \bigcup_{s \preceq_{T'} s'} \beta(s')]$.
    As the labelings of $\lambda(\ell_t)$ and $\lambda(s)$ coincide on $\beta'(t) \setminus \{v\}$, we obtain $\lambda(t) = \lambda(\ell_t) \lambda(s)$.

    $\ref{tree:decomposition} \Longleftrightarrow \ref{tree:cover}:$

    Let $(T, r, \beta)$ be a rooted tree decomposition of $G$ of width $(k_1-1,k_2)$ and depth $q$. Again assume that this decomposition is nice.
    For each $v \in V(G)$ let $\tau(v)$ be the unique node in $T$ such that $v \in \beta(\tau(v))$ and $\tau(v)$ has the smallest possible distance to the root.
    Then $\tau \colon V(G) \to V(T)$ is well-defined and injective since the decomposition is nice.
    The forest cover $(F, \overline{r})$ of $G$ is then defined as follows:
    We define $V(F) \coloneqq V(G)$ and set $u \preceq_F v$ exactly if $\tau(u) \preceq_T \tau(v)$, so $uv \in E(F)$ exactly if $\tau(u)\tau(v) \in E(T)$.
    Let the vector $\overline{r}$ contain exactly the minimal elements of the order $\preceq_F$.
    The pebbling function $p \colon V(G) \to [x_{k_1}, y_{k_2}]$ is defined by induction on $\preceq_F$.
    Assume that $p(v')$ is defined for all $v' \prec_F v$, then set
    $$ p(v) \coloneqq \begin{cases}
            \min [x_{k_1}] \setminus \{p(v') : v' \in \beta(\tau(v)) \setminus \{v\}\} & \text{ if } v \notin S_{\ell} \text{ for all leaves } \ell \text{ with } v \prec_F \ell \\
            \min [y_{k_2}] \setminus \{p(v') : v' \in \beta(\tau(v)) \setminus \{v\}\} & \text{ if } v \in S_{\ell} \text{ for some leaf } \ell \text{ with } v \prec_F \ell
        \end{cases} $$
    We argue that $(F, \overline{r}, p)$ is a $(k_1,k_2)$-pebble forest cover of depth $q$.
    If $uv \in E(G)$ there exists $t \in V(T)$ with $u,v \in \beta(t)$. Then it holds that $\tau(u), \tau(v) \preceq_T t$ and because $T$ is a tree we have $\tau(u) \preceq_T \tau(v)$ or $\tau(v) \preceq_T \tau(u)$.
    This yields $u \preceq_F v$ or $v \preceq_F u$.
    Now let $uv \in E(G)$ with $u \prec_F v$ and $u \prec_F w \preceq_F v$.
    Then there exists a node $t \in V(T)$ with $u,v \in \beta(t)$ such that $\tau(u) \prec_T \tau(w) \preceq_T \tau(v) \preceq_T t$.
    Thus, $u \in \beta(\tau(v)) \cap \beta(t)$ and since $\beta^{-1}(u)$ is connected this yields $u \in \beta(\tau(w))$.
    By the definition of the pebbling function this yields $p(u) \neq p(w)$.
    Finally, let $u \in V(G)$ with $p(u) \in [y_{k_2}]$ and $u \prec_F w$.
    Then $u \in S_{\ell}$ for some leaf $\ell$ with $u \prec_F \ell$ and by monotonicity $u \in \beta(\tau(u)) \cap S_\ell \subseteq \beta(\tau(w))$
    and thus $u \in \beta(\tau(w)) \setminus \{w\}$ which yields $p(u) \neq p(w)$.

    Let $(F, \overline{r}, p)$ be a $(k_1,k_2)$-pebble forest of depth $q$ of $G$.
    The tree decomposition $(T, r', \beta)$ is defined as follows:
    the root $r'$ is a new node that is connected to all roots in $\overline{r}$ and the corresponding trees.
    We set $\beta(r') \coloneqq \emptyset$ and for every $t \in V(F)$ set $\beta(t) \coloneqq \{u \preceq_F t : \text{ for all } w \in V(G), u \prec_F w \preceq_F t \Rightarrow p(u) \neq p(w)\}$.
    Then $\bigcup_{t \in V(T)} \beta(t) = V(G)$, for each edge $uv \in E(G)$ it holds $u,v \in \beta(v)$ or $u,v \in \beta(u)$ and
    for every $v \in V(G)$ the set $\beta^{-1}(\{v\})$ is above $v$ in $F$ and hence connected.
    For each leaf $\ell \in V(T)$ we set $S_{\ell} \coloneqq \{w \in V(T): w \preceq_T \ell,~ p(w) \in [y_{k_2}]\}$.
    Then $|S_{\ell}| \leq k_2$ and $\max_{t \preceq_T \ell} |\beta(t) \setminus S_{\ell}| \leq k_1$.

    $\ref{tree:cover} \Longrightarrow \ref{tree:game}$:

    Let $(F, \overline{r}, p)$ be a $(k_1,k_2)$-pebble forest cover of depth $q$ of $G$.
    We construct a monotone winning strategy for the cops for the game $\CR^{(k_1,k_2)}_q(G)$ such that for every position $(\gamma, v)$ it holds that
    \begin{enumerate}
        \item\label{invariant:prec} $\im(\gamma) \preceq_F v$
        \item\label{invariant:image} $\im(\gamma) = \{u \preceq_F \max \im(\gamma) : \forall w \in V(G) ~ u \prec_F w \preceq_F \max \im(\gamma) \Rightarrow p(u) \neq p(w)\}$
    \end{enumerate}
    Let $(\gamma, v)$ be the position at the beginning of a round.
    We let $w$ be the minimal vertex with $\im(\gamma) \prec_F w \preceq v$ and if $p(w) \notin [y_{k_2}]$ let $\gamma(x_i) \in \im(\gamma)$ such that $p(w) = p(\gamma(x_i))$.
    Then the strategy of the cops is to choose the new position
    $$ \gamma' \coloneqq \begin{cases}
            \gamma[p(w) / w] & \text{ if } p(w) \in [y_{k_2}] \\
            \gamma[x_i / w]  & \text{ otherwise }
        \end{cases} $$
    That is, the cop $z \in \{p(w), x_i\}$ is picked up, then the robber moves to a vertex $v'$ in $G - \im(\gamma[z/w])$ and the new position is given by $(\gamma', v')$.
    Note that if $p(w) \in [y_{k_2}]$, then no ancestor of $w$ holds the same pebble and hence the cop $p(w)$ has not been used so far.
    If $\im(\gamma) = \emptyset$ the vertex $w$ is the root of some tree in $F$ and otherwise the child of $\max \im(\gamma)$.
    Thus, condition \ref{invariant:image} is also satisfied for $\gamma'$.
    Now assume that there is a path from $w$ to $v'$ in $G[\im(\gamma(z / \bot))]$ and $w \notpreceq_F v'$.
    On this path there must be a vertex that comes before $w$ in the order $\preceq_F$ as $(F, \overline{r})$ is a forest cover of $G$.
    Thus, there exists an edge $u_1u_2 \in E(G)$ on the path from $w$ to $e'$ in $G[\im(\gamma(z / \bot))]$ with $u_1 \prec_F \max \im(\gamma) \prec_F w \preceq_F u_2$.
    Since $u_1 \notin \im(\gamma)$ we obtain from \ref{invariant:image} that there exists some $u_3 \in \im(\gamma)$ with $p(u_1) = p(u_3)$ and $u_1 \prec_F u_3 \prec_F w \preceq_F u_2$,
    which is a contradiction to $(F, \overline{r}, p)$ being a $(k_1,k_2)$-pebble forest cover.

    $\ref{tree:game} \Longrightarrow \ref{tree:decomposition}$:

    Assume that the cops have a monotone winning strategy for the game $\CR^{(k_1,k_2)}_q(G)$.
    From a position $(\gamma, v)$ of the game, the move of the robber only depends on the component $C^{\gamma}_v$ rather than the vertex $v$.
    We set $S(\gamma, C^{\gamma}_v)$ to be the set of cop positions in the round after the robber chose the edge $v$.
    That is, $S(\gamma, C^{\gamma}_v) = \im(\gamma')$ whenever $(\gamma', v')$ is the next position.
    W.l.o.g we assume that for all cop positions $\gamma$ and $v' \in C^{\gamma}_v$ it holds that the cops always place the next cop in $C^{\gamma}_v$ and $S(\gamma, C^{\gamma}_v) = S(\gamma, C^{\gamma}_{v'})$.
    We turn the strategy of the cops into a tree decomposition of $G$ as follows:
    \begin{enumerate}
        \item $\beta(r) \coloneqq \emptyset$
        \item Let $C_1,\dotsc,C_p$ be the connected components of $G$, then add vertices $t_1,\dotsc,t_p$ as children of $r$ to $T$ and set $\lambda(t_i) \coloneqq (\emptyset, V(C_i))$, $\beta(t_i) \coloneqq S(\emptyset, C_i)$
        \item For $s \in V(T) \setminus \{r\}$ with $\lambda(s) = (X, C)$ let $C_1, \dotsc, C_p$ be the connected components of $C \setminus S(X,C)$, then add children $t_1,\dotsc,t_p$ to $s$ with $\lambda(t_i) \coloneqq (S(X,C), V(C_i))$
              and $\beta(t_i) \coloneqq S(S(X,C), V(C_i))$
        \item For each leaf $\ell \in V(T)$ we set $S_{\ell}$ to be the set of $[y_{k_2}]$-cop positions occurring in $\bigcup_{t \preceq_T \ell} \beta(t)$
    \end{enumerate}
    Then $(T,r,\beta)$ is a tree decomposition of width $(k_1,k_2)$ and depth $q$:
    Regarding the depth of the decomposition, note that each edge corresponds to moving a single cop and hence from the empty root to any leaf at most $q$ cop positions may be introduced.
    For the width we have the following:
    Each leaf $\ell \in V(T)$ corresponds to a final position of the game and the vertices in $S_{\ell}$ are precisely the non-reusable cops that are placed in this position.
    As otherwise only reusable cops are placed in this position, it holds $\max_{t\preceq_T \ell} |\beta(t) \setminus S_{\ell}| \leq k_1$.
    That $(T,r,\beta)$ is a tree decomposition is shown in \cite[Appendix A.2]{Fluck2023}.
\end{proof}

\begin{remark} \label{rem:hierarchy}
  Regarding inclusions between classes $\mathcal{T}^{(k_1,k_2)}$ (or $\mathcal{P}^{(k_1,k_2)}$) for varying parameters $k_1$ and $k_2$ we observe that 
  the proof of \cite[Theorem 13]{Rassmann2025} can be used to show a complete classification of all inclusions, which in particular separates $\mathcal{T}^{(k_1,k_2)}$
  from $\mathcal{T}^{(k'_1,k'_2)}$ for $(k_1,k_2) \neq (k'_1,k'_2)$ (and likewise for $\mathcal{P}^{(k_1,k_2)}$, except that $\mathcal{P}^{(1,k_2)} \not\subseteq \mathcal{P}^{(0, k'_2)}$ for all $k'_2$ by \Cref{prop:search:non-reusable:first}).
\end{remark}

We observe that classes the $\mathcal{T}^{(k_1,k_2)}_q$ and $\mathcal{P}^{(k_1,k_2)}$ admit usual closure properties, except that $\mathcal{P}^{(k_1,k_2)}$ is not closed under taking disjoint unions. 

\begin{proposition} \label{prop:closure}
  Let $k_1+k_2 \in \mathbb{N}_+$.
  The class $\mathcal{T}^{(k_1,k_2)}_q$ is closed under taking disjoint unions, summands, and minors. 
  The class $\mathcal{P}^{(k_1,k_2)}$ is closed under taking summands and minors but not under taking disjoint unions for $k_2 \geq 1$. 
\end{proposition}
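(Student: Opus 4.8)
The plan is to handle the two models in parallel, reducing everything possible to standard manipulations of tree and path decompositions and isolating the two places where the reusable width (encoded by the exception sets) and the depth bound require attention. I first note that summand closure follows from minor closure in both classes: if $G_1 \dotcup G_2$ lies in either class, then $G_1 = (G_1 \dotcup G_2) - V(G_2)$ arises by deleting vertices and is thus a minor, so proving minor closure settles summands as well.

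For minor closure I would check the three elementary minor operations on a fixed decomposition, the same manipulations applying verbatim to tree and path decompositions (the underlying tree or path being left intact). Deleting a vertex $x$ removes $x$ from every bag and from each exception set $S_\ell$; deleting an edge leaves the decomposition unchanged; both clearly preserve connectedness, the depth $\max_v \lvert \bigcup_{t \preceq_T v}\beta(t)\rvert$, and the bound $\lvert\beta(t)\setminus S_\ell\rvert \le k_1$. The only delicate case is contracting an edge $uv$ to a new vertex $w$: apply the merging map $\phi$ with $\phi(u)=\phi(v)=w$ and the identity elsewhere to every bag and every $S_\ell$. Connectedness of $\beta^{-1}(w) = \beta^{-1}(u)\cup\beta^{-1}(v)$ holds because $uv \in E(G)$ forces a common bag, and the depth cannot increase since $\phi$ never enlarges any union $\bigcup_{t\preceq_T v}\beta(t)$. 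For the reusable width the decisive point is the elementary inequality $\lvert\phi(A)\setminus\phi(B)\rvert \le \lvert A\setminus B\rvert$, valid for all sets $A,B$: choosing for each element of $\phi(A)\setminus\phi(B)$ a preimage in $A$, that preimage necessarily lies in $A\setminus B$, which yields an injection. Taking $A = \beta(t)$ and $B = S_\ell$ shows that the width $(k_1-1,k_2)$ and, for the tree model, the depth are preserved; this is the main technical care-point of the closure direction.

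For disjoint-union closure of $\mathcal{T}^{(k_1,k_2)}_q$ I would attach the rooted tree decompositions of $G_1$ and $G_2$ as the two subtrees of a fresh root $r$ with $\beta(r) = \emptyset$. Since the empty root contributes nothing to any $\bigcup_{t\preceq_T v}\beta(t)$, the depth remains at most $q$; each leaf $\ell$ lies in exactly one of the two subtrees, so keeping its original exception set gives $\max_{t\preceq_T \ell}\lvert\beta(t)\setminus S_\ell\rvert \le k_1$; and connectedness is immediate, as no vertex occurs in both subtrees. Together with the previous paragraph this establishes closure of the tree classes under disjoint unions, summands, and minors, and closure of $\mathcal{P}^{(k_1,k_2)}$ under summands and minors.

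The substantive part is the failure of disjoint-union closure for $\mathcal{P}^{(k_1,k_2)}$ when $k_2 \ge 1$, which reflects that a path decomposition carries a single global exception set rather than one per branch. The membership criterion I would use is that deleting $S$ from every bag turns a width-$(k_1-1,k_2)$ path decomposition with exception set $S$ into a width-$(k_1-1)$ path decomposition of $G - S$, so $\operatorname{pw}(G-S) \le k_1 - 1$ whenever $G \in \mathcal{P}^{(k_1,k_2)}$. Now take $G = K_{k_1+k_2}$. Its single-bag decomposition with any $k_2$ of its vertices chosen as exceptions has width $(k_1-1,k_2)$, so $G \in \mathcal{P}^{(k_1,k_2)}$. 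For the union $H = G \dotcup G$, any admissible exception set $S$ splits as $S_1 \dotcup S_2$ over the two copies with $\lvert S_1\rvert + \lvert S_2\rvert \le k_2$; since $\operatorname{pw}(K_n)=n-1$ and the pathwidth of a disjoint union is the maximum of the pathwidths, $\operatorname{pw}(H - S) \le k_1 - 1$ would force $\lvert S_i\rvert \ge k_2$ for both $i$, hence $\lvert S\rvert \ge 2k_2 > k_2$ as $k_2 \ge 1$, a contradiction. Thus $H \notin \mathcal{P}^{(k_1,k_2)}$. Recognizing that the global budget $k_2$ cannot be shared between two components each consuming it in full is the crux of the proposition.
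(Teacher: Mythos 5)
The paper states \Cref{prop:closure} without proof, so there is no argument of the paper's to compare yours against; judged on its own, your proof is correct and complete. The reduction of summand closure to minor closure, the check of the three elementary minor operations (with the injection witnessing $|\phi(A)\setminus\phi(B)|\le|A\setminus B|$ handling the only nontrivial point, preservation of the exception-set width under contraction), and the fresh-empty-root construction for disjoint unions of bounded-depth tree decompositions are all the standard manipulations and go through exactly as you describe. The substantive part, the counterexample $K_{k_1+k_2}\dotcup K_{k_1+k_2}$, is also correct: your membership criterion $\operatorname{pw}(G-S)\le k_1-1$ for some $|S|\le k_2$ is a valid necessary condition for $G\in\mathcal{P}^{(k_1,k_2)}$, and the budget argument $|S|\ge 2k_2>k_2$ is exactly the reason the single global exception set fails for disjoint unions (consistent with the paper's introduction of $\dotcup\mathcal{P}^{(k_1,k_2)}$ and with \Cref{prop:path:component:characterization}). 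The only cosmetic caveat is the degenerate case $k_1=0$, where the bound $\operatorname{pw}(H-S)\le -1$ must be read as forcing $H-S$ to be empty; your argument still yields $|S_i|\ge k_2$ there.
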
 

The class $\mathcal{P}^{(k_1,k_2)}$ formalizes the notion of reusability for path decompositions in an appropriate sense as exemplified by the characterization through $\NS^{(k_1,k_2)}$.
However, to overcome the obstacle that this class is not closed under disjoint unions we define the class $\dot{\cup}\mathcal{P}^{(k_1,k_2)}$ as the closure of $\mathcal{P}^{(k_1,k_2)}$ under disjoint unions.
This class formalizes the notion of \emph{componentwise} restricted reusability on a graph. The next proposition, which follows directly from \Cref{thm:path:characterization}, makes this explicit.

\begin{proposition}\label{prop:path:component:characterization}
    For a graph $G$ and $k_1+k_2 \in \mathbb{N}_+$ the following are equivalent:
    \begin{enumerate}
      \item\label{path:comp:decomposition} $G\in \dot{\cup}\mathcal{P}^{(k_1,k_2)}$, i.e., $G$ has a path decomposition of component width $(k_1-1,k_2)$.
      \item\label{path:comp:cover}         $G$ admits a linear $(k_1, k_2)$-pebble component forest cover.
      \item\label{path:comp:game}          For each component $C \in \mathcal{C}_G$ the searchers have a winning strategy for $\NS^{(k_1,k_2)}(G[C])$.
    \end{enumerate}
\end{proposition}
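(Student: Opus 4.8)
The plan is to reduce everything to connected components and to invoke \Cref{thm:path:characterization} on each one separately. The guiding observation is that all three conditions are \emph{local} to the connected components of $G$: I will show that each condition for $G$ is equivalent to the corresponding condition holding simultaneously for every induced subgraph $G[C]$ with $C \in \mathcal{C}_G$. Since each $G[C]$ is connected, \Cref{thm:path:characterization} then identifies the three componentwise conditions with one another, and chaining these equivalences yields the proposition. Condition \ref{path:comp:game} is already stated componentwise by definition, so it requires no reformulation.

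First I would treat condition \ref{path:comp:decomposition}. For the forward direction, given a path decomposition $(P,\beta)$ of $G$ of component width $(k_1-1,k_2)$ with exception sets $S_C \subseteq C$, I restrict to a fixed component by setting $\beta_C(p) \coloneqq \beta(p) \cap C$; this is a path decomposition of the connected graph $G[C]$. Because each $S_{C'}$ is contained in its own component $C'$, any vertex of $\beta(p) \cap C$ lying outside $S_C$ already lies outside $\bigcup_{C' \in \mathcal{C}_G} S_{C'}$, so $|\beta_C(p) \setminus S_C| \le |\beta(p) \setminus \bigcup_{C'} S_{C'}| \le k_1$ and thus $G[C] \in \mathcal{P}^{(k_1,k_2)}$. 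Conversely, given path decompositions of each $G[C]$ witnessing $G[C] \in \mathcal{P}^{(k_1,k_2)}$, concatenating them into a single path $P$ produces a decomposition of $G$ in which every bag comes from exactly one component's path; the union of the individual exception sets then witnesses component width $(k_1-1,k_2)$. Together with \Cref{prop:closure} (closure under disjoint unions and summands), this shows that \ref{path:comp:decomposition} holds for $G$ if and only if $G[C] \in \mathcal{P}^{(k_1,k_2)}$ for all $C \in \mathcal{C}_G$.

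Next I would handle condition \ref{path:comp:cover}. The key point is that in any forest cover the two endpoints of an edge are $\preceq_F$-comparable and hence lie in a common path of $F$; by connectivity, each $G[C]$ therefore occupies a single path of $F$. Restricting a linear $(k_1,k_2)$-pebble component forest cover to such a path gives a linear $(k_1,k_2)$-pebble forest cover of $G[C]$, since the relaxed non-reusability clause of the component version already demands distinctness of each $[y_{k_2}]$-pebble within its own path, which is exactly the global condition (\ref{lfc:non-reusable}) for the connected graph $G[C]$. The converse is immediate: taking the disjoint union of per-component linear forest covers yields a linear component forest cover of $G$, as distinctness of non-reusable pebbles is only required inside each path. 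Thus \ref{path:comp:cover} for $G$ is equivalent to $G[C]$ admitting a linear $(k_1,k_2)$-pebble forest cover for every $C \in \mathcal{C}_G$.

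Finally, applying \Cref{thm:path:characterization} to each connected $G[C]$ shows that ``$G[C] \in \mathcal{P}^{(k_1,k_2)}$'', ``$G[C]$ admits a linear $(k_1,k_2)$-pebble forest cover'', and ``the searchers win $\NS^{(k_1,k_2)}(G[C])$'' are mutually equivalent; combined with the reductions above, this gives \ref{path:comp:decomposition} $\Leftrightarrow$ \ref{path:comp:cover} $\Leftrightarrow$ \ref{path:comp:game}. I expect the main obstacle to be the bookkeeping in condition \ref{path:comp:decomposition}: one must verify that restricting a single global path decomposition to a component respects the width bound (crucially using $S_{C'} \subseteq C'$) and, in the other direction, that concatenation prevents distinct components from sharing a bag, so that the combined exception budget is never exceeded.
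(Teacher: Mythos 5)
Your proposal is correct and matches the paper's approach: the paper gives no separate proof but states that the proposition "follows directly from \Cref{thm:path:characterization}", and your componentwise reduction (restricting/combining decompositions and covers per connected component, then invoking the theorem on each connected $G[C]$, using closure under summands and disjoint unions) is exactly the intended argument, just spelled out in detail.
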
 
  \section{Homomorphism indistinguishability and logical equivalence} \label{sec:hom-ind-log}
In this section, we characterize homomorphism indistinguishability over the classes $\mathcal{P}^{(k_1,k_2)}$ and $\mathcal{T}^{(k_1,k_2)}_q$ using logics with restricted requantification. 
Our approach builds on and extends the techniques developed in \cite{Dvorak2010,Fluck2024}, establishing tight connections between the syntactic structure of formulas in these logics and the combinatorial structure of graphs in $\mathcal{LP}^{(k_1,k_2)}$ and $\mathcal{LT}^{(k_1,k_2)}_q$,
specifically through the notions of construction caterpillars and construction trees.

\subsection{Construction caterpillars and restricted conjunction}
We start by giving the definition of finite variable counting logic with restricted conjunction and requantification, extending a definition from \cite{Montacute2024}.
\begin{definition}
    We define the set of logical formulas ${\land}\mathsf{C}^{(k_1,k_2)}_{\infty \omega}$ over the variable sets $[x_{k_1}, y_{k_2}]$ and $\mathcal{W} = \{w_1,w_2,\dots\}$.
    The \emph{non-counting formulas} of the logic are given by
    $$
        \varphi \Coloneqq z_i = z_j \ \vert \ R(\overline{z}) \ \vert \ \neg p \ \vert \ \bigvee_{i \in I} \psi_i \ \vert \ \bigwedge_{j \in J} \psi_j \ \vert \ \exists z_i (z_i = w_\ell \land \psi(\overline{z}, \overline{w}))
    $$
    for $z_i,z_j \in [x_{k_1}, y_{k_2}]$, $\overline{z} \in [x_{k_1}, y_{k_2}]^n$, $p$ atomic, $I$ and $J$ countable index sets, $\bigwedge_{j \in J} \psi_j$ a restricted conjunction, and a non-counting formula $\psi(\overline{z}, \overline{w})$ with $\overline{w} \in \mathcal{W}^m$, $w_{\ell}\notin \overline{w}$.
    Here, \emph{restricted conjunction} means that at most one formula $\psi_j$ containing a quantifier is not a sentence. 
    Furthermore, the logic contains the formulas
    $$
        \varphi \Coloneqq \exists^{n} (w_{\ell_1},\dotsc,w_{\ell_m}) \psi(w_{\ell_1},\dotsc,w_{\ell_m}) \ \vert \ \psi_1 \lor \psi_2
    $$
    for $n,m \in \mathbb{N}$, $w_{\ell_1},\dotsc,w_{\ell_m} \in W$, a non-counting formula $\psi$, and $\psi_1, \psi_2 \in {\land}\mathsf{C}^{(k_1,k_2)}_{\infty \omega}$.
    We additionally require that only variables from $[x_{k_1}]$ are requantified.
    The fragment ${\land}\mathsf{C}^{(k_1,k_2)}$ is defined by additionally requiring that all conjunctions and disjunctions are finite.

    We call a non-counting formula \emph{primitive} if it contains no disjunction and every restricted conjunction does not contain a sentence.
    A formula $\varphi \in {\land}\mathsf{C}_{\infty \omega}$ is called \emph{primitive} if it is of the form $\varphi = \exists^{n} \overline{w} \ \psi(\overline{w})$ for a primitive non-counting formula $\psi$.
\end{definition}

We first prove that there is a normal form for restricted conjunction counting logic with respect to requantification and primitivity, enabling a more direct correspondence between the syntax of formulas and construction caterpillars.
The idea is to translate the scheme from \Cref{prop:search:non-reusable:first} into the language of logic: It suffices to first fix all non-requantifiable variables followed by a well-behaved employment of requantifiable variables.
\begin{proposition} \label{prop:primitive}
    Every sentence $\varphi \in {\land}\mathsf{C}^{(k_1,k_2)}_{\infty \omega}$ is logically equivalent to disjunction of sentences of the form 
    $$ \exists^n \overline{w} \exists y_1 \dotsc \exists y_{k_2} \bigwedge_{i\in[k_2]} y_i = w_{\ell_i} \land \chi $$
    for a primitive non-counting formula $\chi$ only containing quantification over variables from~$[x_{k_1}]$. 
\end{proposition}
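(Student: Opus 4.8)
The plan is to establish the normal form by a two-stage structural induction on non-counting formulas, mirroring the strategy of \Cref{prop:search:non-reusable:first}: first show that every non-counting formula can be rewritten so that all quantifications over non-reusable variables $y_i$ are pulled to the front and fixed once and for all, and only then address the primitivity condition by distributing disjunctions out past the remaining (reusable) quantifiers. Since a sentence $\varphi \in {\land}\mathsf{C}^{(k_1,k_2)}_{\infty\omega}$ is already of the form $\exists^n \overline{w}\,\psi(\overline{w})$ for a non-counting formula $\psi$, the whole task reduces to putting $\psi$ into the desired shape while preserving logical equivalence and the requantification discipline.

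\textbf{Stage one: fixing the non-reusable variables.} The key observation is that because variables in $[y_{k_2}]$ may never be requantified, along any branch of the formula's syntax tree each $y_i$ is quantified at most once; moreover each quantifier has the guarded form $\exists z_i(z_i = w_\ell \land \psi)$, so quantifying $y_i$ simply pins it to some witness variable $w_{\ell_i} \in \mathcal{W}$. I would first argue that the quantifier $\exists y_i(y_i = w_{\ell_i} \land \cdots)$ can be hoisted upward through conjunctions, disjunctions, and the outer counting quantifier. Hoisting through a restricted conjunction $\bigwedge_{j} \psi_j$ is sound precisely because at most one conjunct carries an unsentenced quantifier, so only that conjunct contains the binding of $y_i$ and pulling it out does not collide with the others; hoisting through $\exists x_i(\cdots)$ for a reusable $x_i$ is sound because $y_i \neq x_i$ and the guard $y_i = w_{\ell_i}$ refers only to a witness variable, so no variable capture occurs. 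After this stage every $\exists y_i$ binding appears as a block $\exists y_1 \cdots \exists y_{k_2}$ with guards $\bigwedge_{i\in[k_2]} y_i = w_{\ell_i}$ immediately below the counting quantifier, exactly as in the target form, and the body now quantifies only over $[x_{k_1}]$.

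\textbf{Stage two: primitivity.} Using \Cref{lem:hom:counts}-style distributivity at the purely propositional level, I would push every disjunction $\bigvee_{i\in I}\psi_i$ outward past the reusable quantifiers and restricted conjunctions, converting the body into a (possibly infinite) disjunction of conjunctions; this is where the infinitary logic ${\land}\mathsf{C}^{(k_1,k_2)}_{\infty\omega}$ matters, since the distribution may produce a disjunction indexed by a product of the original index sets. Distributing $\bigvee$ past $\exists x_i(x_i = w_\ell \land \cdots)$ is the familiar equivalence $\exists x_i(x_i = w_\ell \land \bigvee_i \psi_i) \equiv \bigvee_i \exists x_i(x_i = w_\ell \land \psi_i)$, and distributing over restricted conjunction requires checking that the restricted-conjunction condition (at most one quantified non-sentence conjunct) is preserved by each distributed disjunct — which it is, since distribution does not increase the number of quantified conjuncts in any single disjunct. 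Each resulting disjunct is then a primitive non-counting formula $\chi$, and pulling the outer $\exists^n\overline{w}$ and the $y$-block over the top-level disjunction yields the claimed disjunction of sentences of the stated form.

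\textbf{The main obstacle} I anticipate is verifying that the restricted-conjunction constraint survives both rewriting stages: restricted conjunction is a genuinely global syntactic restriction (at most one conjunct may carry a non-sentence quantifier), so when hoisting a $y_i$-quantifier out of a conjunct or distributing a disjunction across a conjunction, I must confirm that the delicate bookkeeping of which conjuncts are sentences is respected, and that no rewriting step secretly requantifies a variable in $[y_{k_2}]$ or turns a sentence into a non-sentence in a way that violates the restriction. A careful induction invariant tracking the free variables and the quantifier-containing conjunct through each transformation should discharge this, but it is the step where the interaction of requantification with restricted conjunction — precisely the contrast with $\mathsf{C}^{(k_1,k_2)}$ noted after the statement — must be handled with the most care.
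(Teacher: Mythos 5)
Your two ingredients are the right ones — distributing disjunctions outward to reach primitivity, and hoisting the guarded $\exists y_j$ quantifiers to the front using the fact that non-requantifiability forces every sibling conjunct to avoid $y_j$ (this latter observation is exactly the paper's key step). However, you perform them in the wrong order, and this creates a genuine gap in your Stage one. The requantification condition only forbids a variable from occurring free and bound, or from being quantified \emph{within the scope} of its own quantification; it does \emph{not} forbid two parallel quantifications of the same $y_j \in [y_{k_2}]$ in different disjuncts, e.g.\ $\exists y_1(y_1 = w_1 \land \chi_1) \lor \exists y_1(y_1 = w_2 \land \chi_2)$. Such a formula is legal, each binding carries a different guard, and the two bindings cannot be merged into a single block $\exists y_1(y_1 = w_{\ell_1} \land \cdots)$ sitting above the disjunction; nor can one binding be hoisted past the other disjunct, since that would place the second $\exists y_1$ inside the scope of the first — i.e.\ requantify $y_1$ — and the fixed variable set $[y_{k_2}]$ leaves no room for renaming. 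So your claim that after Stage one ``every $\exists y_i$ binding appears as a block immediately below the counting quantifier'' fails before the disjunctions have been pushed to the top level.

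The fix is precisely to swap your stages, which is what the paper does: first invoke the distribution of disjunctions past quantifiers and restricted conjunctions (citing \cite[Proposition 5.8]{Montacute2024}, noting that the rewrite $\exists z \bigvee_{i} \varphi_i \mapsto \bigvee_{i} \exists z\, \varphi_i$ does not affect requantification) to obtain a disjunction of primitive sentences; in each primitive disjunct the quantifiers form a single nested spine, so each $y_j$ is bound at most once there, and the hoisting argument you describe for conjunctions then goes through verbatim. Your Stage two analysis (that distribution preserves the restricted-conjunction discipline) is correct and is implicitly the content of the cited result, so once reordered your argument matches the paper's.
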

\begin{proof}
  First note that $\varphi$ is is logically equivalent to a disjunction of primitive ${\land}\mathsf{C}^{(k_1,k_2)}_{\infty \omega}$-sentences 
  by \cite[Proposition 5.8]{Montacute2024} together with the observation that applying the rewrite rule~$\exists z \bigvee_{i \in I} \varphi_i \mapsto \bigvee_{i \in I} \exists z \varphi_i$ does not affect requantification.
  In each of these primitive sentences then every conjunctive subformula contains only literals and at most one quantified formula with a free variable.
  Thus, we can pull each quantification $\exists y_j$ and the guarding equality $y_j = w_{\ell}$ over all conjunctions since the other conjuncts do not contain the variable $y_j$.
  Otherwise, $y_j$ would occur free and bound and thus would be requantified. 
\end{proof}

In \cite{Montacute2024} it was shown that equivalence in ${\land}\mathsf{C}^{(k,0)}_{\infty \omega}$ is the same as homomorphism indistinguishability over $\mathcal{P}^{(k,0)}$.
This result was proven by evoking a categorical meta-theorem from \cite{Dawar2021}.
We utilize the constructive nature of the proofs in \cite{Dvorak2010,Fluck2024} to give a new combinatorial proof of the result that also adapts to the setting with restricted requantification.
The main technical difficulty here is that the class $\mathcal{LP}^{(k_1,k_2)}$ is not closed under taking products.
However, this aligns with the fact that ${\land}\mathsf{C}^{(k_1,k_2)}$ is also not closed under arbitrary conjunctions.

As a first step, we show that homomorphism counts from graphs in $\mathcal{LP}^{(k_1,k_2)}$ are ${\land}\mathsf{C}^{(k_1,k_2)}$-definable by inductively building up the formula along
a construction caterpillar.

\begin{lemma} \label{lem:path:hom:count:definable}
    Let $F \in \mathcal{LP}^{(k_1,k_2)}$ and $m \in \mathbb{N}$.
    There exists a formula $\varphi^F_m \in {\land}\mathsf{C}^{(k_1,k_2)}$ with $\free(\varphi^F_m) = L_F$ such that for each $[x_{k_1}, y_{k_2}]$-labeled graph $G$ with $L_F \subseteq L_G$ it holds that
    \begin{center}
        $G \models \varphi^F_m $ if and only if $\hom(F, G) = m$. 
    \end{center}
\end{lemma}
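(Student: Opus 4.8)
The plan is to construct the formula $\varphi^F_m$ by induction on the structure of a fixed $(k_1,k_2)$-construction caterpillar $(T, \lambda, r)$ for $F$, building an auxiliary family of formulas that capture the homomorphism-counting function rather than just a single count. The key technical device is that for a labeled graph $F'$ appearing as $\lambda(t)$ at some node $t$ of the caterpillar, and for any target graph $G$ with appropriate labels, the value $\hom(F', G[\overline{z} \to \overline{v}])$ depends on $G$ only through the assignment of labels $L_{F'}$. The natural inductive statement is therefore: for each node $t$ and each $m \in \mathbb{N}$, there is a formula $\varphi^{\lambda(t)}_m \in {\land}\mathsf{C}^{(k_1,k_2)}$ with $\free(\varphi^{\lambda(t)}_m) = L_{\lambda(t)}$ such that $G, \alpha \models \varphi^{\lambda(t)}_m$ exactly when $\hom(\lambda(t), G) = m$ under the labeling induced by $\alpha$. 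The desired statement is the root case $t = r$, since $\lambda(r) = F$.

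The induction proceeds over the three node types. At a \emph{leaf} $\ell$, the graph $\lambda(\ell)$ is fully labeled, so by \Cref{lem:hom:counts}(1) the count $\hom(\lambda(\ell), G)$ is either $0$ or $1$, determined by a finite conjunction of equalities $z_1 = z_2$ (for identified labels) and atomic edge formulas $R(\overline{z})$ (for edges); thus $\varphi^{\lambda(\ell)}_1$ is this conjunction and $\varphi^{\lambda(\ell)}_0$ its negation, both quantifier-free and hence harmless for requantification. At a \emph{product node} $t$ with children $t_1, \dots, t_d$, \Cref{lem:hom:counts}(2) gives $\hom(\lambda(t), G) = \prod_i \hom(\lambda(t_i), G)$; since the caterpillar has at most one non-leaf child, at most one factor is genuinely recursive while the others are leaves contributing $0$ or $1$. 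I would express "the product equals $m$" as a finite disjunction over the factorizations of $m$, taking $\bigvee_{\prod m_i = m} \bigwedge_i \varphi^{\lambda(t_i)}_{m_i}$; crucially the restricted-conjunction condition is met because all but one conjunct is quantifier-free (coming from a leaf), matching the definition's requirement that at most one conjunct containing a quantifier is a non-sentence. At an \emph{elimination node} $t$ deleting label $z$ with child $t'$, \Cref{lem:hom:counts}(3) gives $\hom(\lambda(t), G) = \sum_{v \in V(G)} \hom(\lambda(t'), G[z \to v])$, and I would encode "the sum equals $m$" using the counting quantifiers: the formula states, via a guarded existential reintroducing $z$ and counting witnesses $w$, that there are exactly $m_1$ vertices producing count $c_1$, etc., for the finitely many relevant partial counts, combined so the weighted total is $m$.

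The main obstacle is ensuring the elimination step respects both the requantification discipline and the quantifier syntax of ${\land}\mathsf{C}^{(k_1,k_2)}$. When the eliminated label $z$ lies in $[y_{k_2}]$, the construction-tree condition guarantees $z$ does not reappear above $t$, so reintroducing it via $\exists z$ introduces no requantification of a non-reusable variable; when $z \in [x_{k_1}]$ requantification is permitted. I must phrase the counting "$\sum_v \hom(\lambda(t'), G[z\to v]) = m$" using the allowed constructs $\exists^n \overline{w}\,\psi$ and the guarded quantifier $\exists z_i(z_i = w_\ell \land \psi)$: concretely, introduce fresh auxiliary variables $w \in \mathcal{W}$, assert $z = w$ under the guard, and use counting quantifiers over $w$ to tally, for each achievable per-vertex count $c$, the number $n_c$ of vertices $v$ with $\hom(\lambda(t'), G[z\to v]) = c$, then combine these into the constraint $\sum_c c\cdot n_c = m$ as a finite disjunction over admissible tuples $(n_c)_c$. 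Since $G$ is finite the per-vertex counts are bounded, so finitely many values $c$ suffice and all disjunctions and conjunctions remain finite, keeping the formula in ${\land}\mathsf{C}^{(k_1,k_2)}$ rather than its infinitary variant. Verifying that this counting encoding stays within the restricted-conjunction fragment — that the subformulas $\varphi^{\lambda(t')}_c$, which may themselves contain quantifiers, are plugged in only as sentences or as the single permitted non-sentence conjunct — is the delicate bookkeeping step, and it is exactly where the caterpillar structure (one non-leaf child per product) and \Cref{prop:primitive} keep the interaction with requantification under control.
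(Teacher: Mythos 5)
There is a genuine gap at the elimination step, and it is not merely ``delicate bookkeeping'' but a syntactic impossibility in your chosen inductive invariant. Your plan maintains, for every internal node $t$ of the caterpillar and every $m$, a formula $\varphi^{\lambda(t)}_m$ expressing $\hom(\lambda(t),G)=m$, and at an elimination node you encode $\sum_v \hom(\lambda(t'),G[z\to v])=m$ by tallying, with counting quantifiers, how many vertices achieve each per-vertex count $c$ and conjoining these tallies. But in ${\land}\mathsf{C}^{(k_1,k_2)}$ the counting quantifier $\exists^n \overline{w}\,\psi$ is only permitted when $\psi$ is a \emph{non-counting} formula, and the only closure operation on counting formulas is disjunction --- there is no conjunction of counting formulas and no nesting of counting quantifiers. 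Since the child formulas $\varphi^{\lambda(t')}_c$ already contain counting quantifiers once any elimination has occurred below $t'$, both the tallying subformulas $\exists^{n_c}w\,\varphi^{\lambda(t')}_c$ and their conjunction fall outside the logic. The same objection applies to your product-node step $\bigvee_{\prod m_i=m}\bigwedge_i \varphi^{\lambda(t_i)}_{m_i}$ whenever the non-leaf factor carries a counting quantifier. This strategy is exactly the one the paper uses for the tree case (\Cref{lem:tree:hom:count:definable}), where the target logic $\mathsf{C}^{(k_1,k_2)}_q$ does allow free nesting and Boolean combination of counting quantifiers; it does not transfer to the all-in-one counting syntax of ${\land}\mathsf{C}^{(k_1,k_2)}$.

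The paper's proof circumvents this by never counting at intermediate nodes. It first establishes, by induction on the caterpillar, the identity $\hom(F,G)=\sum_{\overline{v}\in V(G)^{|\elim(T)|}}\prod_{\ell\in\leaf(T)}\hom(\lambda(\ell),G[\elim(T,\ell)\to\overline{v}])$; since every leaf is fully labeled, each factor is $0$ or $1$, so $\hom(F,G)$ is exactly the number of tuples $\overline{v}$ for which the product is $1$. It then builds a single \emph{non-counting} formula $\varphi_T$ (using only guarded existentials $\exists z(z=w_i\land\cdots)$ and restricted conjunctions, which work because one child of every product node is a quantifier-free leaf formula) characterizing those tuples, and applies one outermost counting quantifier $\exists^m(w_1,\dotsc,w_{|\elim(T)|})\,\varphi_T$ over the entire tuple of eliminated positions at once. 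Your observation about the requantification discipline for labels in $[y_{k_2}]$ is correct and carries over, but the core of the argument has to be restructured around this single-outer-quantifier form for the lemma to land in ${\land}\mathsf{C}^{(k_1,k_2)}$.
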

\begin{proof}
    Let $F \in \mathcal{LP}^{(k_1,k_2)}$ and a construction caterpillar $(T, \lambda,r)$ for $F$ be given.
    For a leaf $\ell \in \leaf(T)$ we let $\elim(T, \ell)$ be the sequence of labels being eliminated in $T$ on the path from $r$ to $\ell$
    and $\elim(T)$ be the sequence of labels being eliminated in $T$.
    We show by induction on the structure of $T$ that
    \begin{equation}
      \label{eq:path:hom-count}
      \tag{$\ast$}
          \hom(F,G) = \begin{cases}
        \prod_{\ell \in \leaf(T)}\hom(\lambda(\ell), G)                                                                           & \text{ if } |\elim(T)| = 0  \\
        \sum_{\overline{v} \in V(G)^{|\elim(T)|}}\prod_{\ell \in \leaf(T)}\hom(\lambda(\ell), G[\elim(T, \ell) \to \overline{v}]) & \text{ if } |\elim(T)| > 0.
    \end{cases}
    \end{equation}    
    If $T$ contains no elimination node, it is the product of the fully labeled graphs assigned to the leafs of $T$, so by \Cref{lem:hom:counts} we have $\hom(F,G) = \prod_{\ell \in \leaf(T)}\hom(\lambda(\ell), G)$.
    If $T$ contains the root $r$ as single elimination node deleting a label $z$, we have $F = F'[z \to \bot]$ with $z \in L_{F'}$ and $\hom(F',G) = \prod_{\ell \in \leaf(T)}\hom(\lambda(\ell), G)$ by induction.
    Again using \Cref{lem:hom:counts} this yields
    $$\hom(F,G) = \sum_{v \in V(G)} \hom(F', G[z \to v]) = \sum_{v \in V(G)} \prod_{\ell \in \leaf(T)}\hom(\lambda(\ell), G[z \to v]).$$
    Now let $|\elim(T)| > 1$.
    If $r$ is a product node with children $t_1,t_2 \in V(T)$, we have $F = \lambda(t_1)\lambda(t_2)$ and $t_2$ is a leaf.
    Then by induction
    \begin{align*}
      \hom(F,G) & = \hom(\lambda(t_1)\lambda(t_2), G)                                                                                                                              \\
                & = \hom(\lambda(t_2), G) \sum_{\overline{v} \in V(G)^{|\elim(T[t_1])|}}\prod_{\ell \in \leaf(T[t_1])}\hom(\lambda(\ell), G[\elim(T[t_1], \ell) \to \overline{v}]) \\
                & =  \sum_{\overline{v} \in V(G)^{|\elim(T)|}}\prod_{\ell \in \leaf(T)}\hom(\lambda(\ell), G[\elim(T, \ell) \to \overline{v}]).
    \end{align*}
    Here $T[t_1]$ is the induced subtree with root $t_1$. 
    If $r$ is an elimination node with single child $s$ deleting a label $z$ we have $F = F'[z \to \bot]$ with $z \in L_{F'}$ and again by induction we have
    \begin{align*}
      \hom(F,G) & = \sum_{w \in V(G)} \hom(F', G[z \to w])                                                                                                                        \\
                & = \sum_{w \in V(G)} \sum_{\overline{v} \in V(G)^{|\elim(T[s])|}}\prod_{\ell \in \leaf(T[s])}\hom(\lambda(\ell), G[\elim(T[s], \ell) \to \overline{v}][z \to w]) \\
                & = \sum_{\overline{v} \in V(G)^{|\elim(T)|}}\prod_{\ell \in \leaf(T)}\hom(\lambda(\ell), G[\elim(T, \ell) \to \overline{v}])
    \end{align*}
    completing the proof of \eqref{eq:path:hom-count}.

    Next, we show by induction on the structure of $T$ that there exists a non-counting formula $\varphi_T \in {\land}\mathsf{C}^{(k_1,k_2)}$ such that
    for all $\overline{v} \in V(G)^{|\elim(T)|}$ it holds that
    \begin{equation}
      \label{eq:caterpillar-formula}
      \tag{$\ast\ast$}
      G, \overline{v} \models \varphi_T \text{ if and only if } \prod_{\ell \in \leaf(T)} \hom(\lambda(\ell), G[\elim(T, \ell) \to \overline{v}]) = 1.
    \end{equation}
  
    For the base case, let $\ell$ be a leaf of $T$ and define
    $$\varphi_T^{\ell} \coloneqq \bigwedge_{\nu_{\lambda(\ell)}(z_i) = \nu_{\lambda(\ell)}(z_j)} z_i=z_j \land \bigwedge_{\nu_{\lambda(\ell)}(z_i) \nu_{\lambda(\ell)}(z_j) \in E({\lambda(\ell)})} E(z_i,z_j).$$
    
    For the inductive step, consider the following two cases:
    \begin{enumerate}
      \item If $t \in V(T)$ is a product node and $s_1,s_2 \in V(T)$ are the children of $t$ with $\lambda(t)=\lambda(s_1)\lambda(s_2)$,
    we define $\varphi_T^{s_1s_2} \coloneqq \varphi_T^{s_1} \land \varphi_T^{s_1}$.
    \item If $t \in V(T)$ is the $i$-th elimination node in $T$ with a single child $s$,
    then there exists a label $z \in [x_{k_1}, y_{k_2}]$ such that $\lambda(t)$ is obtained by deleting the label $z$ in $\lambda(s)$. We set
    $\varphi_T^{t} \coloneqq \exists z (z = w_i \land \varphi_T^{s})$.
    \end{enumerate}
  
    Finally, we define the formula~$\varphi_m^F \coloneqq \exists^m (w_1,\dotsc,w_{|\elim(T)|}) \varphi_T$ which has the desired property by combining \eqref{eq:path:hom-count} and \eqref{eq:caterpillar-formula}.
  \end{proof}

Next, we aim to prove that also every property definable in ${\land}\mathsf{C}^{(k_1,k_2)}$ can be modeled by counting homomorphisms from $\mathcal{LP}^{(k_1,k_2)}$.
In fact, the number of solutions to a non-counting formula in a graph can be expressed by counting homomorphisms from linear combinations:
\begin{lemma}\label{lem:path:model:combination}
    Let $k_1+k_2 \in \mathbb{N}_+$ and $\psi(w_{\ell_1},\dotsc,w_{\ell_m}, \overline{z}) \in {\land}\mathsf{C}^{(k_1,k_2)}$ be a non-counting formula.
    Then there exists a linear combination $\mathfrak{F}_{\psi} \in \mathbb{R}\mathcal{LP}^{(k_1,k_2)}$ such that that for all labeled graphs $G$ we have
    $$\hom(\mathfrak{F}_{\psi}, G) = \begin{cases}
            |\{\overline{v} \in V(G)^m : G, \overline{v} \models \psi\}| & \text { if } \free(\psi) \cap \mathcal{W} \neq \emptyset \\
            [G \models \psi]                                             & \text{ otherwise }
        \end{cases} $$
\end{lemma}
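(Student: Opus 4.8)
The plan is to prove this by structural induction on the non-counting formula $\psi$, mirroring the grammar of ${\land}\mathsf{C}^{(k_1,k_2)}$ and maintaining throughout that $\mathfrak{F}_\psi$ is a linear combination of graphs in $\mathcal{LP}^{(k_1,k_2)}$ whose labels are exactly the free variables of $\psi$. The key bookkeeping device is that each free variable $w_{\ell_i} \in \mathcal{W}$ corresponds to an unlabeled ``counting slot'' while each free variable $z \in [x_{k_1},y_{k_2}]$ corresponds to a label on the graphs in the combination; the two cases in the statement (whether $\free(\psi)\cap\mathcal{W}$ is empty) are handled uniformly by treating $[G\models\psi]$ as the $m=0$ instance of the homomorphism count. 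The guiding intuition is that $\hom(\cdot,G)$ of a fully labeled graph is an Iverson bracket for a conjunction of equalities and edge constraints (by \Cref{lem:hom:counts}\emph{(1)}), products of graphs multiply homomorphism counts (by \Cref{lem:hom:counts}\emph{(2)}), and unlabeling a vertex sums the count over all images (by \Cref{lem:hom:counts}\emph{(3)}).

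\textbf{The inductive steps.}
First I would treat the atomic cases $z_i = z_j$ and $R(\overline{z})$: here $\mathfrak{F}_\psi$ is a single fully labeled graph (a single vertex with two labels identified, respectively a single edge), which lies in $\mathcal{LP}^{(k_1,k_2)}$ trivially. For negation of an atom $\neg p$, I would use the identity $[G\not\models p] = 1 - [G\models p]$ and set $\mathfrak{F}_{\neg p} \coloneqq \mathbf{1} - \mathfrak{F}_p$, where $\mathbf{1}$ is the empty (or appropriately labeled) graph with count $1$; this is where real coefficients are genuinely needed. For the restricted conjunction $\bigwedge_{j\in J}\psi_j$, the point is that $[G\models\bigwedge_j\psi_j]$ is the \emph{product} of the individual indicators, so $\mathfrak{F}$ is obtained by taking products $\mathfrak{F}_{\psi_{j_1}}\cdots\mathfrak{F}_{\psi_{j_r}}$ distributed over the linear combinations; the restriction that at most one conjunct containing a quantifier is not a sentence is exactly what guarantees the resulting product graphs stay linearly constructible (each product node in the construction caterpillar has at most one non-leaf child). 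The disjunction is handled by inclusion–exclusion, $[A\lor B] = [A] + [B] - [A\land B]$, reducing to the conjunction case. The crucial quantifier case $\exists z_i(z_i = w_\ell \land \psi(\overline{z},\overline{w}))$ is handled by invoking \Cref{lem:hom:counts}\emph{(3)}: unlabeling the label $z_i$ in every graph of $\mathfrak{F}_\psi$ turns the count into a sum over images of $z_i$, which introduces the fresh counting variable $w_\ell$ as an unlabeled summation slot, precisely matching the semantics of the guarded existential; since $z_i$ is removed as a label, this corresponds to an elimination node in the construction caterpillar.

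\textbf{Main obstacle.}
The hard part will be verifying that the product operation in the conjunction case does not leave the class $\mathcal{LP}^{(k_1,k_2)}$, since the introduction explicitly warns that this class is \emph{not} closed under arbitrary products. The resolution is to exploit the restricted-conjunction syntax directly: because at most one conjunct with a quantifier is a non-sentence, all but one of the factors $\mathfrak{F}_{\psi_j}$ are linear combinations of \emph{fully labeled} graphs (no elimination nodes, i.e.\ products of leaves), and multiplying a linearly constructible graph by fully labeled graphs preserves the caterpillar shape — each new factor attaches as a leaf child of a product node, keeping at most one non-leaf child per product node. I would isolate this as the central lemma of the argument, checking that the requantification constraint is respected: the guarding equalities $z_i = w_\ell$ ensure no variable from $[y_{k_2}]$ is requantified, so the eliminated-label condition on $[y_{k_2}]$-labels in the construction caterpillar is met. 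Finally I would confirm the label/free-variable bookkeeping closes: $L_{\mathfrak{F}_\psi} = \free(\psi)\cap[x_{k_1},y_{k_2}]$ throughout, and the number of counting slots equals $|\free(\psi)\cap\mathcal{W}|$, so that the two cases of the claimed formula follow by reading off whether any counting slots remain.
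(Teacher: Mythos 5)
Your induction handles the atoms, the negation of atoms, the guarded quantifier, and the restricted conjunction essentially as the paper does, and you correctly identify the central obstacle (products leaving $\mathcal{LP}^{(k_1,k_2)}$) together with its resolution for conjunctions. The genuine gap is in your treatment of disjunction. You propose inclusion--exclusion, $[A\lor B]=[A]+[B]-[A\land B]$, "reducing to the conjunction case" --- but the disjuncts of $\bigvee_{i\in I}\psi_i$ are \emph{not} subject to the restricted-conjunction constraint: two or more disjuncts may each contain quantifiers and free variables. For the cross term $[A\land B]$ you must then form $\mathfrak{F}_A\mathfrak{F}_B$ where \emph{both} factors contain graphs with elimination nodes (i.e.\ not fully labeled), and the product of two such graphs requires a product node with two non-leaf children in its construction tree. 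That graph may lie in $\mathcal{LT}^{(k_1,k_2)}_q$ but in general not in $\mathcal{LP}^{(k_1,k_2)}$, so the inclusion--exclusion term falls outside $\mathbb{R}\mathcal{LP}^{(k_1,k_2)}$ --- exactly the failure your conjunction argument was designed to avoid. A secondary issue is that when the disjuncts have different free $\mathcal{W}$-variables the counts $|\{\overline{v}:G,\overline{v}\models\psi_i\}|$ range over tuple spaces of different dimensions, so the inclusion--exclusion identity needs padding factors that you do not supply.

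The paper's proof sidesteps both problems by \emph{not} treating disjunction inside the induction at all: it first invokes the normal form of \Cref{prop:primitive} to replace $\psi$ by a primitive formula, in which no disjunction occurs and every restricted conjunction contains no sentence; inner disjunctions are pushed to the top level of the counting sentence (via $\exists z\bigvee_i\varphi_i\mapsto\bigvee_i\exists z\varphi_i$), where they are dispatched separately in the proof of \Cref{thm:path:hom:ind}. After that reduction the only cases are atoms, negated atoms, binary restricted conjunctions with at most one quantified conjunct, and guarded existentials --- precisely the cases your proposal handles correctly. So your argument becomes sound once you prepend this normalization step; without it, the disjunction case as written does not go through.
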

  \begin{proof}
    By \Cref{prop:primitive} we assume that the formula $\psi$ is primitive.
    We construct $\mathfrak{F}_{\psi}$ such that $\free(\psi) \cap [x_{k_1}, y_{k_2}] = L_{\mathfrak{F}_{\psi}}$ by induction on the structure of $\psi$:
  
    If $\psi = (z_i = z_j)$ we let $\mathfrak{F} \coloneqq K_1$ with the single vertex labeled by $z_i$ and $z_j$.
  
    If $\psi = E(z_i, z_j)$ and $i \neq j$ we let $\mathfrak{F} \coloneqq K_2$ with the endpoints being labeled with $z_i$ and $z_j$ respectively.
    For $i = j$ we let $\mathfrak{F}$ be the linear combination with no non-zero coefficients.
  
    If $\psi = \neg\psi'$ for an atomic formula $\psi'$ there exists $\mathfrak{F}_{\psi'} \in \mathbb{R}\mathcal{LP}^{(k_1,k_2)}$ with $\hom(\mathfrak{F}_{\psi'}, G) = [G \models \psi']$ by induction.
    We then let $\mathfrak{F}_{\psi} \coloneqq -1 \mathfrak{F}_{\psi'} + I_{L_{\mathfrak{F}_{\psi'}}}$.
  
    If $\psi = \psi_1 \land \psi_2$ there exist $\mathfrak{F}_{\psi_1}, \mathfrak{F}_{\psi_2} \in \mathbb{R}\mathcal{LP}^{(k_1,k_2)}$ with $\mathfrak{F}_{\psi_1} = \sum_{i=1}^t c^1_iF^1_i, \mathfrak{F}_{\psi_2} = \sum_{i=1}^s c^2_iF^2_i$ by induction.
    If $\psi_1$ and $\psi_2$ are both quantifier-free, by induction all graphs $F^1_i,F^2_j$ are fully labeled and hence $F^1_iF^2_j \in \mathcal{LP}^{(k_1,k_2)}$ for $i \in [t], j \in[s]$.
    Thus, in this case we set $\mathfrak{F}_\psi \coloneqq \mathfrak{F}_{\psi_1} \mathfrak{F}_{\psi_2} = \sum_{i=1}^t \sum_{j=1}^s c^1_i c^2_j F^1_iF^2_j$.
    Since $\psi$ is primitive and a restricted conjunction, at most one of the formulas $\psi_1$ and $\psi_2$ contains a quantifier.
    Again by induction, only one of the linear combinations $\mathfrak{F}_{\psi_1}, \mathfrak{F}_{\psi_2}$ does not contain only fully labeled graphs and hence $F^1_iF^2_j \in \mathcal{LP}^{(k_1,k_2)}$ for $i \in [t], j \in[s]$.
    Therefore we can again set $\mathfrak{F}_\psi \coloneqq \mathfrak{F}_{\psi_1} \mathfrak{F}_{\psi_2}$.
  
    If $\psi = \exists z (z = w_{\ell_m} \land \psi')$ there exists $\mathfrak{F}_{\psi'}$ by induction.
    We obtain $\mathfrak{F}_{\psi}$ from $\mathfrak{F}_{\psi'}$ by deleting all occurrences of the label $z$ in graphs in $\mathfrak{F}_{\psi'}$.
    Then we have
    \begin{align*}
        \hom(\mathfrak{F}_{\psi}, G)
        & = \sum_{i=1}^t c_i \hom(F_i[z \to \bot], G) \\
        & = \sum_{i=1}^t \sum_{u \in V(G)} c_i \hom(F_i, G[z \to u]) \\
        & = \sum_{u \in V(G)} \hom(\mathfrak{F}_{\psi'}, G[z \to u]).
    \end{align*}
    If $\psi'$ does not contain a free variable from $\mathcal{W}$ this sum evaluates to $|\{v \in V(G) : G ,v\models \psi'(z)\}|$.
    If $\psi'=\psi'(w_{\ell_1},\dotsc,w_{\ell_{m-1}})$ contains free variables $\overline{w} = [w_{\ell_1},\dotsc,w_{\ell_{m-1}}] \in \mathcal{W}$ we have $\hom(\mathfrak{F}_{\psi'}, G) = |\{\overline{v} \in V(G)^{m-1} : G,\overline{v} \models \psi'(\overline{w})\}|$
    and thus 
    \begin{align*}
      \hom(\mathfrak{F}_{\psi}, G) &= \sum_{u \in V(G)} |\{\overline{v} \in V(G)^{m-1} : G[z \to u], \overline{v} \models \psi'(\overline{w})\}| \\
      &= |\{\overline{v} \in V(G)^m : G, \overline{v} \models \psi(\overline{w}w_{\ell_m})\}|.
    \end{align*}
  \end{proof}

Combining the two previous results allows us to prove that homomorphism indistinguishability over the class $\mathcal{P}^{(k_1,k_2)}$ is the same as logical equivalence with restricted conjunction and requantification.

\begin{theorem} \label{thm:path:hom:ind}
    For $k_1+k_2 \in \mathbb{N}_+$ and graphs $G,H$ the following are equivalent:
    \begin{itemize}
        \item $G$ and $H$ are homomorphism indistinguishable over the class $\mathcal{P}^{(k_1,k_2)}$.
        \item $G$ and $H$ are ${\land}\mathsf{C}^{(k_1,k_2)}$-equivalent.
    \end{itemize}
\end{theorem}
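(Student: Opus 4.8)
The plan is to prove the two implications separately, using the two preceding lemmas as the essential bridges between homomorphism counts and definability in ${\land}\mathsf{C}^{(k_1,k_2)}$.

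For the direction that ${\land}\mathsf{C}^{(k_1,k_2)}$-equivalence implies homomorphism indistinguishability over $\mathcal{P}^{(k_1,k_2)}$, I would argue contrapositively: suppose $\hom(F, G) \neq \hom(F, H)$ for some $F \in \mathcal{P}^{(k_1,k_2)} = \mathcal{LP}^{(k_1,k_2)}$ (the classes coincide for unlabeled graphs by \Cref{thm:path:characterization}). Set $m \coloneqq \hom(F, G)$. By \Cref{lem:path:hom:count:definable} there is a sentence $\varphi^F_m \in {\land}\mathsf{C}^{(k_1,k_2)}$ such that any graph $K$ satisfies $\varphi^F_m$ exactly when $\hom(F, K) = m$. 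Then $G \models \varphi^F_m$ while $H \not\models \varphi^F_m$, witnessing that $G$ and $H$ are not ${\land}\mathsf{C}^{(k_1,k_2)}$-equivalent. Here I should note that $F$ is unlabeled, so $L_F = \emptyset$ and $\varphi^F_m$ is genuinely a sentence, making the application of \Cref{lem:path:hom:count:definable} with $L_F \subseteq L_G$ immediate.

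For the converse, that homomorphism indistinguishability over $\mathcal{P}^{(k_1,k_2)}$ implies ${\land}\mathsf{C}^{(k_1,k_2)}$-equivalence, I would again argue contrapositively. Assume $G$ and $H$ are not ${\land}\mathsf{C}^{(k_1,k_2)}$-equivalent, so there is a sentence $\varphi \in {\land}\mathsf{C}^{(k_1,k_2)}$ with, say, $G \models \varphi$ but $H \not\models \varphi$. By \Cref{prop:primitive} I may assume $\varphi$ is a disjunction of sentences of the form $\exists^n \overline{w}\, \exists y_1 \dotsc \exists y_{k_2} \bigwedge_{i} y_i = w_{\ell_i} \land \chi$ with $\chi$ primitive non-counting. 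Since satisfaction of a disjunction differs between $G$ and $H$, one disjunct must already differ; hence I may assume $\varphi = \exists^n \overline{w}\, \psi$ for a non-counting formula $\psi$ whose free variables are exactly $\overline{w} \subseteq \mathcal{W}$. Now apply \Cref{lem:path:model:combination} to $\psi$ to obtain $\mathfrak{F}_\psi \in \mathbb{R}\mathcal{LP}^{(k_1,k_2)}$ counting the satisfying assignments, so that $\hom(\mathfrak{F}_\psi, K) = |\{\overline{v} : K, \overline{v} \models \psi\}|$ for every $K$. The sentence $\varphi = \exists^{\geq n} \overline{w}\, \psi$ (interpreting the counting quantifier) holds in $K$ iff $\hom(\mathfrak{F}_\psi, K) \geq n$; since this threshold statement has different truth values in $G$ and $H$, we get $\hom(\mathfrak{F}_\psi, G) \neq \hom(\mathfrak{F}_\psi, H)$, and therefore $\hom(F, G) \neq \hom(F, H)$ for at least one constituent graph $F$ of the linear combination $\mathfrak{F}_\psi \in \mathbb{R}\mathcal{P}^{(k_1,k_2)}$, contradicting homomorphism indistinguishability.

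The main obstacle I anticipate is handling the counting quantifier cleanly in the converse: \Cref{lem:path:model:combination} is stated for non-counting formulas and yields the \emph{number} of satisfying tuples, whereas a general ${\land}\mathsf{C}^{(k_1,k_2)}$ sentence is built from counting quantifiers $\exists^n$ applied to non-counting matrices and closed under $\lor$. The normal form of \Cref{prop:primitive} is exactly what tames this, reducing an arbitrary sentence to a disjunction of single-counting-quantifier sentences over primitive matrices; the care needed is to verify that the threshold $\exists^{\geq n}$ versus exact-count $\exists^{=n}$ bookkeeping is consistent (a finite Boolean combination of threshold counts is still expressible, and differing satisfaction of $\varphi$ forces a genuine difference in some homomorphism count). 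I would also take care that the free-variable tuple $\overline{w}$ matches the arity $m$ in \Cref{lem:path:model:combination}, and that the guarded existentials $\exists y_i(y_i = w_{\ell_i} \land \cdots)$ binding the non-reusable variables are absorbed correctly into $\mathfrak{F}_\psi$ via the label-deletion step, so that the resulting graphs genuinely lie in $\mathcal{LP}^{(k_1,k_2)}$ with the reusability constraint on $[y_{k_2}]$ respected.
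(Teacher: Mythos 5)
Your proposal is correct and follows essentially the same route as the paper: the forward direction via \Cref{lem:path:hom:count:definable} applied to the unlabeled $F$ (using $\mathcal{P}^{(k_1,k_2)} = \mathcal{LP}^{(k_1,k_2)}$ from \Cref{thm:path:characterization}), and the converse by peeling off disjunctions and applying \Cref{lem:path:model:combination} to the non-counting matrix of a single counting-quantified disjunct. Your explicit appeal to \Cref{prop:primitive} at the top level is not needed there (the syntax of ${\land}\mathsf{C}^{(k_1,k_2)}$ already forces sentences to be disjunctions of $\exists^n\overline{w}\,\psi$, and the normal form is invoked inside \Cref{lem:path:model:combination}), but it is harmless, and your exact-count versus threshold bookkeeping concern resolves exactly as you anticipate.
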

\begin{proof}
    Assume there exists a graph $F \in \mathcal{P}^{(k_1,k_2)}$ such that $\hom(F, G) \neq \hom (F, H)$.
    By \Cref{lem:path:hom:count:definable} for every $m \in \mathbb{N}$ there exists $\varphi^F_m \in {\land}\mathsf{C}^{(k_1,k_2)}$ such that $G \models \varphi^F_m$ exactly if $\hom(F,G) = m$.
    Thus, $G \models \varphi^F_{\hom(F, G)}$ and $G \centernot\models \varphi^F_{\hom(F, G)}$.

    Now assume that there exists a sentence $\varphi \in {\land}\mathsf{C}^{(k_1,k_2)}$ such that $G \models \varphi$ and $H \centernot\models \varphi$.
    If $\varphi = \bigvee_{i \in [m]} \varphi_i$ then there already exists $i \in [m]$ such that $G \models \varphi_i$ and $H \centernot\models \varphi_i$.
    If $\varphi = \exists^{n} (w_{\ell_1},\dotsc,w_{\ell_m}) \psi(w_{\ell_1},\dotsc,w_{\ell_m})$ for a non-counting formula $\psi \in {\land}\mathsf{C}^{(k_1,k_2)}$,
    by \Cref{lem:path:model:combination} there exists $\mathfrak{F}_\psi$ with
    \[ \hom(\mathfrak{F}_{\psi}, G) = |\{\overline{v} \in V(G)^m : G, \overline{v} \models \psi\}| \neq |\{\overline{v} \in V(H)^m : H, \overline{v} \models \psi\}|=  \hom(\mathfrak{F}_{\psi}, H) ~ \qedhere \]
\end{proof}

\subsection{Construction trees}

Following the proof from \cite{Fluck2024}, the proof technique from the previous section can be used to adapt all constructions for the logic $\mathsf{C}^k_q$ with unrestricted conjunction and also preserve requantification.

\begin{lemma} \label{lem:tree:hom:count:definable}
    Let $F \in \mathcal{LT}^{(k_1,k_2)}_q$ and let $m \geq 0$.
    Then there exists a formula $\varphi^F_m \in \mathsf{C}^{(k_1,k_2)}_q$ with $\free(\varphi^F_m ) = L_F$ such that for each $[x_{k_1}, y_{k_2}]$-labeled graph $G$ with $L_F \subseteq L_G$ it holds that
    \begin{center}
      $G, \nu_G \models \varphi^F_m$ if and only if $\hom(F, G) = m$.
    \end{center}
  \end{lemma}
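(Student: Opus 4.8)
The plan is to mirror the proof of \Cref{lem:path:hom:count:definable}, but over construction trees rather than caterpillars, exploiting the two features that distinguish $\mathsf{C}^{(k_1,k_2)}_q$ from ${\land}\mathsf{C}^{(k_1,k_2)}$: unrestricted conjunction and counting quantifiers. Rather than separating an arithmetic identity from a formula construction as in the caterpillar case, I would prove a single strengthened statement by induction on the construction tree $(T,\lambda,r)$ of $F$: for every node $t \in V(T)$ and every $m \geq 0$ there is a formula $\varphi_{t,m} \in \mathsf{C}^{(k_1,k_2)}$ with $\free(\varphi_{t,m}) = L_{\lambda(t)}$, of quantifier rank at most the elimination depth of the subtree $T[t]$, requantifying only variables from $[x_{k_1}]$, such that $G, \nu_G \models \varphi_{t,m}$ iff $\hom(\lambda(t), G) = m$ for every labeled $G$ with $L_{\lambda(t)} \subseteq L_G$. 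Setting $\varphi^F_m \coloneqq \varphi_{r,m}$ then yields the claim, since $\lambda(r) = F$ and the elimination depth of $T$ is $q$.

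For the base case, a leaf $\ell$ carries a fully labeled graph, so $\hom(\lambda(\ell),G) \in \{0,1\}$ by \Cref{lem:hom:counts}; the quantifier-free conjunction of the equalities and edge atoms forced by $\nu_{\lambda(\ell)}$ (exactly the leaf formula of \Cref{lem:path:hom:count:definable}) defines $\varphi_{\ell,1}$, with $\varphi_{\ell,0} \coloneqq \neg \varphi_{\ell,1}$ and $\varphi_{\ell,m}$ an unsatisfiable formula for $m \geq 2$, all of rank $0$. At a product node $t$ with children $t_1,\dots,t_d$ and $\lambda(t) = \lambda(t_1)\cdots\lambda(t_d)$, \Cref{lem:hom:counts} gives $\hom(\lambda(t),G) = \prod_i \hom(\lambda(t_i),G)$, so for $m \geq 1$ I would take the finite disjunction over all factorizations $\prod_i m_i = m$ (each $m_i \le m$) of $\bigwedge_i \varphi_{t_i,m_i}$, and for $m = 0$ the disjunction $\bigvee_i \varphi_{t_i,0}$. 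This conjunction of several quantified conjuncts with free variables is exactly what unrestricted conjunction permits, and is the point where the argument departs from the restricted-conjunction logic; the rank is the maximum of the children's ranks.

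The substantial case is an elimination node $t$ with child $s$ deleting a label $z$, where \Cref{lem:hom:counts} gives $\hom(\lambda(t),G) = \sum_{v \in V(G)} \hom(\lambda(s), G[z \to v])$. By the inductive hypothesis applied to $G[z\to v]$, the quantity $c_j \coloneqq |\{v : \hom(\lambda(s),G[z\to v]) = j\}|$ equals the number of $v$ with $G,\nu_G[z/v]\models \varphi_{s,j}$, which the counting quantifier $\exists^{\ge a} z\,\varphi_{s,j}$ pins down exactly. Since only $j \le m$ and $a_j \le m$ can contribute to a total of $m$, I would define $\varphi_{t,m}$ as the finite disjunction, over all tuples $(a_j)_{j \le m}$ with $\sum_j j\,a_j = m$, of the Boolean combination asserting $c_j = a_j$ for each $j$; this adds exactly one counting quantifier over $z$, so the rank grows by one, matching the extra elimination node. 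The requantification check is the genuinely new ingredient beyond \cite{Fluck2024}: if $z = y_i \in [y_{k_2}]$, the last clause of the construction-tree definition forces $y_i$ to be absent from every ancestor graph, which I would use to show both that no further elimination of $y_i$ occurs anywhere in $T[s]$ (so the new $\exists y_i$ contains no nested quantification of $y_i$) and that $y_i \notin L_F = L_{\lambda(r)}$ (so $y_i$ is never simultaneously free and bound); hence $y_i$ is not requantified, while the reusable $x$-variables may be.

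I expect the elimination step to be the main obstacle: one must simultaneously (i) convert the sum over the quantified vertex into a Boolean combination of counting quantifiers of finite size, (ii) keep the quantifier rank pinned to the elimination depth so the final formula lands in $\mathsf{C}^{(k_1,k_2)}_q$, and (iii) certify the requantification constraint for non-reusable labels. Parts (i) and (ii) are routine adaptations of \cite{Dvorak2010,Fluck2024}; the real care lies in (iii), where the bookkeeping for $[y_{k_2}]$-labels — that each such label is eliminated at most once along any branch and never clashes with a free occurrence — rests entirely on the construction-tree condition governing deletion of non-reusable labels.
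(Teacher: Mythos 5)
Your proposal follows essentially the same route as the paper: induction on the construction tree, with the identical leaf formula, a disjunction over factorizations at product nodes, counting quantifiers converting the sum at elimination nodes into a finite Boolean combination, quantifier rank tracked by elimination depth, and the requantification of $[y_{k_2}]$-labels ruled out via the construction-tree condition that a deleted non-reusable label is absent from all ancestors (your observation that this also forbids a second elimination of the same label in the subtree, and that such a label cannot be free in $\varphi^F_m$, is the same justification the paper gives, spelled out a little more fully).

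One detail of your elimination-node formula needs repair. Asserting only $c_j = a_j$ for $j \le m$ (with $\sum_j j\,a_j = m$) does not force $\hom(\lambda(t),G)=m$: a vertex $v$ with $\hom(\lambda(s),G[z\to v]) = j' > m$ would be invisible to these conjuncts yet push the total above $m$ (and including $j=0$ instead forces you to know $|V(G)|$). The paper closes this by adding the conjunct $\exists^{=c} z\,\neg\varphi^s_0$ with $c=\sum_i c_i$, which certifies that \emph{every} vertex with a nonzero count is accounted for by one of the prescribed values; with that extra conjunct your construction coincides with the paper's.
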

  \begin{proof}
    Let $(T, \lambda, r)$ be a $(k_1,k_2,q)$-construction tree of $F$.
    We proceed by induction on the structure of  $(T,\lambda,r)$.
    If $\ell \in V(T)$ is a leaf the graph $\lambda(\ell) = F$ is fully labeled.
    Thus, there exists at most one homomorphism from $F$ to $G$ and we set
    $$ \varphi^{\ell}_m \coloneqq \begin{cases}
        \bot                                                                                                    & \text{ if } m > 1 \\
        \bigwedge_{\nu_F(z_i) = \nu_F(z_j)} z_i=z_j \land \bigwedge_{\nu_F(z_i) \nu_F(z_j) \in E(F)} E(z_i,z_j) & \text{ if } m = 1 \\
        \neg \varphi^{\ell}_1                                                                                   & \text{ if } m=0
      \end{cases} $$
    Then $\qr(\varphi^{\ell}_m)=0$, only variables from $L_F \subseteq [x_{k_1}, y_{k_2}]$ are used in $\varphi^{\ell}_m$, and no variable is (re)quantified.
  
    Next, assume that $t \in V(T)$ is a product node.
    It suffices to treat the case where $t$ has two children $s_1,s_2 \in V(T)$ with $\lambda(t) = \lambda(s_1) \lambda(s_2)$.
    By the induction hypothesis, for all $m \in \mathbb{N}$ there exist formulas $\varphi^{s_1}_m, \varphi^{s_2}_m \in \mathsf{C}^{(k_1,k_2)}_q$ such that $G \models \varphi^{s_i}_m$ if and only if $\hom(\lambda(s_i), G) = m$.
    We set $\varphi^t_0 \coloneqq \varphi^{s_1}_0 \lor \varphi^{s_2}_0$ and for $m>0$ there exist $m_1,m_2 >0$ with $m=m_1m_2$, so we set $\varphi^t_{m_1m_2} = \varphi^{s_1}_{m_1} \land \varphi^{s_2}_{m_2}$.
    As no quantification was introduced we have $\varphi^t_m \in \mathsf{C}^{(k_1,k_2)}_q$ and more precisely $\qr(\varphi^t_m) = \qr(\varphi^{s_i}_m)$.
  
    Finally, let $t \in V(T)$ be an elimination node with a single child $s$.
    Then there exists a label $z \in [x_{k_1}, y_{k_2}]$ such that $\lambda(t)$ is obtained by deleting the label $z$ in $\lambda(s)$.
    Again, by the induction hypothesis for each $m \in \mathbb{N}$ there exists a formula $\varphi^{s}_m \in \mathsf{C}^{(k_1,k_2)}_q$ such that $G \models \varphi^{s}_m$ if and only if $\hom(\lambda(s), G) = m$.
    For $m>0$ we set $\varphi^{t}_m \coloneqq \bigvee_{m \in M} \bigl( \exists^{= c} z \neg\varphi^s_0 \land \bigwedge_{i \in [t]} \exists^{= c_i} z \varphi^s_{m_i} \bigr)$
    where $M$ is the set of all formal decompositions $\sum_{i=1}^t c_i m_i = m$ and $c = \sum_{i=1}^t c_i$. Also, we set $\varphi^t_0 \coloneqq \forall z \varphi^s_0$.
    Then it holds that
    $$\free(\varphi^t_m) = \free(\varphi^s_m) \setminus \{z\} = L_{\lambda(s)} \setminus \{z\} = L_{\lambda(t)}$$
    and if $z \in [y_{k_2}]$ it is not requantified as $z \in L_{\lambda(s)} = \free(\varphi^s_m)$ by induction.
  
    For the root $r$ of $T$ we let $\varphi^F_m \coloneqq \varphi^r_m$. The quantifier-rank is equal to the maximum number of elimination nodes on a branch in $T$, so $\varphi_m \in \mathsf{C}^{(k_1,k_2)}_q$.
  \end{proof}
  
  For a logical formula $\varphi$ and $n \in \mathbb{N}$ we say that $\mathfrak{F}$ \emph{models} $\varphi$ for graphs of order $n$ if $L_{\mathfrak{F}} = \free(\varphi)$ and for all labeled graphs $G$ of order $n$ it holds that
  $\hom(\mathfrak{F}, G) = 1$ if $G, \nu_G \models \varphi$ and $\hom(\mathfrak{F}, G) = 0$ otherwise.
  
  \begin{lemma}[{\cite[Lemma 5]{Dvorak2010}}]\label{lem:interpolation}
    Let $\mathcal{F}$ be a class of graphs that is closed under taking products and let $\mathfrak{F} \in \mathbb{R}\mathcal{F}$.
    Then for all disjoint finite sets $S^-,S^+ \subseteq \mathbb{R}$ there exists a linear combination $\mathfrak{F}[S^-, S^+] \in \mathbb{R}\mathcal{F}$ such that for every graph $G$ it holds that
    \begin{itemize}
      \item $\hom(\mathfrak{F}[S^-, S^+], G) = 1$ if $\hom(\mathfrak{F}, G) \in S^+$
      \item $\hom(\mathfrak{F}[S^-, S^+], G) = 0$ if $\hom(\mathfrak{F}, G) \in S^-$
    \end{itemize}
  \end{lemma}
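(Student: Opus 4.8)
The plan is to recognize the desired indicator as a univariate real polynomial evaluated at $\hom(\mathfrak{F}, \cdot)$, and then to realize that polynomial on the level of graphs via the graph product. The enabling observation is that closure of $\mathcal{F}$ under products makes $\mathbb{R}\mathcal{F}$ into a commutative algebra on which $\hom(\cdot, G)$ acts multiplicatively: by bilinearity of the product and \Cref{lem:hom:counts}, for any $\mathfrak{F}_1, \mathfrak{F}_2 \in \mathbb{R}\mathcal{F}$ the product $\mathfrak{F}_1 \mathfrak{F}_2$ again lies in $\mathbb{R}\mathcal{F}$ and satisfies $\hom(\mathfrak{F}_1\mathfrak{F}_2, G) = \hom(\mathfrak{F}_1, G)\,\hom(\mathfrak{F}_2, G)$. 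Writing $\mathfrak{F}^j$ for the $j$-fold product, with $\mathfrak{F}^0$ taken to be the edgeless unit graph $I_{L_{\mathfrak{F}}}$ on the labels of $\mathfrak{F}$ (so that $\hom(I_{L_{\mathfrak{F}}}, G) = 1$ for all relevant $G$), this iterates to $\hom(\mathfrak{F}^j, G) = \hom(\mathfrak{F}, G)^j$ for every $j \ge 0$ and every $G$. Hence any polynomial in the single real quantity $\hom(\mathfrak{F}, G)$ can be transported to a linear combination of graphs from $\mathcal{F}$.

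Next I would build the polynomial by Lagrange interpolation. Setting $S := S^+ \cup S^-$, which is finite since both parts are, define
\[
  p(x) := \sum_{s \in S^+} \prod_{t \in S \setminus \{s\}} \frac{x - t}{s - t} \in \mathbb{R}[x].
\]
A direct check gives $p(s) = 1$ for $s \in S^+$ (the $s$-th summand evaluates to $1$ while every other summand carries a factor $x - s$ that vanishes) and $p(s) = 0$ for $s \in S^-$ (here $s \neq s'$ for every $s' \in S^+$ by disjointness, so each summand carries the factor $x - s$). Writing $p(x) = \sum_{j=0}^{d} p_j x^j$, I would then set
\[
  \mathfrak{F}[S^-, S^+] := \sum_{j=0}^{d} p_j\, \mathfrak{F}^{\,j},
\]
which lies in $\mathbb{R}\mathcal{F}$ exactly because $\mathcal{F}$ is closed under products (making each $\mathfrak{F}^{\,j} \in \mathbb{R}\mathcal{F}$) and $\mathbb{R}\mathcal{F}$ is closed under finite linear combinations.

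Finally I would verify the claim by substitution: for any graph $G$, the multiplicativity established above yields
\[
  \hom(\mathfrak{F}[S^-, S^+], G) = \sum_{j=0}^{d} p_j\, \hom(\mathfrak{F}, G)^j = p\bigl(\hom(\mathfrak{F}, G)\bigr),
\]
so the value is $1$ whenever $\hom(\mathfrak{F}, G) \in S^+$ and $0$ whenever $\hom(\mathfrak{F}, G) \in S^-$, as required.

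There is no deep obstacle here once the algebraic viewpoint is fixed; the real content is simply recognizing that closure under products turns $\hom(\cdot, G)$ into an algebra homomorphism $\mathbb{R}\mathcal{F} \to \mathbb{R}$, reducing the statement to univariate interpolation. The two points that need care are (i) realizing the constant term $p_0$ inside $\mathbb{R}\mathcal{F}$, which is why one needs the product unit $\mathfrak{F}^0 = I_{L_{\mathfrak{F}}}$ to belong to the class — harmless for the classes to which the lemma is applied, such as $\mathcal{LT}^{(k_1,k_2)}_q$, where the edgeless graph is always available — and (ii) confirming that the product is associative and commutative up to identification of equally-labeled vertices, so that the powers $\mathfrak{F}^{\,j}$ are well defined and the multiplicativity of $\hom$ from \Cref{lem:hom:counts} iterates cleanly.
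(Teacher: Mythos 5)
The paper does not actually prove this lemma: it is imported verbatim from Dvo\v{r}\'ak's paper (Lemma~5 there), so there is no in-paper argument to compare against. Your Lagrange-interpolation proof is correct and is essentially the standard proof from the cited source — closure under products makes $\hom(\cdot,G)$ multiplicative on $\mathbb{R}\mathcal{F}$, reducing the claim to univariate polynomial interpolation on the finite set $S^+\cup S^-$. The one genuine delicacy is the point you flag yourself as (i): realizing the constant term requires the product unit $I_{L_{\mathfrak{F}}}$ (with $\hom(I_{L_{\mathfrak{F}}},G)=1$) to lie in $\mathbb{R}\mathcal{F}$, which the bare hypothesis that $\mathcal{F}$ is closed under products does not literally guarantee; this is harmless for the class $\mathcal{LT}^{(k_1,k_2)}_q$ to which the lemma is applied, where edgeless fully labeled graphs are trivially constructible, but it is worth stating as an explicit (mild) additional assumption.
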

  
  \begin{lemma} \label{lem:tree:model:combination}
    Let $k_1+k_2,q \geq 1$ and $\varphi \in \mathsf{C}^{(k_1,k_2)}_q$.
    Then for every $n \in \mathbb{N}$ there exists a linear combination $\mathfrak{F} \in \mathbb{R}\mathcal{LT}^{(k_1,k_2)}_q$ that models $\varphi$ for graphs of order $n$.
  \end{lemma}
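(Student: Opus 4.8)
The plan is to prove the lemma by induction on the structure of $\varphi \in \mathsf{C}^{(k_1,k_2)}_q$, mirroring \cite[Lemma 5]{Dvorak2010} and its refinement in \cite{Fluck2024}, while tracking the two invariants that define $\mathcal{LT}^{(k_1,k_2)}_q$: that elimination depth matches quantifier-rank, and that labels from $[y_{k_2}]$ are eliminated at most once along each branch. Two structural ingredients I would isolate first. (i) $\mathcal{LT}^{(k_1,k_2)}_q$ is closed under products: given construction trees of elimination depth at most $q$ for $F_1$ and $F_2$, joining their roots under one common product node yields a construction tree for $F_1F_2$ whose elimination depth is the maximum of the two, hence at most $q$. (ii) Restricting to graphs of fixed order $n$ bounds every relevant homomorphism count by $n$, so that counting thresholds can be realized with \emph{finite} interpolation sets. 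Closure (i) is exactly the hypothesis needed to invoke \Cref{lem:interpolation} on combinations from $\mathbb{R}\mathcal{LT}^{(k_1,k_2)}_q$.

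For the base case I would take $\mathfrak{F}$ to be $K_1$ (a single vertex carrying both labels) for $z_i = z_j$ and $K_2$ (endpoints labeled $z_i,z_j$) for $E(z_i,z_j)$, with the empty combination when $i=j$; these are fully labeled, hence in $\mathcal{LT}^{(k_1,k_2)}_0$. For negation $\varphi = \neg\psi$ I would set $\mathfrak{F}_{\neg\psi} \coloneqq I_L - \mathfrak{F}_\psi$, where $I_L$ is the fully labeled edgeless graph on $L = \free(\psi)$ satisfying $\hom(I_L, G) = 1$ for all $G$; this keeps $L_{\mathfrak{F}} = \free(\varphi)$ and does not raise elimination depth. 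For conjunction $\varphi = \psi_1 \land \psi_2$ I would use the product $\mathfrak{F}_{\psi_1}\mathfrak{F}_{\psi_2}$: by \Cref{lem:hom:counts}(2) its count factors as $[G\models\psi_1]\cdot[G\models\psi_2] = [G\models\varphi]$, its label set is $\free(\psi_1)\cup\free(\psi_2) = \free(\varphi)$, and by closure (i) it lies in $\mathbb{R}\mathcal{LT}^{(k_1,k_2)}_q$. Disjunction then follows through De Morgan.

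The heart of the argument is the counting quantifier $\varphi = \exists^{\ge m} z\,\psi$, where $\qr(\psi) \le q-1$, so the induction hypothesis supplies $\mathfrak{F}_\psi \in \mathbb{R}\mathcal{LT}^{(k_1,k_2)}_{q-1}$ modeling $\psi$ for order $n$. Deleting the label $z$ from every graph in $\mathfrak{F}_\psi$ adds one elimination node at each root, producing $\mathfrak{F}' \coloneqq \mathfrak{F}_\psi[z \to \bot] \in \mathbb{R}\mathcal{LT}^{(k_1,k_2)}_q$; by \Cref{lem:hom:counts}(3) one has $\hom(\mathfrak{F}',G) = \sum_{v\in V(G)} \hom(\mathfrak{F}_\psi, G[z\to v]) = |\{v\in V(G) : G[z\to v]\models\psi\}|$, which for $|V(G)| = n$ takes values in $\{0,\dots,n\}$. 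Applying \Cref{lem:interpolation} with $S^+ = \{m,\dots,n\}$ and $S^- = \{0,\dots,m-1\}$ to $\mathfrak{F}'$ yields the indicator $\mathfrak{F} \coloneqq \mathfrak{F}'[S^-,S^+]$ of $\hom(\mathfrak{F}',G)\ge m$, and the interpolation stays internal to $\mathbb{R}\mathcal{LT}^{(k_1,k_2)}_q$ by closure (i). Here I would verify that the new elimination node respects reusability: if $z = x_i \in [x_{k_1}]$ it is reusable and no condition applies, whereas if $z = y_j \in [y_{k_2}]$ the requantification restriction on $\mathsf{C}^{(k_1,k_2)}_q$ guarantees $y_j$ is not bound inside $\psi$, so $y_j$ is never eliminated within the trees of $\mathfrak{F}_\psi$ and the condition ``$y_j \notin L_{\lambda(s)}$ for ancestors $s$'' holds vacuously at the newly added root.

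I expect the counting-quantifier case to be the main obstacle, specifically the simultaneous bookkeeping of (a) the correspondence between one quantifier and one elimination node, so that quantifier-rank at most $q$ becomes elimination depth at most $q$; (b) the invocation of \Cref{lem:interpolation}, whose applicability rests entirely on closure under products together with the order bound $n$ keeping the threshold sets finite; and (c) the preservation of the non-reusability condition for $[y_{k_2}]$-labels. The Boolean and atomic cases are routine once product-closure of $\mathcal{LT}^{(k_1,k_2)}_q$ is established.
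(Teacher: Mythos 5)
Your proposal is correct and follows essentially the same route as the paper: structural induction with the atomic cases as (fully labeled) $K_1$/$K_2$, Boolean cases handled via products/complements inside $\mathbb{R}\mathcal{LT}^{(k_1,k_2)}_q$, and the counting quantifier handled by deleting the label $z$ (one new elimination node, matching quantifier-rank to elimination depth) followed by \Cref{lem:interpolation} with the finite threshold sets made available by the order bound $n$. Your reusability check is also the paper's: since $y_j$ cannot be requantified, it is never eliminated inside the trees of $\mathfrak{F}_\psi$, so the new root elimination is legitimate.
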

  \begin{proof}
    We proceed by induction on the structure of the formula $\varphi \in \mathsf{C}^{(k_1,k_2)}_q$.
  
    If $\varphi = (z_i = z_j)$ we let $\mathfrak{F} \coloneqq K_1$ with the single vertex labeled by $i$ and $j$.
  
    If $\varphi = E(z_i, z_j)$ and $i \neq j$ we let $\mathfrak{F} \coloneqq K_2$ with the endpoints being labeled with $i$ and $j$ respectively.
    For $i = j$ we let $\mathfrak{F}$ be the linear combination with no non-zero coefficients.
  
    If $\varphi = \neg\psi$ there exists $\mathfrak{F}_{\psi} \in \mathbb{R}\mathcal{LT}^{(k_1,k_2)}_q$ that models $\psi$ for graphs of order $n$ by induction.
    We then let $\mathfrak{F} \coloneqq \mathfrak{F}_{\psi}[\{1\}, \{0\}]$ and since $\mathcal{LT}^{(k_1,k_2)}_q$ is closed under taking products from \Cref{lem:interpolation} it follows that $\mathfrak{F} \in \mathbb{R}\mathcal{LT}^{(k_1,k_2)}_q$ models $\varphi$ for graphs of order $n$.
  
    If $\varphi = \psi_1 \lor \psi_2$ there exist $\mathfrak{F}_{\psi_1}, \mathfrak{F}_{\psi_2} \in \mathbb{R}\mathcal{LT}^{(k_1,k_2)}_q$ modeling the respective formulas for graphs of order $n$ by induction.
    We set $\mathfrak{F} \coloneqq (\mathfrak{F}_{\psi_1} + \mathfrak{F}_{\psi_2})[\{0\}, \{1,2\}]$ and again by \Cref{lem:interpolation} the claim follows.
  
    If $\varphi = \exists^{\geq t} z \psi$ by induction there exists $\mathfrak{F}_{\psi} \in \mathbb{R}\mathcal{LT}^{(k_1,k_2)}_{q-1}$ that models $\psi$ for graphs of order $n$ since $\qr(\psi) = \qr(\varphi)-1$.
    Let $\mathfrak{F}'_{\psi}$ be the linear combination obtained from $\mathfrak{F}_{\psi}$ by deleting all the labels $z$ occurring in graphs in $\mathfrak{F}_{\psi}$.
    We then set $\mathfrak{F} \coloneqq \mathfrak{F}'_{\psi}[\{0,...,t-1\}, \{t,...,n\}]$, which models $\varphi$ for graphs of order $n$.
    If $z \in [y_{k_2}]$ it holds that $z \in \free(\psi) = L_{\mathfrak{F}_\psi}$ (or $z$ does not occur in $\psi$) and hence $z$ was not deleted in the construction tree of any graph in $\mathfrak{F}_{\psi}$.
    Thus, $\mathfrak{F} \in \mathbb{R}\mathcal{LT}^{(k_1,k_2)}_{q}$.
  \end{proof}

\begin{theorem}\label{thm:tree:hom:ind}
    For $k_1+k_2, q \in \mathbb{N}_+$ and graphs $G,H$ the following are equivalent:
    \begin{itemize}
        \item $G$ and $H$ are $\mathsf{C}^{(k_1,k_2)}_q$-equivalent.
        \item $G$ and $H$ are homomorphism indistinguishable over the class $\mathcal{T}^{(k_1,k_2)}_q$.
    \end{itemize}
\end{theorem}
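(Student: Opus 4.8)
The plan is to mirror the two-directional argument of \Cref{thm:path:hom:ind}, replacing the path-specific ingredients by their tree analogues \Cref{lem:tree:hom:count:definable} and \Cref{lem:tree:model:combination}, and to establish each implication by contraposition. Throughout I identify the unlabeled class $\mathcal{T}^{(k_1,k_2)}_q$ with the label-free members of $\mathcal{LT}^{(k_1,k_2)}_q$ via \Cref{thm:tree:characterization}, so that a distinguishing graph produced on one side may be fed into the construction-tree machinery on the other, and so that sentences (formulas with $\free = \emptyset$) correspond precisely to unlabeled graphs ($L_F = \emptyset$).

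For the implication from $\mathsf{C}^{(k_1,k_2)}_q$-equivalence to homomorphism indistinguishability I would argue contrapositively. Given $F \in \mathcal{T}^{(k_1,k_2)}_q$ with $\hom(F,G) \neq \hom(F,H)$, \Cref{thm:tree:characterization} places $F$ in $\mathcal{LT}^{(k_1,k_2)}_q$, and since $F$ is unlabeled we have $L_F = \emptyset$. Applying \Cref{lem:tree:hom:count:definable} with $m \coloneqq \hom(F,G)$ yields a \emph{sentence} $\varphi^F_m \in \mathsf{C}^{(k_1,k_2)}_q$ with $G \models \varphi^F_m$ but $H \not\models \varphi^F_m$ (as $\hom(F,H) \neq m$), witnessing non-equivalence. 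This direction is essentially immediate once \Cref{thm:tree:characterization} and \Cref{lem:tree:hom:count:definable} are in hand.

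The converse direction is where the genuine subtlety lies, and it is the step I expect to be the main obstacle. Again contrapositively, we are handed a distinguishing sentence $\varphi \in \mathsf{C}^{(k_1,k_2)}_q$ with $G \models \varphi$ and $H \not\models \varphi$. The difficulty is that \Cref{lem:tree:model:combination} only produces a linear combination $\mathfrak{F} \in \mathbb{R}\mathcal{LT}^{(k_1,k_2)}_q$ that \emph{models $\varphi$ for graphs of a fixed order $n$} — in contrast to the order-independent tuple count exploited in the path case — so a single $\mathfrak{F}$ need not control the values at both $G$ and $H$ when these have different orders. I would circumvent this by first comparing orders: since $K_1 \in \mathcal{T}^{(k_1,k_2)}_q$ (a single bag, which is realized as an exception vertex when $k_1 = 0$), homomorphism indistinguishability over the class already forces $\hom(K_1, G) = \hom(K_1, H)$, that is, a common order $n \coloneqq |V(G)| = |V(H)|$; otherwise $K_1$ itself distinguishes $G$ and $H$ and we are done. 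With the common order $n$ fixed, \Cref{lem:tree:model:combination} supplies $\mathfrak{F}$ modeling $\varphi$ for order $n$, whence $\hom(\mathfrak{F}, G) = [G \models \varphi] = 1 \neq 0 = [H \models \varphi] = \hom(\mathfrak{F}, H)$. Consequently some unlabeled summand $F \in \mathcal{LT}^{(k_1,k_2)}_q = \mathcal{T}^{(k_1,k_2)}_q$ satisfies $\hom(F,G) \neq \hom(F,H)$, contradicting homomorphism indistinguishability. Beyond this order-handling, the remaining work is pure bookkeeping, and the closure of $\mathcal{LT}^{(k_1,k_2)}_q$ under products needed for the interpolation \Cref{lem:interpolation} has already been absorbed into the proof of \Cref{lem:tree:model:combination}.
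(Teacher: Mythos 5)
Your proposal is correct and follows essentially the same route as the paper's proof: the forward direction via \Cref{lem:tree:hom:count:definable} applied to the unlabeled (hence sentence-producing) graph $F$, and the converse via the order comparison with $K_1$ followed by \Cref{lem:tree:model:combination} for the common order $n$ and extraction of a distinguishing summand. The order-handling step you flag as the main subtlety is exactly the step the paper performs.
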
 
  \begin{proof}
    First, assume there exists a graph $F \in \mathcal{T}^{(k_1,k_2)}_q$ such that $\hom(F, G) \neq \hom (F, H)$.
    By \Cref{lem:tree:hom:count:definable} for every $m \in \mathbb{N}$ there exists $\varphi^F_m \in \mathsf{C}^{(k_1,k_2)}_q$ such that $G \models \varphi^F_m$ exactly if $\hom(F,G) = m$.
    Thus, $G \models \varphi^F_{\hom(F, G)}$ and $G \centernot\models \varphi^F_{\hom(F, G)}$.
  
    For the converse assume there exists a sentence $\varphi \in \mathsf{C}^{(k_1,k_2)}_q$ such that $G \models \varphi$ and $G \centernot\models \varphi$.
    If $|G| \neq |H|$ it follows that $\hom(K_1,G) \neq \hom(K_1, H)$, so assume $|G| = |H| = n$.
    By \Cref{lem:tree:model:combination} there exists $\mathfrak{F}_\varphi = \sum_i c_i F^{\varphi}_i \in \mathbb{R}\mathcal{LT}^{(k_1,k_2)}_q$ modeling $\varphi$ for graphs of order $n$.
    Thus, it holds that $\hom(\mathfrak{F}_{\varphi}, G) \neq \hom(\mathfrak{F}_{\varphi}, H)$ and hence there exists $F_i^{\varphi} \in \mathcal{LT}^{(k_1,k_2)}_q$ such that $\hom(F_i^{\varphi}, G) \neq \hom(F_i^{\varphi}, H)$.
  \end{proof}

\subsection{Homomorphism distinguishing closedness} \label{subsec:hdc}

We follow the approach from \cite{Neuen2024} to combine pursuit-evasion and model-comparison games to prove
homomorphism distinguishing closedness for the classes $\mathcal{T}^{(k_1,k_2)}_q$ and $\dotcup\mathcal{P}^{(k_1,k_2)}$.

First, we introduce a reusability-restricted variant of the all-in-one bijective pebble game from \cite{Montacute2024} to characterize ${\land}\mathsf{C}^{(k_1,k_2)}_{\infty \omega}$-equivalence.
\begin{definition}
    Let $\mathcal{A}, \mathcal{B}$ be $\sigma$-structures and $k_1+k_2 \in \mathbb{N}_+$. The \emph{all-in-one bijective $(k_1,k_2)$-pebble game} $\ABP^{(k_1,k_2)}(\mathcal{A}, \mathcal{B})$ is defined as follows:

    The game is played by the two players \emph{Spoiler} and \emph{Duplicator} on the structures $\mathcal{A}$ and $\mathcal{B}$.
    During the first and only round of the game, the following steps are performed:
    \begin{enumerate}
        \item Spoiler chooses a sequence of pebbles $\overline{z} = (z_1,\dotsc,z_n) \in [x_{k_1}, y_{k_2}]^n$ such that each $y_j \in [y_{k_2}]$ occurs at most once in $\overline{z}$.
        \item Duplicator chooses a bijection $h_{\overline{z}} \colon V(\mathcal{A})^n \to V(\mathcal{B})^n$.
        \item Spoiler chooses $\overline{v} \in V(\mathcal{A})^n$ and defines the sequence $\overline{s} \coloneqq [(z_i, v_i)]_{i \in [n]}$.
        \item Duplicator responds with the sequence $\overline{d} \coloneqq [(z_i, h_{\overline{z}}(\overline{v})[i])]_{i \in [n]}$.
    \end{enumerate}
    Duplicator wins if for all $i \in [n]$ the function $\eta_i$ defined by setting $\eta_i(\last_z(\overline{s}[1,i])) \coloneqq \last_z(\overline{d}[1,i])$
    for each $z \in [x_{k_1}, y_{k_2}]$ is a partial isomorphism between $\mathcal{A}$ and $\mathcal{B}$.
\end{definition}

We extend the proof of \cite[Theorem 5.9.]{Montacute2024} to the setting with restricted requantification and obtain the following: 
\begin{theorem} \label{thm:logic:abp}
    Let $\mathcal{A}, \mathcal{B}$ be finite $\sigma$-structures and $k_1+k_2 \in \mathbb{N}_+$. The following are equivalent:
    \begin{enumerate}
        \item $\mathcal{A}$ and $\mathcal{B}$ are ${\land}\mathsf{C}^{(k_1,k_2)}_{\infty \omega}$-equivalent. \label{inifinitary-equiv}
        \item $\mathcal{A}$ and $\mathcal{B}$ are ${\land}\mathsf{C}^{(k_1,k_2)}$-equivalent. \label{finite-equiv}
        \item Duplicator has a winning strategy for $\ABP^{(k_1,k_2)}(\mathcal{A},\mathcal{B})$. \label{duplicator-winning}
    \end{enumerate}
\end{theorem}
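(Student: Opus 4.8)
The plan is to prove the cycle of implications $(\ref{inifinitary-equiv}) \Rightarrow (\ref{finite-equiv}) \Rightarrow (\ref{duplicator-winning}) \Rightarrow (\ref{inifinitary-equiv})$. The first implication $(\ref{inifinitary-equiv}) \Rightarrow (\ref{finite-equiv})$ is immediate: every formula of ${\land}\mathsf{C}^{(k_1,k_2)}$ is in particular a formula of ${\land}\mathsf{C}^{(k_1,k_2)}_{\infty\omega}$, so agreement on the larger logic entails agreement on the smaller one. This sandwiches the two substantial directions, which follow the template of \cite{Montacute2024} for the case $k_2=0$, with the extra bookkeeping needed to track the non-reusable pebbles $y_1,\dots,y_{k_2}$.

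For $(\ref{duplicator-winning}) \Rightarrow (\ref{inifinitary-equiv})$ I would show that a Duplicator winning strategy preserves every sentence. By \Cref{prop:primitive} it suffices to treat sentences in the normal form $\exists^n \overline{w}\, \exists y_1 \cdots \exists y_{k_2}\, \bigwedge_{i\in[k_2]} y_i = w_{\ell_i} \land \chi$ with $\chi$ primitive and quantifying only over $[x_{k_1}]$. Such a sentence records exactly the data Spoiler may present in one round: the nesting of existential pebble placements in $\chi$ together with the guarded placements of the $y_j$ determines a pebble sequence $\overline{z}$ in which each $y_j$ occurs at most once --- this is precisely the requantification ban on $[y_{k_2}]$ turned into the legality condition of the game --- while the counting quantifier $\exists^n \overline{w}$ asks for the number of witnessing tuples. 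Given such a $\varphi$, Spoiler's sequence $\overline z$ is legal, and I would argue that Duplicator's bijection $h_{\overline z}$ restricts to a bijection between the tuples satisfying the matrix of $\chi$ in $\mathcal{A}$ and those satisfying it in $\mathcal{B}$: the partial-isomorphism win condition at every prefix guarantees, by induction on the structure of $\chi$, that the quantifier-free content of $\chi$ is preserved coordinatewise, and the single-sequence format of the round matches the restricted conjunction in $\chi$, since no adaptive branching on Duplicator's response is allowed. Hence the two counts agree and $\mathcal{A} \models \varphi \iff \mathcal{B} \models \varphi$.

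For $(\ref{finite-equiv}) \Rightarrow (\ref{duplicator-winning})$ I would argue the contrapositive. Negating the existence of a Duplicator winning strategy directly exhibits a legal pebble sequence $\overline z$ such that for every bijection $h$ some tuple $\overline v$ violates the prefix-wise partial-isomorphism condition; since $\mathcal{A},\mathcal{B}$ are finite this is elementary and needs no general determinacy theorem. Because there are only finitely many bijections, the failure to match can be recast as a mismatch in the number of tuples whose prefixes induce a partial isomorphism of the relevant type. From this mismatch I would read off a distinguishing sentence: the sequence $\overline z$ supplies the quantifier prefix and pebble placements, the count mismatch supplies the outer counting quantifier $\exists^n \overline{w}$, and the required partial isomorphism supplies the quantifier-free matrix. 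As Spoiler plays a single sequence the conjunctions that arise are restricted, and as each $y_j$ occurs at most once in $\overline z$ no variable from $[y_{k_2}]$ is requantified; thus the sentence lies in ${\land}\mathsf{C}^{(k_1,k_2)}$ and separates $\mathcal{A}$ from $\mathcal{B}$, contradicting $(\ref{finite-equiv})$.

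The main obstacle is reconciling the all-in-one (single-round) bijection with the nested counting expressed by the normal form. One global bijection on $n$-tuples, committed to before Spoiler reveals $\overline v$, must simultaneously certify every level of the prefix-wise partial-isomorphism condition, so that it witnesses equality of all the nested counting quantities in $\chi$ at once; at the same time the legality constraint on the $y_j$ must faithfully mirror the prohibition on requantifying $[y_{k_2}]$-variables. Making this correspondence precise is the delicate technical point shared by both nontrivial directions, and the reduction to normal form via \Cref{prop:primitive} is exactly what renders it tractable, since it lets a single legal sequence $\overline z$ encode the entire quantifier structure of an arbitrary sentence.
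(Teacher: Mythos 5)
Your proposal is correct and follows essentially the same route as the paper: the trivial implication from the infinitary to the finite logic, the contrapositive of $(\ref{finite-equiv}) \Rightarrow (\ref{duplicator-winning})$ via type formulas read off from Spoiler's pebble sequence (with the count mismatch supplying the outer counting quantifier), and $(\ref{duplicator-winning}) \Rightarrow (\ref{inifinitary-equiv})$ via the \Cref{prop:primitive} normal form, using that Duplicator's bijection restricts to a bijection between the witnessing tuples. Your identification of the single legal pebble sequence encoding the entire quantifier prefix, with the once-only occurrence of each $y_j$ mirroring the requantification ban, is exactly the mechanism the paper uses.
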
 
\begin{proof}
  We restrict ourselves to the case of undirected graphs here for notational convenience.
  The proof is the same for $\sigma$-structures by replacing the edge relation with every $R \in \sigma$. 

    The implication \ref{inifinitary-equiv} $\Longrightarrow$ \ref{finite-equiv} is immediate. 

    \ref{finite-equiv} $\Longrightarrow$ \ref{duplicator-winning}: 
    We argue by contraposition and assume that Spoiler has a winning strategy for the game $\ABP^{(k_1,k_2)}(G,H)$. 
    Let $\overline{p}=(p_1,\dotsc,p_n)$ be the pebble sequence Spoiler chooses in the beginning of the game. 
    For a graph $F$ and $\overline{v} \in V(F)^n$ we set $\overline{pv} \coloneqq [(p_i,v_i)]_{i \in [n]}$
    We construct a non-counting formula $\mathsf{tp}_{\overline{pv}}(w_1,\dotsc,w_n) \in {\land}\mathsf{C}^{(k_1,k_2)}$
    such that for every graph $F'$ and $\overline{u} \in V(F')^n$ it holds $F' \models \mathsf{tp}_{\overline{pv}}(\overline{u})$
    if and only if for all $i \in [n]$ the function $\eta_i$ defined by setting $\eta_i(\last_z(\overline{pv}[i])) \coloneqq \last_z(\overline{pu}[i])$
    for each $z \in [x_{k_1}, y_{k_2}]$ is a partial isomorphism.   
    For $i \in [n]$ define the \emph{literal diagram} of $\overline{pv}[1,i]$ by induction as
    \begin{align*}
      \mathsf{diag}_{\overline{pv}[1,i]} \coloneqq & \{E(z_\ell, z_j) : \last_{z_\ell}(\overline{pv}[1,i]) = v_{i_\ell}, \last_{z_j}(\overline{pv}[1,i+1]) = v_{i_j},  v_{i_\ell}v_{i_j} \in E(F)\} \\
      & \cup \{\neg E(z_\ell, z_j) : \last_{z_\ell}(\overline{pv}[1,i]) = v_{i_\ell}, \last_{z_j}(\overline{pv}[1,i]) = v_{i_j},  v_{i_\ell}v_{i_j} \not\in E(F)\}
    \end{align*}
    Next, let $m \in [n-1]\cup\{0\}$ be the largest index such that after no pebble occurs more than once in $\overline{p}[m+1,n]$. 
    By induction on $j \in [m-1]$ we define the formula $\varphi_j(w_1,\dotsc,w_j,z_1\dotsc,z_{j-1})$ by $\varphi_0 \coloneqq \top$ and 
    $$\varphi_{j+1} \coloneqq \exists z_j \Bigl(z_j = w_j \land \bigwedge_{\psi \in \mathsf{diag}_{\overline{pv}[j]}} \psi \land \varphi_j\Bigr)$$
    Finally, we define 
    $$\mathsf{tp}_{\overline{pv}} \coloneqq \exists z_{m+1} \ldots z_{n} \Bigl(\bigwedge_{j \in \{m+1,\dots,n\}}z_j=w_j \land \bigwedge_{\psi \in \mathsf{diag}_{\overline{pv}}} \psi \land \varphi_m\Bigr)$$
    Note that in $\mathsf{tp}_{\overline{pv}[i]}$ no variable $y \in [y_{k_2}]$ is requantified because 
    no corresponding pebble is reused in $\overline{p}$. Also, each set $\mathsf{diag}_{\overline{pv}[j]}$ is finite, so $\mathsf{tp}_{\overline{pv}[i]} \in {\land}\mathsf{C}^{(k_1,k_2)}$.
    In $\ABP^{(k_1,k_2)}(G,H)$, Duplicator chooses a bijection $h \colon V(G)^n \to V(H)^n$ and Spoiler responds with a choice $\overline{v}\in V(G)^n, h(\overline{v}) \in V(H)^n$.
    Since Spoiler wins the game, it holds that for some $i \in [n]$, the function defined by setting 
    $\eta_i(\last_z(\overline{pv}[1,i])) \coloneqq \last_z(\overline{ph(\overline{v})}[i])$ for each $z \in [x_{k_1}, y_{k_2}]$ is not a partial isomorphism.
    This in turn yields $G \models \mathsf{tp}_{\overline{pv}[1,i]}$ but $H \centernot\models \mathsf{tp}_{\overline{ph(\overline{v})}[1,i]}$ (or vice versa).  

    \ref{duplicator-winning} $\Longrightarrow$ \ref{inifinitary-equiv}: 
    We show that for every sentence $\varphi \in {\land}\mathsf{C}_{\infty \omega}^{(k_1,k_2)}$ with $G \models \varphi$ it also holds $H \models \varphi$ and vice versa (which follows by symmetry). 
    By \Cref{prop:primitive} we may assume that $\varphi$ is a disjunction of primitive formulas of the form $\exists^n \overline{w} \psi(\overline{w})$.
    We let $\overline{w} = (w_{\ell_1},\dotsc,w_{\ell_m})$ and assume that in $\ABP^{(k_1,k_2)}(G,H)$ Spoiler picks the pebble sequence $\overline{z} = (z_1,\dotsc,z_m)$
    such that $z_i = w_{\ell_i}$ is the equality guarding the quantification $\exists z_i$ in $\varphi$. 
    Then each $y_j \in [y_{k_2}]$ occurs at most once in this pebble sequence since it is not requantified in $\varphi$. 
    Let $h_{\overline{z}} \colon V(G)^m \to V(H)^m$ be the response from Duplicator's winning strategy.
    This function is a partial isomorphism on the actively pebbled vertices according to the sequence $\overline{z}$.
    Thus, For every $\overline{v} \in V(G)^m$ it holds that $G \models \psi(\overline{v})$ exactly if $H \models \psi(h_{\overline{z}}(\overline{v}))$ by the isomorphism-invariance of first-order logic.
    By assumption, there exist exactly $n$ such tuples $\overline{v} \in V(G)^m$ and since $h_{\overline{z}}$ is a bijection, also exactly $n$ such tuples $h_{\overline{z}}(\overline{v}) \in V(H)^m$.
    Therefore, we also have that $H \models \varphi$. 
\end{proof}
The proof shows that the theorem also holds for infinite structures if we omit \Cref{finite-equiv}.

The next lemma is the key technical ingredient to establish our first homomorphism distinguishing closedness result.
But more generally, it shows that the capability of $G$ to be decomposed in a path-like fashion with reusability constraints provides a lower bound for the distinguishability of the CFI graphs $X(G), \widetilde{X}(G)$ by ${\land}\mathsf{C}^{(k_1,k_2)}_{\infty \omega}$.
The proof idea is that the position of the fugitive in $\NS^{(k_1,k_2)}(G)$ corresponds to the position where the difference of the CFI graphs is moved to in $\ABP^{(k_1,k_2)}(X(G), \widetilde{X}(G))$ by \Cref{lem:cfi:twist:iso}.
Thus, the invisibility of the fugitive corresponds to the fact that Spoiler has to fix their entire strategy at once.

\begin{lemma} \label{lem:cfi:path:game}
    Let $k_1+k_2 \in \mathbb{N}_+$ and $G$ be a connected graph. If the fugitive has a winning strategy for the game $\NS^{(k_1,k_2)}(G)$, then Duplicator has a winning strategy for $\ABP^{(k_1,k_2)}(X(G), \widetilde{X}(G))$.
\end{lemma}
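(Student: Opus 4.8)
The plan is to turn a winning strategy of the fugitive in $\NS^{(k_1,k_2)}(G)$ into a winning bijection for Duplicator, using the twisting isomorphisms of \Cref{lem:cfi:twist:iso} to keep the single ``defect'' distinguishing $X(G)=X_\emptyset(G)$ from $\widetilde X(G)=X_{\{w_0\}}(G)$ hidden wherever the fugitive sits. Fix $w_0\in V(G)$. Given Spoiler's pebble sequence $\overline z=(z_1,\dots,z_n)$, in which each $y_j$ occurs at most once, I read it as the move order of the searchers: moving pebble $z_i$ to the base vertex $u_i:=\rho(v_i)$ is exactly a searcher move, and the constraint that each $y_j$ appears once matches the rule that a non-reusable searcher is placed only while unplaced. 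For a tuple $\overline v\in V(X(G))^n$ the projections $\overline u=\rho(\overline v)$ thus determine a concrete play of $\NS^{(k_1,k_2)}(G)$, and since the fugitive wins, their strategy prescribes after each round $i$ a vertex $t_i=t_i(\overline u)$ in the nonempty contaminated region, hence avoiding the occupied set $\im(\gamma_i)=\{u_j: z_j \text{ is still active after round } i\}$.

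Using these escape vertices I would route the defect. Because the fugitive stays inside a single connected component of $G-\im(\gamma)$ during each round, there is a path from $t_{i-1}$ to $t_i$ avoiding all then-active searchers, so \Cref{lem:cfi:twist:iso} supplies an isomorphism $\varphi_{t_{i-1},t_i}$ that is the identity outside this path. Composing these incrementally, starting from an isomorphism for an initial escape vertex $t_0$, yields isomorphisms $\Phi_i\colon X_{\{t_i\}}(G)\to X_{\{w_0\}}(G)$. Since $u_i\neq t_i$, the gadgets satisfy $F_\emptyset(u_i)=F_{\{t_i\}}(u_i)$, so $v_i$ also lives in $X_{\{t_i\}}(G)$, and I define Duplicator's response coordinatewise by $h_{\overline z}(\overline v)_i:=\Phi_i(v_i)$, where the walk $t_1(\overline u),\dots,t_n(\overline u)$ depends only on $\overline u$.

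This map is a bijection: each $\Phi_i$ preserves base vertices by \Cref{lem:cfi:twist:iso}, so $h_{\overline z}$ maps the fiber over $\overline u$ to itself, where it acts coordinatewise via the bijection that each $\Phi_i$ induces between the gadget of $u_i$ in $X_\emptyset(G)$ and in $X_{\{w_0\}}(G)$ (of equal size by \Cref{lem:cfi:isomorphic}). The heart of the argument, and the step I expect to be the main obstacle, is verifying that $h_{\overline z}$ wins, i.e.\ that for every prefix length $i$ the induced map $\eta_i$ on the currently pebbled vertices is a partial isomorphism. The point is a \emph{freezing} phenomenon: if pebble $z_j$ is still active after round $i$, then its base vertex $u_j$ is occupied by a searcher throughout rounds $j{+}1,\dots,i$, so each incremental twisting path stays in the contaminated region and avoids $u_j$; hence the composite twist from $t_j$ to $t_i$ fixes the entire gadget of $u_j$ and therefore $\Phi_j(v_j)=\Phi_i(v_j)$. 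Consequently $\eta_i$ is just the restriction of the single isomorphism $\Phi_i$ to the active vertices, which avoid the gadget of $t_i$ and on which $X_\emptyset(G)$ and $X_{\{t_i\}}(G)$ induce the same substructure; thus $\eta_i$ is a partial isomorphism.

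The conceptual reason this works is the one highlighted before the statement: the invisibility of the fugitive forces the searchers' move order to be fixed in advance, which matches Spoiler committing to the whole sequence $\overline z$ at once, while the fugitive's freedom to sit anywhere in the contaminated region is precisely Duplicator's freedom to place the defect. I would also note that \Cref{prop:search:non-reusable:first} can be used to normalise the strategy so that the $k_2$ non-reusable searchers are placed first, streamlining the bookkeeping for the once-used pebbles $y_j$, and that the entire analysis specialises to the case $k_2=0$, recovering the unrestricted result of \cite{Montacute2024,Seppelt2024/2025}.
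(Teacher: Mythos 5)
Your proposal is correct and follows essentially the same route as the paper's proof: interpret Spoiler's pebble sequence together with the base-vertex projections as a play of $\NS^{(k_1,k_2)}(G)$, use the fugitive's escape vertices and \Cref{lem:cfi:twist:iso} to compose twisting isomorphisms $\Phi_i$, define $h_{\overline z}$ coordinatewise, and verify the winning condition from the fact that the twisting paths avoid the actively pebbled base vertices. The only cosmetic difference is that the paper verifies the partial-isomorphism condition by an induction with a case split on fresh versus reused pebbles, whereas you package the same fact as the ``freezing'' observation that $\eta_i$ is the restriction of the single isomorphism $\Phi_i$.
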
 
\begin{proof}
  Given a winning strategy for the fugitive in $\NS^{(k_1,k_2)}(G)$, we provide a winning strategy for Duplicator in $\ABP^{(k_1,k_2)}(X(G), \widetilde{X}(G))$.
  By \Cref{lem:cfi:isomorphic} we may assume $\widetilde{X}(G) = X_{\{u\}}(G)$ for some $u \in V(G)$ and that in $\NS^{(k_1,k_2)}(G)$ the fugitive is initially placed on $u$.

  At the beginning of the game, Spoiler chooses a sequence $\overline{p} = (p_1,\dotsc,p_n) \in [x_{k_1}, y_{k_2}]^n$ of pebbles.
  We construct a bijection $h_{\overline{p}} \colon V(X(G))^n \to V(X_{\{u\}}(G))^n$ for the winning strategy of Duplicator.
  Given the sequence of pebbles $\overline{p}$ and a sequence of vertices $((v_1, S_1),\dotsc,(v_n, S_n)) \in V(X(G))^n$, we let $\overline{s} = [p_i, (v_i, S_i)]_{i \in [n]}$.
  Let $f(\overline{s}[1,i])$ be the vertex which the fugitive escapes to when the searchers are placed according to the base vertices of $\overline{s}[1,i]$,
  i.e., when $\gamma(z) = \rho(\last_z(\overline{s}[1,i]))$ for each $z \in [x_{k_1},y_{k_2}]$.
  Since the fugitive has a winning strategy by assumption, for each $i \in [n-1]$ there exists a path $P$ from $f(\overline{s}[1,i])$ to $f(\overline{s}[1,i+1])$ in $G$ such that not vertex of $P$ is occupied by a searcher.
  By \Cref{lem:cfi:twist:iso} there exists an isomorphism $\varphi_{f(\overline{s}[1,i+1]), f(\overline{s}[1,i])} : X_{f(\overline{s}[1,i+1])} \to X_{f(\overline{s}[1,i])}$ which preserves gadgets and is the identity on every gadget outside of $P$.
  For each sequence $\overline{s} = [p_i, (v_i, S_i)]_{i \in [n]}$ we set
  $$\psi_1 \coloneqq \varphi_{f(\overline{s}[1]), u} \text{ and } \psi_{i+1} \coloneqq \psi_i \circ \varphi_{f(\overline{s}[1,i+1]), f(\overline{s}[1,i])}.$$
  We then define
  $$h_{\overline{p}} \colon ((v_1, S_1),\dotsc,(v_n, S_n)) \mapsto (\psi_1(v_1, S_1), \dotsc, \psi_n(v_n, S_n))$$
  and first show that $h_{\overline{p}}$ is a bijection by induction on $n$.
  For $n=1$ consider the function $h_{\overline{p}} \colon (v,S) \mapsto \varphi_{f(v), u}(v, S)$.
  It holds $f(\overline{s}[1]) \neq v_1$ and $\rho(h_{\overline{p}}(v_1,S)) = \rho(\varphi_{f(\overline{s}[1]), u}(v_1, S)) = v_1$. 
  If $v \in V(G)$ is fixed, the function $h_{\overline{p}}(v, \cdot) = \varphi_{f(v), u}(v, \cdot)$ is a bijection between $F_{\emptyset}(v)$ and $F_{u}(v)$.
  For $n > 1$ the restriction of $h_{\overline{p}}$ to $V(X(G))^{n-1}$ is a bijection to $V(X_{\{u\}}(G))^{n-1}$ by the inductive hypothesis.
  Now let $((v_1, S_1),\dotsc,(v_{n-1}, S_{n-1})) \in V(X(G))^{n-1}$ be fixed, then for each $v \in V(G)$ the map $(v,S) \mapsto \psi_n(v, S)$ is a bijection between $F_{\emptyset}(v)$ and $F_{u}(v)$.
  Thus, also $h_{\overline{p}}$ is a bijection and hence a valid choice for Duplicator.
  Now Spoiler chooses $\overline{a} \in V(X(G))^n$ and Duplicator responds with $h_{\overline{p}}(\overline{a})$.
  We show by induction on $i \leq n$ that the partial function $\eta_i \colon V(X(G)) \rightharpoonup V(\widetilde{X}_u(G))$ induced by mapping $ \last_z(\overline{s}[1,i]) \mapsto \last_z(\overline{d}[1,i])$ for each $z \in [x_{k_1}, y_{k_2}]$ is a partial isomorphism.
  For $i=1$ the claim holds trivially.
  Assume by induction that for $1 < i \leq n$ the function $\eta_{i-1}$ is a partial isomorphism.
  We distinguish two cases:

  First, assume $p_i$ occurs for the first time at index $i$ in $\overline{p}$, then $\eta_{i-1}$ and $\eta_i$ differ in at most one vertex $(v, S) \in V(X(G))$ with $(v,S) = \last_{p_i}(\overline{s}[i]) = a_i$.
  If $(v,S)$ occurs  at some position $j < i$ in $\overline{s}[i]$, then $\varphi_{f(\overline{s}[j-1]), f(\overline{s}[j])}(v, S) = \varphi_{f(\overline{s}[i-1]), f(\overline{s}[i])}(v, S)$, so $\eta_i$ is a partial isomorphism.
  Otherwise, it holds $\eta_{i-1}(v,S) = \bot$ and $\eta_i(v, S) = (v, S')$ for some $(v, S') \in F_u(v)$.
  The map $\varphi_{f(\overline{s}[i-1]), f(\overline{s}[i])}$ is an isomorphism between $\widetilde{X}_{f(\overline{s}[i-1])}(G) - F_{f(\overline{s}[i-1])}$ and $\widetilde{X}_{f(\overline{s}[i])}(G) - F_{f(\overline{s}[i])}$ and $v \neq f(\overline{s}[i])$, so also $\eta_i$ is a partial isomorphism.

  Second, assume there exists a position $j < i$ such that $p_j = p_i$ and let $j$ be the largest one. Then the pebble must be reusable, i.e. $p_i \in [x_{k_1}]$.
  Now $\eta_{i-1}$ and $\eta_i$ differ in two vertices $(v, S), (w,T) \in V(X(G))$ with $(v,S) = \last_{p_i}(\overline{s}[i]) = a_i$, $(w,T) = \last_{p_i}(\overline{s}[i-1]) = a_j$
  and $\eta_i(w,T) = \bot, \eta_i(v,S) = \eta_i(v,S')$. Then $v$ does not appear among $f(\overline{s}[j]), f(\overline{s}[j+1]),\dotsc,f(\overline{s}[i])$ and again each $\varphi_{f(\overline{s}[j']), f(\overline{s}[j'+1])}$
  is an isomorphism between $\widetilde{X}_{f(\overline{s}[j'])}(G) - F_{f(\overline{s}[j'])}$ and $\widetilde{X}_{f(\overline{s}[j'+1])}(G) - F_{f(\overline{s}[j'+1])}$ for $j\leq j'<i$.
  Then again, $\eta_i$ is a partial isomorphism by induction.
\end{proof}
 
We conjecture that using similar techniques also the reverse implication of this lemma can be shown, which would in particular yield 
that the pathwidth of a graph $G$ is exactly the minimum $k$ such that ${\land}\mathsf{C}^{k+1}$ distinguishes $X(G)$ and $\widetilde{X}(G)$.
However, to prove our next theorem \Cref{lem:cfi:path:game} suffices.

\begin{theorem} \label{thm:pathwidth:hdc}
    The classes $\mathcal{P}^{(k_1,0)}$ and $\dot{\cup}\mathcal{P}^{(k_1,k_2)}$ are homomorphism distinguishing closed.
\end{theorem}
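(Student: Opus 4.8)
The plan is to derive both statements from the sufficient condition for homomorphism distinguishing closedness in \Cref{lem:sufficient:hdc}. For a class $\mathcal{F} \in \{\mathcal{P}^{(k_1,0)}, \dotcup\mathcal{P}^{(k_1,k_2)}\}$ this requires two things: first, that $\mathcal{F}$ is closed under taking disjoint unions and summands; and second, that for every connected graph $G \notin \mathcal{F}$ the CFI-graphs satisfy $X(G) \equiv_{\mathcal{F}} \widetilde{X}(G)$. The whole argument is then a chain through the game-theoretic and logical characterizations established earlier, with \Cref{lem:cfi:path:game} as the technical heart.

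For the closure conditions I would argue as follows. By \Cref{prop:closure} the class $\mathcal{P}^{(k_1,k_2)}$ is always closed under summands, and it is closed under disjoint unions precisely when $k_2 = 0$; this settles $\mathcal{P}^{(k_1,0)}$. For $\dotcup\mathcal{P}^{(k_1,k_2)}$, closure under disjoint unions is immediate from the definition, while closure under summands follows from \Cref{prop:path:component:characterization}, which shows that membership in $\dotcup\mathcal{P}^{(k_1,k_2)}$ is a componentwise property: $F_1 \dotcup F_2 \in \dotcup\mathcal{P}^{(k_1,k_2)}$ exactly if every connected component of $F_1$ and of $F_2$ lies in $\mathcal{P}^{(k_1,k_2)}$.

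For the CFI-condition, fix a connected $G \notin \mathcal{F}$. Since $G$ is connected, $G \notin \mathcal{F}$ is equivalent to $G \notin \mathcal{P}^{(k_1,k_2)}$ in both cases (with $k_2 = 0$ for the first class), so by \Cref{thm:path:characterization} the searchers have no winning strategy for $\NS^{(k_1,k_2)}(G)$. As this is a finite reachability game---clearing the graph amounts to emptying the set of potential fugitive positions---determinacy yields a winning strategy for the fugitive. Feeding this into \Cref{lem:cfi:path:game} gives a Duplicator winning strategy for $\ABP^{(k_1,k_2)}(X(G), \widetilde{X}(G))$; \Cref{thm:logic:abp} then turns this into ${\land}\mathsf{C}^{(k_1,k_2)}$-equivalence of the CFI-graphs, and \Cref{thm:path:hom:ind} into $X(G) \equiv_{\mathcal{P}^{(k_1,k_2)}} \widetilde{X}(G)$. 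For the first class this already is the required condition; for $\dotcup\mathcal{P}^{(k_1,k_2)}$ I would finally observe that homomorphism indistinguishability passes to the disjoint-union closure, since the multiplicativity $\hom(F_1 \dotcup F_2, \cdot) = \hom(F_1, \cdot)\hom(F_2, \cdot)$ forces $\equiv_{\mathcal{P}^{(k_1,k_2)}}$ to imply $\equiv_{\dotcup\mathcal{P}^{(k_1,k_2)}}$.

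The main obstacle is conceptual rather than computational: it is exactly the asymmetry between the two classes that forces the two different statements. Since $\mathcal{P}^{(k_1,k_2)}$ fails to be closed under disjoint unions for $k_2 \geq 1$ (\Cref{prop:closure}), one cannot apply \Cref{lem:sufficient:hdc} to it directly and must instead pass to $\dotcup\mathcal{P}^{(k_1,k_2)}$, using that connectedness of the base graph $G$ makes the two classes coincide for the purposes of the CFI-condition. The one point requiring care is the determinacy step converting ``the searchers do not win'' into ``the fugitive has a winning strategy,'' which is precisely what \Cref{lem:cfi:path:game} consumes; everything else is a direct application of the preceding results.
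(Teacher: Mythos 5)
Your proposal is correct and follows essentially the same route as the paper: verify the closure hypotheses of \Cref{lem:sufficient:hdc}, then for connected $G \notin \mathcal{F}$ chain \Cref{thm:path:characterization}, \Cref{lem:cfi:path:game}, \Cref{thm:logic:abp}, and \Cref{thm:path:hom:ind} to get $X(G) \equiv_{\mathcal{F}} \widetilde{X}(G)$. You spell out a few steps the paper leaves implicit (determinacy of the node-searching game, and the multiplicativity of homomorphism counts needed to pass from $\equiv_{\mathcal{P}^{(k_1,k_2)}}$ to $\equiv_{\dotcup\mathcal{P}^{(k_1,k_2)}}$), but the argument is the same.
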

\begin{proof}
    For $\mathcal{F} \in \{\mathcal{P}^{(k_1,0)}, \dot{\cup}\mathcal{P}^{(k_1,k_2)}\}$ the class $\mathcal{F}$ is closed under taking disjoint unions and summands.
    For every connected graph $G \notin \mathcal{F}$ the fugitive has a winning strategy for $\NS^{(k_1,k_2)}(G)$ by \Cref{thm:path:characterization}.
    By \Cref{thm:path:hom:ind} and \Cref{lem:cfi:path:game} this in turn yields $X(G) \equiv_{\mathcal{F}} \widetilde{X}(G)$.
    Finally, by \Cref{lem:sufficient:hdc} it follows that $\mathcal{F}$ is h.d. closed.
\end{proof}

Note that the class $\mathcal{P}^{(k_1,k_2)}$ is not homomorphism distinguishing closed since the relations $\equiv_{\dot{\cup}\mathcal{P}^{(k_1,k_2)}}$ and $\equiv_{\mathcal{P}^{(k_1,k_2)}}$
are identical, but $\mathcal{P}^{(k_1,k_2)} \subsetneq \dot{\cup}\mathcal{P}^{(k_1,k_2)}$.

After establishing the monotonicity of $\CR^{(k_1, k_2)}_q$ in \Cref{prop:non-reusable:monotone}, we can use the correspondence to the bijective pebble game \cite[Lemma 8]{Rassmann2025}
to recast the proof for $\mathcal{T}^{(k_1,k_2)}_q$. 

\begin{lemma}[\cite{Rassmann2025}] \label{lem:cfi:game}
    Let $k_1+k_2 \geq 1, q \geq 0$ and $G$ be a connected graph.
    If the robber has a winning strategy for $\CR^{(k_1,k_2)}_q(G)$ then $X(G) \equiv_{\mathcal{T}^{(k_1,k_2)}_q} \widetilde{X}(G)$.
\end{lemma}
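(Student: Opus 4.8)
The plan is to run the same two-step reduction used for the path case, but with the round-by-round pebble game in place of the all-in-one game. First I would invoke \Cref{thm:tree:hom:ind} to replace the goal $X(G) \equiv_{\mathcal{T}^{(k_1,k_2)}_q} \widetilde{X}(G)$ by the logical statement that $X(G)$ and $\widetilde{X}(G)$ are $\mathsf{C}^{(k_1,k_2)}_q$-equivalent. Then, by the game-comparison characterization of $\mathsf{C}^{(k_1,k_2)}_q$-equivalence through the bijective pebble game (\cite[Lemma 8]{Rassmann2025}), it suffices to exhibit a winning strategy for Duplicator in $\BP^{(k_1,k_2)}_q(X(G), \widetilde{X}(G))$. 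This mirrors the combination of \Cref{thm:logic:abp} and \Cref{lem:cfi:path:game} in the path setting, so the essential content is again to transfer the evading player's strategy to Duplicator.

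For the strategy transfer I would follow \cite{Neuen2024} and adapt the construction of \Cref{lem:cfi:path:game} to the interactive, round-wise game. As before I assume $\widetilde{X}(G) = X_{\{u\}}(G)$ and maintain the invariant that after each round the pebbled vertices induce a partial isomorphism $\eta$ between $X(G)$ and $\widetilde{X}(G)$ whose only defect, the CFI-twist, is confined to the gadget over the robber's current escape vertex; the cop position $\gamma$ is recovered as the $\rho$-projection of the currently pebbled vertices. In a round where Spoiler lifts pebble $z$ this corresponds to lifting cop $z$, the robber's winning strategy prescribes an escape vertex in $C^{\gamma[z/\bot]}_v$, and exactly as in \Cref{lem:cfi:path:game} Duplicator uses the twist-moving isomorphism $\varphi_{u,v}$ of \Cref{lem:cfi:twist:iso} along a cop-free path to build, gadget by gadget, a bijection $V(X(G)) \to V(\widetilde{X}(G))$. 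Whatever vertex Spoiler then pebbles, i.e., wherever cop $z$ is placed, the twist is routed into the new, cop-free escape component, so $\eta$ remains a partial isomorphism. The non-reusable pebbles $y_j \in [y_{k_2}]$ correspond to the non-reusable cops and are placed only once; since such a $y_j$ is never requantified this matches the pebble-game semantics, and the rank bound $q$ matches the $q$-round bound of $\CR^{(k_1,k_2)}_q$.

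The step I expect to be the main obstacle is ensuring global consistency of the composed isomorphisms $\varphi$ across rounds: when a pebble is lifted the cop region may re-open, and the twist must be re-routed to the robber's new position along a path avoiding the currently fixed cops, for otherwise moving the twist would disturb $\eta$ on an already-pebbled gadget. This is precisely where the monotonicity of $\CR^{(k_1,k_2)}_q$ established in \Cref{prop:non-reusable:monotone} enters: it allows me to work with an interaction in which the robber's reachable region evolves monotonically (non-increasingly), so that the successive escape components form a consistent nested family and the location of the twist stays well-defined and trackable throughout the game. With this consistency in hand the induced maps are genuine partial isomorphisms in every round, Duplicator survives all $q$ rounds, and the chain ``robber wins $\CR^{(k_1,k_2)}_q(G)$ $\Rightarrow$ Duplicator wins $\BP^{(k_1,k_2)}_q$ $\Rightarrow$ $\mathsf{C}^{(k_1,k_2)}_q$-equivalence $\Rightarrow$ homomorphism indistinguishability over $\mathcal{T}^{(k_1,k_2)}_q$'' yields the claim.
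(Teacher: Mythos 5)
Your overall route matches the paper's: the paper gives no proof of this lemma at all (it is imported from \cite{Rassmann2025}), and the surrounding text makes clear that the intended derivation is exactly your chain ``robber wins $\CR^{(k_1,k_2)}_q(G)$ $\Rightarrow$ Duplicator wins the bijective $(k_1,k_2)$-pebble game on $X(G),\widetilde{X}(G)$ $\Rightarrow$ $\mathsf{C}^{(k_1,k_2)}_q$-equivalence via \cite[Lemma 8]{Rassmann2025} $\Rightarrow$ $X(G) \equiv_{\mathcal{T}^{(k_1,k_2)}_q} \widetilde{X}(G)$ via \Cref{thm:tree:hom:ind}''. Your reconstruction of the middle link by transferring the robber's strategy through the twist-moving isomorphisms of \Cref{lem:cfi:twist:iso} is the standard \cite{Neuen2024} template that the cited work adapts, so the first two paragraphs of your proposal are essentially what is needed.

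The concrete problem is your third paragraph. \Cref{prop:non-reusable:monotone} concerns \emph{cop} strategies: it converts a winning strategy for the cops into a monotone one. In the present lemma the cops are simulated by Spoiler, who plays adversarially; you cannot assume Spoiler's moves are monotone, and monotonicity is not what keeps the twist-routing consistent. (In the paper, \Cref{prop:non-reusable:monotone} enters the h.d.-closedness argument at a different point, namely in the implication from ``cops win'' to ``tree decomposition exists'' inside \Cref{thm:tree:characterization}, which is then invoked in \Cref{thm:treewidth:hdc} to conclude that $G \notin \mathcal{T}^{(k_1,k_2)}_q$ forces a robber win.) The consistency you worry about is instead guaranteed exactly as in \Cref{lem:cfi:path:game}: the robber's escape path lies in $C^{\gamma[z/\bot]}_v$ and hence avoids every currently placed cop, so by \Cref{lem:cfi:twist:iso} the twist-moving isomorphism fixes every gadget over a cop-occupied base vertex; consequently the composed map agrees with the previous one on all pebbled pairs and remains a partial isomorphism. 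Replacing the appeal to monotonicity by this path-avoidance argument closes the gap and the proof goes through.
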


\begin{theorem}\label{thm:treewidth:hdc}
    The class $\mathcal{T}^{(k_1,k_2)}_q$ is homomorphism distinguishing closed.
\end{theorem}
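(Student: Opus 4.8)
The plan is to apply the sufficient condition for homomorphism distinguishing closedness from \Cref{lem:sufficient:hdc}, following the same template as the proof of \Cref{thm:pathwidth:hdc}. That criterion requires two ingredients: that $\mathcal{T}^{(k_1,k_2)}_q$ is closed under taking disjoint unions and summands, and that for every connected graph $G$ outside the class the CFI-graphs $X(G)$ and $\widetilde{X}(G)$ are homomorphism indistinguishable over the class. A pleasant feature of the tree setting, in contrast to the path setting, is that no passage to a disjoint-union closure is needed: by \Cref{prop:closure} the class $\mathcal{T}^{(k_1,k_2)}_q$ is already closed under taking disjoint unions and summands (indeed also minors), so the structural precondition of \Cref{lem:sufficient:hdc} is met directly for the class itself rather than for a larger closure.

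With the closure hypotheses in hand, I would dispatch the CFI-equivalence step. Fix a connected graph $G \notin \mathcal{T}^{(k_1,k_2)}_q$. By the game-theoretic characterization in \Cref{thm:tree:characterization}, membership in $\mathcal{T}^{(k_1,k_2)}_q$ is equivalent to the cops having a winning strategy for $\CR^{(k_1,k_2)}_q(G)$; hence $G \notin \mathcal{T}^{(k_1,k_2)}_q$ means precisely that the robber has a winning strategy. \Cref{lem:cfi:game} then yields $X(G) \equiv_{\mathcal{T}^{(k_1,k_2)}_q} \widetilde{X}(G)$, which is exactly the hypothesis demanded by \Cref{lem:sufficient:hdc}. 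Combining the two facts, \Cref{lem:sufficient:hdc} gives $\cl(\mathcal{T}^{(k_1,k_2)}_q) = \mathcal{T}^{(k_1,k_2)}_q$, establishing the theorem.

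The assembly above is short because all the genuine difficulty has been front-loaded into the auxiliary results. The real obstacle is \Cref{lem:cfi:game}: translating a robber winning strategy into the statement that $X(G)$ and $\widetilde{X}(G)$ agree on all homomorphism counts from $\mathcal{T}^{(k_1,k_2)}_q$. This rests crucially on the monotonicity of $\CR^{(k_1,k_2)}_q$ proved in \Cref{prop:non-reusable:monotone}, together with the correspondence between the cops-and-robber game and the bijective pebble game on CFI-graphs (as in the approach of \cite{Neuen2024}), where the robber's escape component dictates the position to which the CFI-twist is pushed. Once monotonicity and this correspondence are available, the theorem follows by the direct invocation sketched here, entirely parallel to the pathwidth case but simpler since the class needs no disjoint-union closure.
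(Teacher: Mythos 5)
Your proposal is correct and follows essentially the same route as the paper: verify the closure conditions of \Cref{lem:sufficient:hdc}, use \Cref{thm:tree:characterization} (with the monotonicity from \Cref{prop:non-reusable:monotone}) to convert non-membership into a robber winning strategy for $\CR^{(k_1,k_2)}_q(G)$, and conclude $X(G) \equiv_{\mathcal{T}^{(k_1,k_2)}_q} \widetilde{X}(G)$ via \Cref{lem:cfi:game}. The only cosmetic difference is where you attribute the use of monotonicity (you place it inside \Cref{lem:cfi:game}, the paper invokes it when passing from non-membership to the robber's winning strategy), which does not affect correctness.
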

  \begin{proof}
    $\mathcal{T}^{(k_1,k_2)}_q$ is closed under taking disjoint unions and summands.
    For every $G \notin \mathcal{T}^{(k_1,k_2)}_q$ the robber has a winning strategy for $\CR^{(k_1,k_2)}_q(G)$ by \Cref{thm:tree:characterization} and \Cref{prop:non-reusable:monotone}.
    By \Cref{lem:cfi:game} this yields $X(G) \equiv_{\mathcal{T}^{(k_1,k_2)}_q} \widetilde{X}(G)$.
    Finally, by \Cref{lem:sufficient:hdc} it follows that $\mathcal{T}^{(k_1,k_2)}_q$ is h.d. closed.
  \end{proof}

\begin{remark}
    In the language of \cite{Roberson2022} the proofs of \Cref{thm:pathwidth:hdc} and \Cref{thm:treewidth:hdc} show that the respective classes are \emph{closed under weak oddomorphisms} by \cite[Theorem 3.13]{Roberson2022}.
\end{remark}

\subsection{Invariance of subgraph counts} \label{subsec:invariance}
As an application of the previous results of homomorphism distinguishing closedness and its relation to logic,
we provide characterizations of the logical invariance of subgraph counts with respect to hereditary graph structure as in \cite{Neuen2024}.

For graphs $G,F$ we denote by $\sub(F, G)$ the number of subgraphs of $G$ which are isomorphic to $F$. 
For two graphs $F$ and $H$ we say that $H$ is a homomorphic image of $F$ if there exists a surjective homomorphism $\varphi \colon V(F) \to V(H)$ such that $E(H) = \{\varphi(u)\varphi(v) : uv \in E(F)\}$.
We write $\operatorname{spasm}(F)$ for the set of homomorphic images of a graph $F$ containing exactly one representative from each isomorphism class.
For a logic $\mathsf{L}$ and a graph $F$ we say that the function $\sub(F, \cdot)$ is \emph{$\mathsf{L}$-invariant} if for all graphs $G$ and $H$ the implication 
$ G \equiv_{\mathsf{L}} H \Longrightarrow \sub(F,G) = \sub(F, H)$ holds.

To characterize for which graphs $\sub(F,\cdot)$ is invariant for a logic, we use the following lemmas relating homomorphism- and subgraph counts to the homomorphism distinguishing closure.

\begin{lemma}[{\cite[Section 5.2.3]{Lovasz2012},\cite{Curticapean2017}}] \label{lem:sub:hom}
    For every pair of graphs $G, F$ there exists a unique function $\alpha \colon \spasm(F) \to \mathbb{R}\setminus\{0\}$ such that
    $\sub(F,G) = \sum_{F_i \in \spasm(F)} \alpha(F_i) \hom(F_i, G)$.
\end{lemma}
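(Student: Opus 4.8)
The plan is to reduce $\sub(F,\cdot)$ first to injective homomorphism counts and then to unrestricted homomorphism counts via M\"obius inversion on the partition lattice, so that the coefficient function $\alpha$ is produced explicitly and depends only on $F$ (the displayed identity then holding for every $G$). First I would record the elementary identity $\sub(F,G)\cdot\lvert\operatorname{Aut}(F)\rvert=\operatorname{inj}(F,G)$, where $\operatorname{inj}(F,G)$ denotes the number of injective homomorphisms from $F$ to $G$: every subgraph of $G$ isomorphic to $F$ is the image of exactly $\lvert\operatorname{Aut}(F)\rvert$ injective homomorphisms, so the two counts differ by this nonzero factor. This reduces the statement to expressing $\operatorname{inj}(F,\cdot)$ in the claimed spasm form.

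Next I would stratify homomorphisms by the partition they induce on $V(F)$. For a partition $\rho$ of $V(F)$ let $F/\rho$ be the quotient obtained by contracting each block and suppressing parallel edges; if $\rho$ identifies two adjacent vertices then $F/\rho$ carries a loop and admits no homomorphism into the loopless graph $G$. Classifying a homomorphism $h\colon F\to G$ by the exact partition of its fibres gives $\hom(F/\rho,G)=\sum_{\sigma\geq\rho}\operatorname{inj}(F/\sigma,G)$, and M\"obius inversion over the partition lattice $\Pi_{V(F)}$ yields $\operatorname{inj}(F,G)=\sum_{\sigma\in\Pi_{V(F)}}\mu(\hat0,\sigma)\,\hom(F/\sigma,G)$, with $\mu(\hat0,\sigma)=\prod_{B}(-1)^{\lvert B\rvert-1}(\lvert B\rvert-1)!$ ranging over the blocks $B$ of $\sigma$. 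Grouping the $\sigma$ by the isomorphism type of $F/\sigma$ and discarding the loopy quotients (whose $\hom(\cdot,G)$ vanishes) produces $\sub(F,G)=\sum_{H\in\spasm(F)}\alpha(H)\,\hom(H,G)$ with $\alpha(H)=\lvert\operatorname{Aut}(F)\rvert^{-1}\sum_{\sigma\colon F/\sigma\cong H}\mu(\hat0,\sigma)$, valid for all $G$.

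It then remains to establish $\alpha(H)\neq0$ for every $H\in\spasm(F)$ and the uniqueness of the coefficients. For non-vanishing I would observe that every $\sigma$ with $F/\sigma\cong H$ has exactly $\lvert V(H)\rvert$ blocks, so each M\"obius value $\mu(\hat0,\sigma)$ carries the constant sign $(-1)^{\lvert V(F)\rvert-\lvert V(H)\rvert}$ and has magnitude at least $1$; since the index set is nonempty by definition of the spasm, no cancellation can occur and $\alpha(H)\neq0$. Uniqueness follows from the linear independence of the family $\{\hom(H,\cdot):H\in\spasm(F)\}$ over pairwise non-isomorphic $H$, a standard fact equivalent to nonsingularity of the homomorphism matrix. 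I expect the non-vanishing to be the only genuinely delicate step: the naive estimate merely shows each individual M\"obius summand is nonzero, and the real content is that fixing the isomorphism type of the quotient pins down the number of blocks, hence a common sign, which is exactly what rules out cancellation in the grouped sum.
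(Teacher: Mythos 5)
Your proof is correct and is essentially the standard argument from the sources the paper cites for this lemma (the paper itself gives no proof, only the references to Lov\'asz and to Curticapean, Dell, and Marx): reduce $\sub$ to injective homomorphisms via $|\operatorname{Aut}(F)|$, apply M\"obius inversion on the partition lattice, discard loopy quotients, and observe that all M\"obius values contributing to a fixed isomorphism type share the sign $(-1)^{|V(F)|-|V(H)|}$, so no cancellation occurs, with uniqueness from the linear independence of $\hom(H,\cdot)$ over non-isomorphic $H$. You also correctly read the (slightly loosely stated) lemma as asserting a coefficient function depending only on $F$ that makes the identity hold for every $G$, which is how the paper uses it later.
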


\begin{lemma}[{\cite[Lemma 6]{Seppelt2024}}] \label{lem:lincomb}
    Let $\mathcal{F}$ be a class of graphs, $\mathcal{L}$ be a finite set of pairwise non-isomorphic graphs, and $\alpha \colon \mathcal{L} \to \mathbb{R}\setminus\{0\}$.
    If for all graphs $G$ and $H$ it holds that
    $$ G \equiv_{\mathcal{F}} H \Longrightarrow \sum_{L \in \mathcal{L}} \alpha(L) \cdot \hom(L, G)  = \sum_{L \in \mathcal{L}} \alpha(L) \cdot \hom(L, H)$$
    then $\mathcal{L} \subseteq \cl(\mathcal{F})$.
\end{lemma}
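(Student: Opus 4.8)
The plan is to show that each individual $L \in \mathcal{L}$ lies in $\cl(\mathcal{F})$, i.e.\ that $\hom(L, G) = \hom(L, H)$ for every pair of graphs with $G \equiv_{\mathcal{F}} H$. The main tool is the \emph{categorical product} $G \times K$, whose vertex set is $V(G) \times V(K)$ and whose edges are the pairs $(u,a)(v,b)$ with $uv \in E(G)$ and $ab \in E(K)$. This is again a simple graph (no loops can arise), and it satisfies the multiplicativity identity $\hom(F, G \times K) = \hom(F,G)\cdot\hom(F,K)$ for every graph $F$, since a homomorphism into $G \times K$ is precisely a pair of homomorphisms into $G$ and into $K$ given by the two projections.

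First I would observe that $\equiv_{\mathcal{F}}$ is preserved under taking the categorical product with a fixed graph. Indeed, if $G \equiv_{\mathcal{F}} H$ then for every $F \in \mathcal{F}$ and every graph $K$ we have $\hom(F, G \times K) = \hom(F,G)\hom(F,K) = \hom(F,H)\hom(F,K) = \hom(F, H \times K)$, so $G \times K \equiv_{\mathcal{F}} H \times K$. Note that no closure assumption on $\mathcal{F}$ is required for this step.

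Next, fixing $G \equiv_{\mathcal{F}} H$ and applying the hypothesis to the pair $G \times K,\ H \times K$ for an arbitrary graph $K$, multiplicativity turns the assumed invariance of the linear combination into
$$\sum_{L \in \mathcal{L}} \alpha(L)\,\hom(L,G)\,\hom(L,K) = \sum_{L \in \mathcal{L}} \alpha(L)\,\hom(L,H)\,\hom(L,K).$$
Writing $c_L \coloneqq \alpha(L)\bigl(\hom(L,G) - \hom(L,H)\bigr)$, this says that $\sum_{L \in \mathcal{L}} c_L\,\hom(L, K) = 0$ for \emph{every} graph $K$; that is, the function $\sum_{L} c_L\,\hom(L, \cdot)$ vanishes identically.

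The crux of the argument, and the step I expect to be the main obstacle, is the classical fact that the homomorphism-count functions $\hom(L, \cdot)$ of pairwise non-isomorphic graphs are linearly independent over $\mathbb{R}$ (Lovász). I would invoke this directly to conclude $c_L = 0$ for all $L \in \mathcal{L}$, and then, using $\alpha(L) \neq 0$, obtain $\hom(L,G) = \hom(L,H)$. Since $G \equiv_{\mathcal{F}} H$ was arbitrary, every $L \in \mathcal{L}$ belongs to $\cl(\mathcal{F})$, as claimed. The independence itself rests on the standard factorization of each homomorphism through its image as a surjection onto a graph $M \in \spasm(L)$ followed by an injection into the target: this expresses $\hom(L, \cdot)$ as a nonnegative combination of the counts $\operatorname{inj}(M, \cdot)$, and ordering graphs by number of vertices (breaking ties by number of edges) makes the resulting change-of-basis relation triangular with nonzero diagonal entries, whence linear independence of the $\hom(L,\cdot)$ follows from that of the $\operatorname{inj}(M,\cdot)$.
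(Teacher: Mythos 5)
The paper does not prove this lemma itself but imports it verbatim from \cite{Seppelt2024}; your argument --- tensoring with an arbitrary $K$, using multiplicativity of $\hom$ over the categorical product to get $\sum_{L}\alpha(L)\bigl(\hom(L,G)-\hom(L,H)\bigr)\hom(L,K)=0$ for all $K$, and then invoking linear independence of the functions $\hom(L,\cdot)$ for pairwise non-isomorphic $L$ --- is precisely the standard proof given in that reference. It is correct, including the observation that no closure assumption on $\mathcal{F}$ is needed since $G\equiv_{\mathcal{F}}H$ automatically implies $G\times K\equiv_{\mathcal{F}}H\times K$.
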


Finally we obtain the following theorem as a consequence of \Cref{thm:pathwidth:hdc,thm:treewidth:hdc}. 

\begin{theorem} \label{thm:invariance}
    Let $k_1+k_2, q \geq 1$ and $F$ be a graph.
    Then the following assertions hold:
    \begin{itemize}
        \item The function $\sub(F, \cdot)$ is ${\land}\mathsf{C}^{(k_1,k_2)}$-invariant if and only if $\spasm(F) \subseteq \dotcup\mathcal{P}^{(k_1,k_2)}$.
        \item The function $\sub(F, \cdot)$ is $\mathsf{C}^{(k_1,k_2)}_q$-invariant if and only if $\spasm(F) \subseteq \mathcal{T}^{(k_1,k_2)}_q$.
    \end{itemize}
\end{theorem}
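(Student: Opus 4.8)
The plan is to reduce both statements to an application of the homomorphism distinguishing closedness results \Cref{thm:treewidth:hdc} and \Cref{thm:pathwidth:hdc}, using the decomposition of $\sub(F,\cdot)$ into homomorphism counts. I would treat the two bullet points uniformly by fixing $\mathsf{L} \in \{{\land}\mathsf{C}^{(k_1,k_2)}, \mathsf{C}^{(k_1,k_2)}_q\}$ together with the corresponding class $\mathcal{F} \in \{\dotcup\mathcal{P}^{(k_1,k_2)}, \mathcal{T}^{(k_1,k_2)}_q\}$, and first recording that $\mathsf{L}$-equivalence coincides with $\equiv_{\mathcal{F}}$. For the tree case this is exactly \Cref{thm:tree:hom:ind}. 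For the path case, \Cref{thm:path:hom:ind} gives $\equiv_{{\land}\mathsf{C}^{(k_1,k_2)}} {=} \equiv_{\mathcal{P}^{(k_1,k_2)}}$, and since $\hom(F_1 \dotcup F_2, \cdot) = \hom(F_1,\cdot)\hom(F_2,\cdot)$ the relation $\equiv_{\mathcal{P}^{(k_1,k_2)}}$ is unchanged by closing under disjoint unions, so $\equiv_{{\land}\mathsf{C}^{(k_1,k_2)}} {=} \equiv_{\dotcup\mathcal{P}^{(k_1,k_2)}}$ as noted after \Cref{thm:pathwidth:hdc}. Thus in both cases $\sub(F,\cdot)$ is $\mathsf{L}$-invariant if and only if $\sub(F,G) = \sub(F,H)$ holds whenever $G \equiv_{\mathcal{F}} H$.

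For the backward implication, I would assume $\spasm(F) \subseteq \mathcal{F}$. By \Cref{lem:sub:hom} we have $\sub(F,\cdot) = \sum_{F_i \in \spasm(F)} \alpha(F_i)\hom(F_i,\cdot)$. If $G \equiv_{\mathcal{F}} H$, then $\hom(F_i,G) = \hom(F_i,H)$ for every $F_i \in \spasm(F) \subseteq \mathcal{F}$, and summing the coefficients gives $\sub(F,G) = \sub(F,H)$; hence $\sub(F,\cdot)$ is $\mathsf{L}$-invariant.

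For the forward implication, I would assume $\sub(F,\cdot)$ is $\mathsf{L}$-invariant, i.e. $G \equiv_{\mathcal{F}} H$ implies $\sub(F,G) = \sub(F,H)$. Writing again $\sub(F,\cdot) = \sum_{L \in \spasm(F)} \alpha(L)\hom(L,\cdot)$ via \Cref{lem:sub:hom}, the set $\mathcal{L} \coloneqq \spasm(F)$ is finite and consists of pairwise non-isomorphic graphs, and $\alpha$ takes values in $\mathbb{R}\setminus\{0\}$. Applying \Cref{lem:lincomb} to $\mathcal{F}$, $\mathcal{L}$, and $\alpha$ then yields $\spasm(F) \subseteq \cl(\mathcal{F})$. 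Finally, $\mathcal{F}$ is homomorphism distinguishing closed -- by \Cref{thm:treewidth:hdc} in the tree case and by \Cref{thm:pathwidth:hdc} in the path case -- so $\cl(\mathcal{F}) = \mathcal{F}$, and therefore $\spasm(F) \subseteq \mathcal{F}$, as desired.

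The proof is essentially an assembly of the cited lemmas, so I expect no deep obstacle; the one point requiring genuine care is the bookkeeping around disjoint unions in the path case. Since $\spasm(F)$ may contain disconnected graphs (for instance whenever $F$ itself is disconnected) while $\mathcal{P}^{(k_1,k_2)}$ is \emph{not} closed under disjoint unions by \Cref{prop:closure}, the characterization must be stated with the disjoint-union closure $\dotcup\mathcal{P}^{(k_1,k_2)}$. This is precisely why \Cref{thm:pathwidth:hdc} proves h.d. closedness for $\dotcup\mathcal{P}^{(k_1,k_2)}$ rather than $\mathcal{P}^{(k_1,k_2)}$, and why the coincidence of the two equivalence relations is needed to transfer the logical characterization from \Cref{thm:path:hom:ind} to the correct class.
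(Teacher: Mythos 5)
Your proof is correct and follows essentially the same route as the paper: decompose $\sub(F,\cdot)$ into homomorphism counts over $\spasm(F)$ via \Cref{lem:sub:hom}, use the homomorphism indistinguishability theorems to identify $\mathsf{L}$-equivalence with $\equiv_{\mathcal{F}}$, and apply \Cref{lem:lincomb} together with the h.d. closedness results for the forward direction. Your explicit justification that $\equiv_{\mathcal{P}^{(k_1,k_2)}}$ and $\equiv_{\dotcup\mathcal{P}^{(k_1,k_2)}}$ coincide is a point the paper's proof leaves implicit (it is only remarked after \Cref{thm:pathwidth:hdc}), so this is a welcome clarification rather than a deviation.
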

\begin{proof}
    The proof is verbatim for both assertions, so let $\mathcal{F} \in \{\dotcup\mathcal{P}^{(k_1,k_2)}, \mathcal{T}^{(k_1,k_2)}_q\}$ and $G,H$ be graphs.
    First we assume that $\spasm(F) = \{F_1,\dotsc,F_{\ell}\} \subseteq \mathcal{F}$ and $G \equiv_{\mathcal{F}} H$ by \Cref{thm:path:characterization} and \Cref{thm:tree:characterization} respectively.
    By \Cref{lem:sub:hom} there exist coefficients $\alpha_1,\dotsc,\alpha_{\ell} \in \mathbb{R}$ such that for all graphs $G'$ it holds $\sub(F, G') = \sum_{i=1}^{\ell} \alpha_i \hom(F_i, G')$.
    Thus, we have
    $$ \sub(F,G) =  \sum_{i=1}^{\ell} \alpha_i \hom(F_i, G) = \sum_{i=1}^{\ell} \alpha_i \hom(F_i, H) = \sub(F, H)$$
    Now assume that $\sub(F, \cdot)$ is invariant for the respective logic.
    Again by \Cref{lem:sub:hom} there exists a unique function $\alpha \colon \spasm(F) \to \mathbb{R} \setminus \{0\}$ such that for all graphs $G$ it holds that $\sub(F,G) = \sum_{L \in \spasm(F)} \alpha(L) \cdot \hom(L, G)$.
    For graphs $G,H$ with $G \equiv_{\mathcal{F}} H$ it holds $\sub(F, G) = \sub(F, H)$ by assumption and hence
    $$\sum_{L \in \spasm(F)} \alpha(L) \cdot \hom(L, G) = \sum_{L \in \spasm(F)} \alpha(L) \cdot \hom(L, H).$$
    Since $\mathcal{F}$ is homomorphism distinguishing closed by \Cref{thm:pathwidth:hdc}, \Cref{thm:treewidth:hdc} and $\spasm(F)$ is a finite set of pairwise non-isomorphic graphs, \Cref{lem:lincomb} yields $\spasm(F) \subseteq \cl(\mathcal{F}) = \mathcal{F}$.
\end{proof}
\begin{remark} \label{rem:invariance}
    By \cite[Theorem 6]{Rassmann2025} the second assertion of this theorem characterizes which subgraph counts are detected after $q$ iterations of the $(k_1,k_2)$-dimensional oblivious Weisfeiler-Leman algorithm.
\end{remark}

  \section{A comonadic account of requantification} \label{sec:comonadic}

In this section, we present variations of the pebble-relation comonad $\mathbb{PR}_k$ introduced in \cite{Montacute2024} and the pebbling comonad $\mathbb{P}_k$ from \cite{Abramsky2017}, 
with the aim of capturing requantification as a logical resource from a categorical perspective. 
We emphasize that, in order to achieve this, it suffices to adapt the definitions of the associated universes $\mathbb{PR}_{(k_1,k_2)}\mathcal{A}$ and $\mathbb{P}_{(k_1,k_2)}\mathcal{A}$. 
This approach highlights the versatility of the pebble-relation and pebbling comonads, enabling concise proofs building on previous work despite the significant differences among the corresponding graph classes.

For a sequence $\overline{s} = [(z_1,a_1),\dotsc,(z_n,a_n)] \in ([x_{k_1}, y_{k_2}] \times V(\mathcal{A}))^n$ and $i \in [n]$ define $\pi_{\mathcal{A}}(\overline{s}, i) = z_i$.
When no index $i$ is given, we set $\pi_{\mathcal{A}}(\overline{s}) \coloneqq z_n$. 
\begin{definition}
    For a $\sigma$-structure $\mathcal{A}$ we define the $\sigma$-structure $\mathbb{PR}_{(k_1,k_2)}\mathcal{A}$ as follows:
    \begin{itemize}
        \item The universe of $\mathbb{PR}_{(k_1,k_2)}\mathcal{A}$ consists of all pairs $(\overline{s},i) = ([(z_1,a_1), \dotsc, (z_n,a_n)], i)$ for $n \in \mathbb{N}$, $i \in [n]$,
              and $(z_j,a_j) \in [x_{k_1}, y_{k_2}] \times V(\mathcal{A})$ for $j \in [n]$ such that every $y_j \in [y_{k_2}]$ appears at most once as pebble index in $\overline{s}$.
        \item The counit morphism $\varepsilon_{\mathcal{A}} \colon \mathbb{PR}_{(k_1,k_2)}\mathcal{A} \to \mathcal{A}$ is defined by $\varepsilon_{\mathcal{A}}([(z_1,a_1), \dotsc, (z_n,a_n)], i) \coloneqq a_i$
        \item For $R \in \sigma$ it holds $R^{\mathbb{PR}_{(k_1,k_2)}\mathcal{A}}((\overline{s}_1,i_1),\dotsc,(\overline{s}_m, i_m))$ exactly if there exists $\overline{s}$ such that
              \begin{itemize}
                  \item for all $j \in [m]$ it holds $\overline{s}_j = \overline{s}$, \hfill (equality)
                  \item $\pi_{\mathcal{A}}(\overline{s}, i_j)$ does not appear in $\overline{s}[{i_j} + 1,i]$ for $i = \max\{i_1,\dotsc,i_m\}$, and \hfill (active pebble)
                  \item $R^{\mathcal{A}}(\varepsilon_{\mathcal{A}}(\overline{s}_1, i_1),\dotsc,\varepsilon_{\mathcal{A}}(\overline{s}_m, i_m))$. \hfill (compatibility)
              \end{itemize}
    \end{itemize}
    For a $\sigma$-structure $\mathcal{B}$ and a homomorphism $f \colon \mathbb{PR}_{(k_1,k_2)}\mathcal{A} \to \mathcal{B}$ we define the \emph{coextension} 
    $f^* \colon \mathbb{PR}_{(k_1,k_2)}\mathcal{A} \to \mathbb{PR}_{(k_1,k_2)}\mathcal{B}$ of $f$ by setting $b_i = f([(z_1,a_1),\dotsc, (z_m,a_m)], i)$ for $i \in [m]$ and 
    $f^*([(z_1,a_1),\dotsc, (z_m,a_m)], i) \coloneqq ([(z_1,b_1),\dotsc, (z_m,b_m)], i).$
\end{definition} 

\begin{definition}
    For a $\sigma$-structure $\mathcal{A}$ we define the $\sigma$-structure $\mathbb{P}_{(k_1,k_2)}\mathcal{A}$ as follows:
    \begin{itemize}
        \item The universe of $\mathbb{P}_{(k_1,k_2)}\mathcal{A}$ consists of all sequences $\overline{s} \in ([x_{k_1},y_{k_2}] \times V(\mathcal{A}))^+$ such that
              every pebble $y_j \in [y_{k_2}]$ appears at most once as pebble index in $\overline{s}$.
        \item The counit morphism $\varepsilon_{\mathcal{A}} \colon \mathbb{P}_{(k_1,k_2)}\mathcal{A} \to \mathcal{A}$ is defined by
              $\varepsilon_{\mathcal{A}}([(z_1,a_1),\dotsc, (z_m,a_m)]) \coloneqq a_m$
        \item For $R \in \sigma$ it holds $R^{\mathbb{P}_{(k_1,k_2)}\mathcal{A}}(\overline{s}_1,\dotsc,\overline{s}_m)$ exactly if
              \begin{itemize}
                  \item $R^{\mathcal{A}}(\varepsilon_{\mathcal{A}}(\overline{s}_1),\dotsc,\varepsilon_{\mathcal{A}}(\overline{s}_m))$, \hfill (compatibility)
                  \item for $i,j \in [m]$ we have $\overline{s}_i \sqsubseteq \overline{s}_j$ or $\overline{s}_j \sqsubseteq \overline{s}_i$, and \hfill (comparability)
                  \item for $i,j \in [m]$ if $\overline{s}_i \sqsubseteq \overline{s}_j$ then $\pi_{\mathcal{A}}(\overline{s}_i)$ \\
                   does not occur as a first coordinate in $\overline{s}_+$ for $\overline{s}_i = \overline{s}_j\overline{s}_+$. \hfill (active pebble)
              \end{itemize}
    \end{itemize}
    For a $\sigma$-structure $\mathcal{B}$ and a homomorphism $f \colon \mathbb{P}_{(k_1,k_2)}\mathcal{A} \to \mathcal{B}$ the \emph{coextension} $f^*$ is defined by setting 
    $f^*([(z_1,a_1),\dotsc, (z_m,a_m)]) \coloneqq [(z_1,b_1),\dotsc, (z_m,b_m)]$ where $b_i = f([(z_1,b_1),\dotsc, (z_i,a_i)])$ for $i \in [m]$.
    For $q \in \mathbb{N}_+$ the structure $\mathbb{P}_{(k_1,k_2)}^q\mathcal{A}$ is defined as the substructure of $\mathbb{P}_{(k_1,k_2)}\mathcal{A}$ over the universe $([x_{k_1},y_{k_2}] \times V(\mathcal{A}))^{\leq q}$.
\end{definition}

To show that $\mathbb{PR}_{(k_1,k_2)}$ and $\mathbb{P}^q_{(k_1,k_2)}$ are again comonads it suffices to observe that the definitions of $R^{\mathbb{PR}_{(k_1,k_2)}\mathcal{A}}$,
$R^{\mathbb{P}^q_{(k_1,k_2)}\mathcal{A}}$, and the coextensions are invariant under the new reusability constraints. 
Thus, from \cite[Proposition 3.1]{Montacute2024} and \cite[Theorem 4]{Abramsky2017} we obtain the following:  
\begin{proposition} \label{prop:comonads}
    The triples $(\mathbb{P}^q_{(k_1,k_2)}, \varepsilon, (\cdot)^*)$ and $(\mathbb{PR}_{(k_1,k_2)}, \varepsilon, (\cdot)^*)$ are comonads in coKleisli form on the category~$\Str(\sigma)$.
\end{proposition}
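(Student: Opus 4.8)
The plan is to present both triples as restrictions of the unrestricted pebble-relation comonad $\mathbb{PR}_{k_1+k_2}$ of \cite{Montacute2024} and the bounded pebbling comonad $\mathbb{P}^q_{k_1+k_2}$ of \cite{Abramsky2017}, obtained by deleting from the universe every sequence in which some non-reusable pebble $y_j \in [y_{k_2}]$ occurs more than once as a pebble index. Identifying the pebble set $[x_{k_1},y_{k_2}]$ with $[k]$ for $k = k_1+k_2$, the structures $\mathbb{PR}_{(k_1,k_2)}\mathcal{A}$ and $\mathbb{P}^q_{(k_1,k_2)}\mathcal{A}$ are the substructures of $\mathbb{PR}_{k}\mathcal{A}$ and $\mathbb{P}^q_{k}\mathcal{A}$ induced on the restricted universe, since the relations $R^{\mathbb{PR}_{(k_1,k_2)}\mathcal{A}}$ and $R^{\mathbb{P}^q_{(k_1,k_2)}\mathcal{A}}$ are given by exactly the same clauses as in the unrestricted case. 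First I would record this observation, so that each counit $\varepsilon_{\mathcal{A}}$ is simply the restriction of the known counit and therefore remains a homomorphism on the smaller domain.

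The one genuinely new point to verify is that the coextension operation is well defined, i.e. that $f^*$ sends $\mathbb{PR}_{(k_1,k_2)}\mathcal{A}$ into $\mathbb{PR}_{(k_1,k_2)}\mathcal{B}$, and likewise for $\mathbb{P}$. The key observation is that, by definition, $f^*$ leaves the pebble indices untouched and only rewrites the data entries, sending $[(z_1,a_1),\dotsc,(z_m,a_m)]$ to $[(z_1,b_1),\dotsc,(z_m,b_m)]$ (and keeping the distinguished index $i$ in the $\mathbb{PR}$ case). Consequently the sequence of pebble indices of a sequence is invariant under $f^*$, so if the input satisfies the reusability constraint that each $y_j$ appears at most once, then so does the output; hence $f^*$ indeed takes values in the restricted universe. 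That $f^*$ is moreover a homomorphism follows because it is the restriction of the unrestricted coextension, which is a homomorphism by \cite[Proposition 3.1]{Montacute2024} and \cite[Theorem 4]{Abramsky2017}, and its image lands in the induced substructure.

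It then remains to verify the three coKleisli identities $\varepsilon^\ast_{\mathcal{A}} = \id_{\mathbb{G}\mathcal{A}}$, $\varepsilon_{\mathcal{B}} \circ f^\ast = f$, and $(g \circ f^\ast)^\ast = g^\ast \circ f^\ast$. Since all maps involved are the restrictions of the corresponding maps for $\mathbb{PR}_{k}$ and $\mathbb{P}^q_{k}$ to the smaller universes, and each identity is an equation of morphisms that holds pointwise in the unrestricted comonads, the identities transfer verbatim to the restricted structures. For $\mathbb{P}^q_{(k_1,k_2)}$ one additionally notes that bounding the sequence length by $q$ commutes with all of these operations, exactly as for the bounded pebbling comonad, so the bounded version is a comonad as well. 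The main obstacle is the well-definedness argument of the preceding paragraph; every other requirement is inherited from the cited results.
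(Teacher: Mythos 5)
Your proposal is correct and follows essentially the same route as the paper, which likewise observes that the relations and coextensions are invariant under the reusability constraint and then inherits the comonad axioms from \cite[Proposition 3.1]{Montacute2024} and \cite[Theorem 4]{Abramsky2017}. Your explicit verification that $f^\ast$ preserves pebble indices (and hence the restricted universe) is exactly the one new point the paper's shorter argument relies on.
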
 

Utilizing the characterization of the classes $\mathcal{T}^{(k_1,k_2)}_q$ and $\mathcal{P}^{(k_1,k_2)}$ in terms of pebble forest covers, we now provide a categorical
account of reusability in path- and bounded depth tree decompositions. Similar results are called \emph{coalgebra characterization theorems} in the literature of game comonads \cite{Abramsky2021}.
We have defined our decompositions in terms of graphs rather than relational structures. The definition of forest covers can be adapted for structures, which yields the same 
as considering forest covers of the \emph{Gaifman graph} $G(\mathcal{A})$ of $\mathcal{A}$. 

\begin{theorem} \label{thm:coalgebras}
    For every finite $\sigma$-structure $\mathcal{A}$, there is a bijective correspondence between
    \begin{enumerate}
        \item \label{coalgebras:paths} $(k_1,k_2)$-pebble linear component forest covers of $G(\mathcal{A})$ and coalgebras $\alpha \colon \mathcal{A} \to \mathbb{PR}_{(k_1,k_2)}\mathcal{A}$.
        \item \label{coalgebras:trees} $(k_1,k_2)$-pebble forest covers of depth $q$ of $G(\mathcal{A})$ and coalgebras $\alpha \colon \mathcal{A} \to \mathbb{P}^q_{(k_1,k_2)}\mathcal{A}$.
    \end{enumerate}
\end{theorem}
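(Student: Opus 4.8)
The plan is to follow the template for coalgebra characterization theorems established for the pebbling comonad in \cite{Abramsky2017} and the pebble-relation comonad in \cite{Montacute2024}, adapting it to the reusability constraints encoded by the pebbles $[y_{k_2}]$. In both parts I would first unpack the two coalgebra laws. The counit law $\varepsilon_{\mathcal{A}} \circ \alpha = \id_{\mathcal{A}}$ says that the vertex read off by $\varepsilon_{\mathcal{A}}$ from $\alpha(a)$ is $a$ itself. The coassociativity law $\delta_{\mathcal{A}} \circ \alpha = \mathbb{G}\alpha \circ \alpha$ is the crucial one: expanding $\delta_{\mathcal{A}} = \id^{*}$ and $\mathbb{G}\alpha = (\alpha \circ \varepsilon_{\mathcal{A}})^{*}$ via the definitions of the coextensions turns this algebraic identity into a combinatorial coherence condition. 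For $\mathbb{G} = \mathbb{P}^q_{(k_1,k_2)}$, writing $\alpha(a) = [(z_1,a_1),\dotsc,(z_m,a_m)]$ with $a_m = a$, coassociativity becomes $\alpha(a_i) = \alpha(a)[1,i]$ for every $i \in [m]$, i.e. the play assigned to the vertex $a_i = \varepsilon_{\mathcal{A}}(\alpha(a)[1,i])$ at position $i$ is exactly the length-$i$ prefix of the play of $a$. For $\mathbb{G} = \mathbb{PR}_{(k_1,k_2)}$, writing $\alpha(a) = (\overline{s}_a, i_a)$, coassociativity becomes $\alpha(\varepsilon_{\mathcal{A}}(\overline{s}_a,j)) = (\overline{s}_a, j)$ for each position $j$, so that all vertices occurring in a common sequence share that sequence and differ only in their index (and in particular each such vertex occurs only once in $\overline{s}_a$).

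For part~\ref{coalgebras:trees} I would pass from a coalgebra to a forest cover by defining the forest order $u \preceq_F v$ iff $\alpha(u) \sqsubseteq \alpha(v)$; prefix-consistency makes this a well-defined forest order whose roots are the length-one plays, and the length bound of the universe $([x_{k_1},y_{k_2}]\times V(\mathcal{A}))^{\leq q}$ forces height at most $q$. Setting $p(a) \coloneqq \pi_{\mathcal{A}}(\alpha(a))$, the three forest-cover axioms are read off from the three clauses defining $R^{\mathbb{P}^q_{(k_1,k_2)}\mathcal{A}}$: since $\alpha$ is a homomorphism and any two distinct elements of a single relation tuple are adjacent in $G(\mathcal{A})$, the \emph{comparability} clause yields axiom~(1), the \emph{active pebble} clause yields axiom~(2), and the universe constraint that each $y_j$ occurs at most once in a play yields axiom~(3). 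Conversely, a $(k_1,k_2)$-pebble forest cover of depth $q$ induces the coalgebra $\alpha(a) \coloneqq [(p(w),w) : w \preceq_F a]$ in root-to-$a$ order: axiom~(3) gives membership in the universe, the counit law is immediate, coassociativity holds because prefixes correspond to ancestors, and the homomorphism property reduces, for each $R \in \sigma$, to axioms~(1) and~(2) exactly as above. These two assignments are visibly mutually inverse.

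Part~\ref{coalgebras:paths} proceeds analogously but with the order extracted per connected component using the global-sequence coherence. The \emph{equality} clause of $R^{\mathbb{PR}_{(k_1,k_2)}\mathcal{A}}$ forces adjacent vertices to share the same sequence, so every connected component of $G(\mathcal{A})$ lies within one common sequence and is linearly ordered by the index, making each component a path. The pebbling $p(a) \coloneqq \pi_{\mathcal{A}}(\overline{s}_a, i_a)$ together with the \emph{active pebble} clause supplies the edge and active-pebble conditions, while the once-per-sequence constraint on $[y_{k_2}]$ translates precisely into the relaxed non-reusability condition defining a \emph{linear component} forest cover. In the converse direction a linear component forest cover induces a coalgebra by equipping each component $C$ with its own sequence $\overline{s}_C$ listing $(p(w),w)$ in path order and setting $\alpha(a) \coloneqq (\overline{s}_{C(a)}, j)$ with $j$ the position of $a$ in $\overline{s}_{C(a)}$.

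The \textbf{main obstacle} I expect is twofold. First, extracting the exact coherence statements from coassociativity requires carefully matching the two coextension operations against $\delta_{\mathcal{A}}=\id^*$ and $(\alpha\circ\varepsilon_{\mathcal{A}})^*$; this is the linchpin that converts the comonad identity into the prefix structure (for $\mathbb{P}^q_{(k_1,k_2)}$) and the shared-sequence structure (for $\mathbb{PR}_{(k_1,k_2)}$). Second, and more delicate, is the $\mathbb{PR}$ case: one must reconcile the single global sequence permitted by the comonad with the component-wise structure of the cover, ensuring that the relaxation from global to per-component uniqueness of the non-reusable pebbles is matched on both sides and that normalizing sequences to be component-wise makes the two directions genuinely inverse. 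Verifying the homomorphism property in the cover-to-coalgebra direction — that adjacency in $G(\mathcal{A})$ forces the comparability and active-pebble clauses for every $R \in \sigma$ rather than only for the edge relation — is routine given the forest-cover axioms but must be carried out uniformly.
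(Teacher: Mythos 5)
Your proposal is correct and follows essentially the same route as the paper: both directions use the identical constructions (plays as root-to-vertex pebble sequences for $\mathbb{P}^q_{(k_1,k_2)}$, one shared sequence per path with positional indices for $\mathbb{PR}_{(k_1,k_2)}$), extract the prefix/shared-sequence coherence from the two coalgebra laws, and reduce the reusability constraint to the once-per-sequence condition on $[y_{k_2}]$. The only difference is presentational: the paper defers the homomorphism and coalgebra-axiom verifications to the corresponding theorems of Montacute et al.\ and Abramsky et al., whereas you sketch them explicitly.
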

\begin{proof}
  \Cref{coalgebras:paths}: We review the constructions from \cite[Theorem 4.7.]{Montacute2024} and show that they preserve reusability constraints on the universe of $\mathbb{PR}_{(k_1,k_2)}\mathcal{A}$.

  For the first direction, assume that $G(\mathcal{A})$ has a linear $(k_1,k_2)$-pebble linear component forest cover $(F, \overline{r}, p)$. 
  Let $F = P_1 \dotcup P_2 \dotcup \dotsc \dotcup P_n$ and for $i\in[n]$ let $P_i = v^i_1 \dotsc v^i_{m_i}$.
  We define the sequence $\overline{t}_i \coloneqq [(p(v^i_1), v_1),\dotsc,(p(v^i_{m_i}), v^i_{m_i})]$ and 
  $$\alpha_i \colon V(P_i) \to \mathbb{PR}_{(k_1,k_2)}\mathcal{A},\quad v^i_j \mapsto (\overline{t}_i, j).$$
  Since $V(\mathcal{A}) = \dotcup_{i=1}^n V(P_i)$ we can define $\alpha \colon V(\mathcal{A}) \to V(\mathbb{PR}_{(k_1,k_2)} \mathcal{A})$ 
  by setting $\alpha(v^i_j) \coloneqq \alpha_i(v^i_j)$ for $v^i_j \in V(P_i)$.
  Then $\alpha$ is well-defined because in $[(p(v^i_1), v_1),\dotsc,(p(v^i_{m_i}), v^i_{m_i})]$ no pebble $y_j \in [y_{k_2}]$ occurs more than once as pebble index in $P_i$
  by the definition of a linear $(k_1,k_2)$-pebble component forest cover. 
  The verification that $\alpha$ is a homomorphism and satisfies the counit-coalgebra and comultiplication-coalgebra axioms is verbatim to the corresponding statement in the proof of \cite[Theorem 4.7.]{Montacute2024}. 

  For the converse, let $\mathcal{A}$ together with $\alpha \colon \mathcal{A} \to \mathbb{PR}_{(k_1,k_2)}\mathcal{A}$ be a coalgebra. 
  For each $\overline{s}$ in the image of $\alpha$ we define the set 
  $P_{\overline{s}} \coloneqq \{a \in V(\mathcal{A}) : \exists i \in \mathbb{N}~\alpha(a) = (\overline{s}, i)\}$
  and an order on this set by setting 
  $a \leq_{\overline{s}} a'$ exactly if $\alpha(a)=(\overline{s},i)$, $\alpha(a')=(\overline{s},i'),$ and $ i \leq i'$.
  Furthermore, we define the pebbling function $p \colon V(\mathcal{A}) \to [x_{k_1}, y_{k_2}]$ by setting $p \coloneqq \pi_{\mathcal{A}} \circ \alpha$.
  By the counit-coalgebra axiom, we have $\varepsilon_{\mathcal{A}} \circ \alpha = \id_{\mathcal{A}}$, which is just that for every $a \in V(\mathcal{A})$
  there exists a unique $(\overline{s}, i) \in \mathbb{PR}_{(k_1,k_2)}$ such that $\alpha(a) = (\overline{s}, i)$ and $\overline{s}[i] = (z_i, a)$. 
  Also, by the comultiplication-coalgebra axiom we have that $\id_{\mathbb{PR}_{(k_1,k_2)}\mathcal{A}}^\ast \circ \alpha = (\alpha \circ \varepsilon_\mathcal{A})^\ast \circ \alpha$
  which just states that for every $(\overline{s}, i) \in \mathbb{PR}_{(k_1,k_2)}\mathcal{A}$ if $\overline{s}[i] = (z_i,a)$ we have that $\alpha(a) = (\overline{s}, i)$. 
  Together this yields that the elements in $P_{\overline{s}}$ are linearly ordered by $\leq_{\overline{s}}$ according to their position in $\overline{s}$
  and the sets $P_{\overline{s}}$ partition $V(\mathcal{A})$. 
  Now let $a \in P_{\overline{s}}$ such that $p(a) = y_j$ and let $\alpha(a) = (\overline{s}, i)$, then $y_j = z_i$.
  Now suppose that there exists $a' \in P_{\overline{s}}$ with $a' \neq a$ and $p(a') = y_j$. 
  Then we have $\alpha(a') = (\overline{s}, i')$ with $i' \neq i$ and thus $y_j = z_i \neq z_{i'} = y_j$, contradicting the definition of $\mathbb{PR}_{(k_1,k_2)}\mathcal{A}$. 
  It remains to prove that the linearly ordered sets $(P_{\overline{s}}, \leq_{\overline{s}})$ with $p$ induce a $(k_1+k_2)$-pebble linear forest cover of $G(\mathcal{A})$.
  This however is again shown in the proof of \cite[Theorem 4.7.]{Montacute2024} for the same construction.    

  \Cref{coalgebras:trees}: We review the constructions from \cite[Theorem 6]{Abramsky2017} and show that they preserve reusability constraints and depth. 

  Let $(F, \overline{r}, p)$ be a $(k_1,k_2)$-pebble forest cover of depth $q$ for $G(\mathcal{A})$. 
  Define $\alpha \colon \mathcal{A} \to \mathbb{P}^q_{(k_1,k_2)}$ by induction on $\preceq_F$: 
  For each $r_i \in \overline{r}$ we define $\alpha(r_i) \coloneqq [(p(r_i), r_i)]$ and for $a \in V(\mathcal{A}) \setminus \overline{r}$ let $a^-$ be the predecessor 
  of $a$ in the order $\preceq_T$. Then we set $\alpha(a) \coloneqq \alpha(a^-) [(p(a), a)]$.
  Then for each $a \in V(\mathcal{A})$ the length of the sequence $\alpha(a)$ is depth of $a$ in the forest $F$.
  Since also no pebble $y_j \in [y_{k_2}]$ is reused on a branch of $T$, we have $\alpha(a) \in \mathbb{P}^q_{(k_1,k_2)}\mathcal{A}$. 
  The verification of the coalgebra axioms is identical to the proof in \cite[Theorem 6]{Abramsky2017}. 

  Given a coalgebra $(\mathcal{A}, \alpha \colon \mathcal{A} \to \mathbb{P}^q_{(k_1,k_2)} \mathcal{A})$ over $\mathbb{P}^q_{(k_1,k_2)}$, we define a forest cover:
  As in the construction for the linear component forest cover, by the coalgebra axioms we have that 
  for each $a \in V(\mathcal{A})$ if $\alpha(a)=[(z_1,a_1),\dotsc,(z_n,a_n)]$, then $a=a_n$ and $\alpha(a_i) = [(z_1,a_1),\dotsc,(z_i,a_i)]$ for $i \in [n]$. 
  Also, there is a unique $\overline{s} \in \im(\alpha)$ such that the last entry of $\overline{s}$ is $(z, a)$.
  We set $a \preceq a'$ exactly if $\alpha(a) \sqsubseteq \alpha(a')$.
  The pebbling function is defined by setting $p(a)=z$ for $\overline{s}[(z,a)] \in \im(\alpha)$. 
  Then every branch of the resulting tree with tree order $\preceq$ hast length at most $q$, since this is the maximum length of a sequence $\alpha(a)$.
  Furthermore, each branch corresponds to prefixes of a sequence $\overline{s} \in \mathbb{P}^q_{(k_1,k_2)} \mathcal{A}$ so no pebble $y_j \in [y_{k_2}]$ occurs more than once. 
\end{proof}

Further employing the theory of game comonads, we give a categorical formulation of equivalence in the counting logics ${\land}\mathsf{C}^{(k_1,k_2)}_{\infty \omega}$ and $\mathsf{C}^{(k_1,k_2)}_{q}$
as isomorphism in the coKleisli category. Accordingly, similar theorems are also called \emph{isomorphism power theorems} in the literature.
As in \cite{Montacute2024}, we use the extended signature $\sigma^+ = \sigma \cup \{I\}$
and define the functor $J \colon \mathbf{Str}(\sigma) \to \mathbf{Str}(\sigma^+)$ which extends each $\sigma$-structure $\mathcal{A}$ by the identity relation $I^{J\mathcal{A}}= \{(a,a) : a \in V(\mathcal{A})\}$,
giving rise to a relative comonad on $J$. 

\begin{theorem} \label{thm:isomorphism}
    For all $\sigma$-structures $\mathcal{A}$ and $\mathcal{B}$ the following hold:
    \begin{enumerate}
        \item \label{isomorphism:paths} There exists a coKleisli isomorphism $\mathbb{PR}_{(k_1,k_2)} J \mathcal{A}\to J \mathcal{B}$ if and only if $\mathcal{A} \equiv_{{\land}\mathsf{C}^{(k_1,k_2)}_{\infty \omega}} \mathcal{B}$.
        \item \label{isomorphism:trees} There exists a coKleisli isomorphism $\mathbb{P}^q_{(k_1,k_2)} \mathcal{A}\to \mathcal{B}$ if and only if $\mathcal{A} \equiv_{\mathsf{C}^{(k_1,k_2)}_{q}} \mathcal{B}$.
    \end{enumerate}
\end{theorem}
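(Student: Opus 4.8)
The plan is to reduce each equivalence to the corresponding bijective pebble game and then transfer the well-known isomorphism-power arguments of \cite{Montacute2024} and \cite{Abramsky2017} to the reusability-restricted universes, the only genuinely new work being to check that the at-most-once condition on the $y$-pebbles lines up with the non-requantifiability constraint of the games.

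For \Cref{isomorphism:paths}, \Cref{thm:logic:abp} already identifies ${\land}\mathsf{C}^{(k_1,k_2)}_{\infty\omega}$-equivalence with a Duplicator win in $\ABP^{(k_1,k_2)}(\mathcal{A},\mathcal{B})$, so it suffices to exhibit a correspondence between coKleisli isomorphisms $\mathbb{PR}_{(k_1,k_2)} J\mathcal{A} \to J\mathcal{B}$ and Duplicator winning strategies for $\ABP^{(k_1,k_2)}(\mathcal{A},\mathcal{B})$. I would follow the proof of the corresponding statement in \cite{Montacute2024} step by step. A coKleisli morphism is a $\sigma^+$-homomorphism $f \colon \mathbb{PR}_{(k_1,k_2)} J\mathcal{A} \to J\mathcal{B}$; preservation of the identity relation $I$ in $\sigma^+$, together with the existence of a coKleisli inverse $g$ satisfying $g \circ_{\mathcal{K}} f = \varepsilon_{\mathcal{A}}$ and $f \circ_{\mathcal{K}} g = \varepsilon_{\mathcal{B}}$, forces $f$ to induce, for every pebble sequence $\overline{z}$ chosen by Spoiler, a bijection $h_{\overline{z}} \colon V(\mathcal{A})^n \to V(\mathcal{B})^n$. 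The compatibility and active-pebble clauses of $R^{\mathbb{PR}_{(k_1,k_2)}\mathcal{A}}$ then translate precisely into Duplicator's winning condition that the induced maps $\eta_i$ be partial isomorphisms. The only point beyond \cite{Montacute2024} is that the universe of $\mathbb{PR}_{(k_1,k_2)}$ consists solely of sequences in which each $y_j$ occurs at most once as a pebble index; this is exactly the restriction imposed on Spoiler's pebble sequence in $\ABP^{(k_1,k_2)}$, so the constructed bijections automatically respect the non-reusable pebbles.

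For \Cref{isomorphism:trees}, I would first invoke the characterization of $\mathsf{C}^{(k_1,k_2)}_q$-equivalence by the $q$-round bijective $(k_1,k_2)$-pebble game $\BP^{(k_1,k_2)}$ from \cite{Rassmann2025}. It then remains to match coKleisli isomorphisms $\mathbb{P}^q_{(k_1,k_2)}\mathcal{A} \to \mathcal{B}$ with Duplicator winning strategies in this game, which I would obtain by adapting the isomorphism-power theorem for the pebbling comonad in \cite{Abramsky2017}. In contrast to the path case no extension by $J$ is required: a coKleisli isomorphism of $\mathbb{P}^q_{(k_1,k_2)}$ already supplies level-wise bijections between the play histories and hence the counting power of $\mathsf{C}^{(k_1,k_2)}_q$. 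The prefix ordering, comparability, and active-pebble clauses in the definition of $\mathbb{P}^q_{(k_1,k_2)}$ encode the admissible histories of the $q$-round game, the length bound $\leq q$ matches the number of rounds, and again the requirement that each $y_j$ occur at most once corresponds to the non-requantifiability of the $y$-pebbles.

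The main obstacle I anticipate is the careful verification that the reusability constraint embedded in the two comonad universes is preserved throughout the back-and-forth: one must check that the coextensions $(\cdot)^\ast$ never create a sequence in which some $y_j$ appears twice, and that the bijections extracted from a coKleisli isomorphism inherit this property uniformly across all levels. Well-definedness of the comonads under these constraints is already secured by \Cref{prop:comonads}, so once this compatibility is confirmed the remaining argument specializes the unconstrained proofs of \cite{Montacute2024,Abramsky2017} without further difficulty.
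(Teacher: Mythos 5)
Your proposal matches the paper's proof: both parts reduce logical equivalence to the corresponding bijective pebble game (via \Cref{thm:logic:abp} for $\ABP^{(k_1,k_2)}$ and via \cite[Theorem 6]{Rassmann2025} for $\BP^{(k_1,k_2)}_q$), then transfer the isomorphism-power arguments of \cite{Montacute2024} and \cite{Abramsky2017} to the restricted universes, with the key observation that coKleisli (iso)morphisms and coextensions preserve pebble sequences and hence the at-most-once condition on the $[y_{k_2}]$-pebbles. This is essentially the same approach as the paper, which only spells out the back-and-forth between coKleisli isomorphisms and Duplicator strategies slightly more explicitly in the tree case.
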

\begin{proof}
  \Cref{isomorphism:paths}: We use \Cref{thm:logic:abp} and argue that exists a coKleisli isomorphism $\mathbb{PR}_{(k_1,k_2)} J \mathcal{A}\to J \mathcal{B}$
  if and only if Duplicator has a winning strategy for the game $\ABP^{(k_1,k_2)}(\mathcal{A}, \mathcal{B})$. 
  This follows from the the proof of \cite[Theorem 5.10.]{Montacute2024} by observing that there, Spoiler chooses the pebble sequence $\overline{z} = (z_1,\dotsc,z_n)$
  in the game exactly if the isomorphism $f \colon \mathbb{PR}_{(k_1,k_2)} J \mathcal{A}\to J \mathcal{B}$ maps sequences $\overline{s}$
  with pebble indexing $\overline{z}$ to such sequences again. This precisely shows that the reusability constraint in $\mathbb{PR}_{(k_1,k_2)}$ is preserved.  

  \Cref{isomorphism:trees}: Equivalence in the logic $\mathsf{C}^{(k_1,k_2)}_q$ is characterized by the \emph{bijective $(k_1,k_2)$-pebble game}, denoted as $\operatorname{BP}_{(k_1,k_2)}^q(\mathcal{A}, \mathcal{B})$,
  by \cite[Theorem 6]{Rassmann2025}. Thus, we again show that Duplicator has a winning strategy exactly if $\mathcal{A}$ and $\mathcal{B}$ are isomorphic in the coKleisli category and refer to \cite{Rassmann2025} for the definition of the game. 
  
  First, assume that there is an isomorphism $f \colon \mathbb{P}^q_{(k_1,k_2)} \mathcal{A} \to \mathcal{B}$ in the coKleisli category $\mathcal{K}(\mathbb{P}^q_{(k_1,k_2)})$.
  Then there is an inverse morphisms $g \colon \mathbb{P}^q_{(k_1,k_2)} \mathcal{B} \to \mathcal{A}$ such that $f \circ g^\ast = \id_{\mathbb{P}^q_{(k_1,k_2)} \mathcal{A}}$.
  Thus, for all $\overline{s} \in \mathbb{P}^q_{(k_1,k_2)} \mathcal{A}$ and $\overline{t} \in \mathbb{P}^q_{(k_1,k_2)} \mathcal{B}$ we have that 
  $\overline{t} = f^\ast(\overline{s})$ if and only if $\overline{s} = g^\ast(\overline{t})$. In particular, pebble sequences are preserved by $f^\ast$. 
  For each $\overline{s} \in \mathbb{P}^{q-1}_{(k_1,k_2)} \mathcal{A}$ and $z \in [x_{k_1}, y_{k_2}]$ we define the function $\psi_{\overline{s}, z} \colon V(\mathcal{A}) \to V(\mathcal{B})$
  by setting $\psi_{\overline{s},z}(a) = f(s[(z,a)])$.
  The winning strategy for Duplicator in $\operatorname{BP}_{(k_1,k_2)}^q(\mathcal{A}, \mathcal{B})$ is now to choose the bijection $\psi_{\overline{s},z}$ in each round 
  with position $\overline{s}$ and Spoiler picking up the pebble pair $z$. Since $f$ and $g$ are morphisms, every resulting position in the game will always induce a partial isomorphism.

  Conversely, when Duplicator has a winning strategy for $\operatorname{BP}_{(k_1,k_2)}^q(\mathcal{A}, \mathcal{B})$, we construct an isomorphism inductively as follows: 
  For every pair $\overline{s} \in \mathbb{P}^{q-1}_{(k_1,k_2)} \mathcal{A}, \overline{t} \in \mathbb{P}^{q-1}_{(k_1,k_2)} \mathcal{B}$ with the same pebble sequence 
  and $z \in [x_{k_1}, y_{k_2}]$ picked up by Spoiler, there is a bijection $\psi \colon V(\mathcal{A}) \to V(\mathcal{B})$ for Duplicator. 
  So setting $f^\ast(\overline{s}[(z,a)]) = \overline{t}[(z, \psi(a))]$ induces a bijective morphism $f^\ast \colon \mathbb{P}^q_{(k_1,k_2)} \mathcal{A} \to \mathbb{P}^q_{(k_1,k_2)}\mathcal{B}$
  and $\varepsilon_{\mathcal{B}} \circ f^\ast$ is the desired isomorphism. 
\end{proof}

One of the contributions of game comonads is to provide a unified language for various relations from finite model theory. 
Specifically, in the remainder of this section we show that morphisms in the coKleisli category characterize preservation and winning strategies for logics and games without counting.  
We first introduce the notion of reusability to the \emph{all-in-one pebble game} from \cite{Montacute2024} and the well-known \emph{existential $k$-pebble game} from \cite{Kolaitis1995}, allowing for a more fine grained analysis. 

\begin{definition}
    Let $\mathcal{A}, \mathcal{B}$ be $\sigma$-structures and $k_1+k_2 \in \mathbb{N}_+$. The \emph{all-in-one $(k_1,k_2)$-pebble game} $\AP^{(k_1,k_2)}(\mathcal{A}, \mathcal{B})$
    and \emph{$\exists\text{-}(k_1,k_2)$-pebble game} are defined as follows:
    Both games are played by the two players \emph{Spoiler} and \emph{Duplicator} on the structures $\mathcal{A}$ and $\mathcal{B}$
    and start from a (possibly empty) position $\overline{s}_0 \in ([x_{k_1}, y_{k_2}] \times V(\mathcal{A}))^m, \overline{d}_0 \in ([x_{k_1}, y_{k_2}] \times V(\mathcal{B}))^m$
    with the same pebble sequence in which each pebble pair occurs at most once. 

    In each round $n \in \mathbb{N}_+$ of the $\exists\text{-}(k_1,k_2)$-pebble game, the following steps are performed: 
    \begin{enumerate}
        \item Spoiler picks a pebble $z_n \in [x_{k_1}, y_{k_2}]$ such that $z_n$ is not yet placed or $z_n \in [x_{k_1}]$ and places it on an element $a_n \in V(\mathcal{A})$.
        \item Duplicator places the pebble $z_n$ on an element $b_n \in V(\mathcal{B})$.
    \end{enumerate}
    This induces sequences $\overline{s} = [(z_1,a_1),\dotsc,(z_n, a_n)]$ and $\overline{d} \coloneqq [(z_1,b_1),\dotsc,(z_n, b_n)]$ of placements after round $n$.

    In the single round of the game $\AP^{(k_1,k_2)}(\mathcal{A}, \mathcal{B})$, the following steps are performed:
    \begin{enumerate}
      \item Spoiler chooses a sequence $\overline{s} = [(z_1,a_1),\dotsc,(z_n, a_n)] \in ([x_{k_1}, y_{k_2}] \times V(\mathcal{A}))^n$ such that each $y_j \in [y_{k_2}]$ occurs at most once in $\overline{z} = (z_1,\dotsc,z_n)$
      and not in $\overline{s}_0$.
      \item Duplicator responds with a sequence $\overline{d} \coloneqq [(z_1,b_1),\dotsc,(z_n, b_n)] \in ([x_{k_1}, y_{k_2}] \times V(\mathcal{B}))^n$.
    \end{enumerate}
    The winning condition for both games is the following: 
    Duplicator wins the game if for all $i \in [m+n]$ the function $\eta_i$ defined by setting $\eta_i(\last_z(\overline{s}_0\overline{s}[1,i])) \coloneqq \last_z(\overline{d}_0\overline{d}[1,i])$
    for each $z \in [x_{k_1}, y_{k_2}]$ is a partial homomorphism between $\mathcal{A}$ and $\mathcal{B}$.
\end{definition}

The logic $\exists^+\mathsf{L}^{(k_1,k_2)}$ is defined as the fragment of existential positive first-order logic (i.e. no universal quantification and negation) over the variable set 
$[x_{k_1}, y_{k_2}]$ such that only variables from $[x_{k_1}]$ may be requantified. We define $\exists^+{\land}\mathsf{L}^{(k_1,k_2)}$ by additionally requiring that every conjunction is restricted 
and $\exists^+{\land}\mathsf{L}^{(k_1,k_2)}_q, \exists^+\mathsf{L}_q^{(k_1,k_2)}$ by bounding the quantifier-rank by~$q$. 

\begin{proposition} \label{prop:exist:games}
    For all $\sigma$-structures $\mathcal{A}$ and $\mathcal{B}$ the following hold:
    \begin{enumerate}
    \item \label{exist:games:ab} Duplicator wins $\AP^{(k_1,k_2)}(\mathcal{A}, \mathcal{B})$ if and only if $\mathcal{A} \Rrightarrow_{\exists^+{\land}\mathsf{L}^{(k_1,k_2)}} \mathcal{B}$.
    \item \label{exist:games:ex} Duplicator wins $q$ rounds of the $\exists\text{-}(k_1,k_2)$-pebble game if and only if $\mathcal{A} \Rrightarrow_{\exists^+\mathsf{L}_q^{(k_1,k_2)}} \mathcal{B}$.
    \end{enumerate}
\end{proposition}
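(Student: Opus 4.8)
The plan is to establish each biconditional by proving the two implications separately, following the classical correspondence between pebble games and preservation for positive logics---the round-based case is the existential pebble game of \cite{Kolaitis1995} and the all-in-one case is the game of \cite{Montacute2024}---while carefully tracking the distinction between reusable pebbles $[x_{k_1}]$ and non-reusable pebbles $[y_{k_2}]$ so that a legal pebble move corresponds exactly to a permitted (re)quantification. A useful first reduction for part~\ref{exist:games:ab} is to record a primitive normal form analogous to \Cref{prop:primitive}, but with $\exists$ in place of the counting quantifier and with no negation, so that every $\exists^+{\land}\mathsf{L}^{(k_1,k_2)}$ sentence is a disjunction of path-like primitive formulas, each of which is witnessed along a single linear sequence of quantified elements---precisely the object Spoiler commits to in the all-in-one game.

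For the direction ``Duplicator wins $\Rightarrow$ preservation'' I would argue by induction on formula structure, using that positive atomic formulas are preserved by partial homomorphisms. Concretely, I would encode a winning strategy as a (for part~\ref{exist:games:ex}, depth-stratified) family of partial homomorphisms closed under the forth property within the pebble budget, and then show that for every formula $\varphi$ in the fragment and every member $h$ of this family whose domain covers $\free(\varphi)$, one has $\mathcal{A},\overline{a}\models\varphi \Rightarrow \mathcal{B}, h(\overline{a})\models\varphi$. The existential case is exactly where the game enters: a witness in $\mathcal{A}$ for $\exists z\,\psi$ is matched by Duplicator's response extending $h$, and the rule that only pebbles from $[x_{k_1}]$ may be re-placed is what guarantees the move is legal precisely when $z$ is requantifiable. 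In part~\ref{exist:games:ex} the bound of $q$ rounds caps the nesting depth of quantifiers, so rank-$q$ formulas are covered, and conjunctions are handled by applying the hypothesis to each conjunct under the same $h$; in part~\ref{exist:games:ab} the all-in-one shape means a whole primitive formula is verified along one committed sequence.

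For the converse ``preservation $\Rightarrow$ Duplicator wins'' I would argue by contraposition, constructing a separating sentence from a Spoiler win. The construction reuses the type-formula machinery from the proof of \Cref{thm:logic:abp}, but keeps only the \emph{positive} part of the literal diagram (discarding negated atoms), so the result is existential positive and is preserved under homomorphisms rather than isomorphisms. For part~\ref{exist:games:ab}, if Duplicator has no winning response there is a Spoiler sequence $\overline{s}$ admitting no answer that is a partial homomorphism; the positive diagram of $\overline{s}$, existentially closed along $\overline{s}$ with each $y_j$ bound only once, is a primitive $\exists^+{\land}\mathsf{L}^{(k_1,k_2)}$ sentence satisfied by $\mathcal{A}$ via $\overline{s}$ but not by $\mathcal{B}$, since a satisfying tuple in $\mathcal{B}$ would be a winning Duplicator response. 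For part~\ref{exist:games:ex} I would instead recurse on Spoiler's winning strategy tree for the $q$-round game: each Spoiler move contributes a guarded existential quantifier (of nesting depth at most $q$) and the branching of the strategy contributes an \emph{unrestricted} conjunction, yielding a sentence of quantifier rank at most $q$ in $\exists^+\mathsf{L}_q^{(k_1,k_2)}$ that again separates $\mathcal{A}$ from $\mathcal{B}$.

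The main obstacle I anticipate is matching the combinatorial shape of each game to its syntactic fragment and keeping the formulas strictly inside it: the linear, all-in-one sequence of part~\ref{exist:games:ab} must correspond to the \emph{restricted} (path-like) conjunctions, while the branching strategy tree of part~\ref{exist:games:ex} corresponds to \emph{unrestricted} conjunctions bounded only in quantifier rank. In both constructions the delicate bookkeeping is ensuring that each non-reusable pebble $y_j$ is placed at most once along any branch, so that the induced quantifier pattern never requantifies a $y_j$; this is exactly the point where the reusability restriction of the games and the requantification restriction of the logics have to be reconciled, and verifying it in the forward family-of-homomorphisms argument and in the backward diagram construction is where the bulk of the care is needed.
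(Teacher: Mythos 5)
Your proposal is correct and follows essentially the same route as the paper: reduce to a primitive normal form in the spirit of \Cref{prop:primitive}, handle one direction by induction on formulas (the paper phrases it contrapositively, building Spoiler's strategy from a separating sentence, which is equivalent to your direct preservation argument), and handle the other by existentially closing the positive literal diagram along Spoiler's winning sequence, with the same bookkeeping that each $y_j$ is placed only once so no non-reusable variable is requantified. The only cosmetic difference is that for part~\ref{exist:games:ex} the paper simply observes that the argument of Kolaitis--Vardi preserves quantifier rank and reusability, whereas you sketch that strategy-tree recursion explicitly.
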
 
\begin{proof}
  \Cref{exist:games:ab}: First, we show by induction on $q \in \mathbb{N}$ that if there is a formula $\varphi \in \exists^+{\land}\mathsf{L}^{(k_1,k_2)}_q$ with
  $\mathcal{A}, \overline{s}_0 \models \varphi$ and $\mathcal{B}, \overline{d}_0 \centernot\models \varphi$, then 
  Spoiler has a winning strategy for the game $\AP^{(k_1,k_2)}(\mathcal{A}, \mathcal{B})$ from the starting position $\overline{s}_0, \overline{d}_0$ with a sequence of length $q$.
  Here we view $\overline{s}_0 = [(z_1,a_1), \dotsc, (z_m,a_m)], \overline{d}_0 = [(z_1,b_1), \dotsc, (z_m,b_m)]$ as variable assignments $z_i \mapsto a_i$. 
  For the base case $q=0$ the claim holds since mapping $a_i \mapsto b_i$ is not a partial homomorphism from $\mathcal{A}$ to $\mathcal{B}$
  exactly if Spoiler wins the $0$-move game. 
  For the inductive step, let $q>0$ and assume $\mathcal{A}, \overline{s}_0 \models \varphi$ and $\mathcal{B}, \overline{d}_0 \centernot\models \varphi$.
  By the same argument as for \Cref{prop:primitive} we may assume that $\varphi$ is a disjunction of primitive formulas. 
  If $\varphi$ is a restricted conjunction, there is at most one conjunct which is not a sentence and therefore contains all the free variables.
  Since $\mathcal{A}, \overline{s}_0$ and $\mathcal{B}, \overline{d}_0$ must disagree on this conjunct, 
  it suffices to consider $\varphi = \exists z_i \psi$ with $\psi \in \exists^+{\land}\mathsf{L}^{(k_1,k_2)}_{q-1}$. 
  Then there exists $a \in V(\mathcal{A})$ with $\mathcal{A}, \overline{s}_0[(z_i,a)] \models \psi$ but for all $b \in V(\mathcal{B})$ we have 
  $\mathcal{B}, \overline{d}_0[(z_i, a)] \centernot\models \psi$. 
  Thus, Spoiler has a winning strategy by placing $z_i$ on $a$ on the first move and proceeding by induction for the remaining game. 
  If $z_i \in [y_{k_2}]$ occurs in the pebble sequence of $\overline{s}_0$ we may remove the placement from $\overline{s}_0$ to avoid reusing the pebble pair in the strategy.

  For the converse direction, let $\overline{z} \in [x_{k_1}, y_{k_2}]^n$ and $\overline{a} \in \mathcal{A}^n$ and define for $\overline{za} = [(z_1,a_1), \dotsc (z_n,a_n)]$ and $i \in [n]$ the formulas
  \begin{align*}
    \mathsf{diag}_{\overline{zv}[1,i]} & \coloneqq \{R(z_{i_1},\dotsc, z_{i_r}) : R \in \sigma, r\in [n], \forall i \in [r]~\last_{z_{i_\ell}}(\overline{pv}[1,i]) = v_{i_\ell},  (v_{i_1},\dotsc,v_{i_r})\in R^{\mathcal{A}}\}. \\
    \varphi_0 & \coloneqq \top \\
    \varphi_i & \coloneqq \exists z_i \varphi_{i-1} \land \bigwedge_{\psi \in \mathsf{diag}_{\overline{zv}[1,i]}} \psi.
  \end{align*}
  If picking the sequence $\overline{za}$ is a winning move for Spoiler, then the sentence $\varphi_n$ holds on $\mathcal{A}$, but not on $\mathcal{B}$. 

  \Cref{exist:games:ex}: The proof of \cite[Theorem 4.8]{Kolaitis1995} already implicitly preserves quantifier-rank and reusability constraints. 
\end{proof}

Finally, we also obtain what is called a \emph{morphism power theorem}, linking morphisms of game comonads with restricted reusability to preservation in counting-free logics with restricted requantification. 
\begin{theorem} \label{thm:morphism}
    For all $\sigma$-structures $\mathcal{A}$ and $\mathcal{B}$ the following hold:
    \begin{enumerate}
        \item \label{morphism:path} There exists a coKleisli morphism $f \colon \mathbb{PR}_{(k_1,k_2)} \mathcal{A} \to \mathcal{B}$ if and only if $\mathcal{A} \Rrightarrow_{\exists^+{\land}\mathsf{L}^{(k_1,k_2)}} \mathcal{B}$ 
        \item \label{morphism:tree} There exists a coKleisli morphism $f \colon \mathbb{P}^q_{(k_1,k_2)} \mathcal{A} \to \mathcal{B}$ if and only if $\mathcal{A} \Rrightarrow_{\exists^+\mathsf{L}_q^{(k_1,k_2)}} \mathcal{B}$
    \end{enumerate}
\end{theorem}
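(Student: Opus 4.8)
The plan is to prove both equivalences by reducing the existence of a coKleisli morphism to the existence of a Duplicator winning strategy in the corresponding counting-free pebble game, and then invoking \Cref{prop:exist:games}. Concretely, for \Cref{morphism:path} I would show that there is a coKleisli morphism $f \colon \mathbb{PR}_{(k_1,k_2)}\mathcal{A} \to \mathcal{B}$ if and only if Duplicator wins $\AP^{(k_1,k_2)}(\mathcal{A},\mathcal{B})$, and for \Cref{morphism:tree} that there is a coKleisli morphism $f \colon \mathbb{P}^q_{(k_1,k_2)}\mathcal{A} \to \mathcal{B}$ if and only if Duplicator wins $q$ rounds of the $\exists\text{-}(k_1,k_2)$-pebble game. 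The assertions then follow immediately from parts~\ref{exist:games:ab} and~\ref{exist:games:ex} of \Cref{prop:exist:games}. These correspondences are the morphism-level analogues of the coalgebra and isomorphism theorems and are established for the unrestricted comonads in \cite{Montacute2024,Abramsky2017}; the only genuinely new point is to check that the reusability constraint is respected on both sides.

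For \Cref{morphism:path}, a coKleisli morphism is simply a homomorphism $f \colon \mathbb{PR}_{(k_1,k_2)}\mathcal{A} \to \mathcal{B}$ in $\Str(\sigma)$. Given such an $f$, I let Duplicator answer a sequence $\overline{s} = [(z_1,a_1),\dotsc,(z_n,a_n)]$ chosen by Spoiler with the response $\overline{d} \coloneqq [(z_i, f(\overline{s},i))]_{i \in [n]}$; since Spoiler may use each non-reusable pebble only once, every admissible $\overline{s}$ lies in the universe of $\mathbb{PR}_{(k_1,k_2)}\mathcal{A}$, so this is well defined. Conversely, a Duplicator winning strategy assigns to each admissible $\overline{s}$ a response $\overline{d}$, and I read off $f(\overline{s},i) \coloneqq d_i$. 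The heart of the argument is to verify that $f$ preserves every relation $R^{\mathbb{PR}_{(k_1,k_2)}\mathcal{A}}$ if and only if every function $\eta_i$ occurring in the winning condition is a partial homomorphism. This is where the defining clauses of $\mathbb{PR}_{(k_1,k_2)}\mathcal{A}$ are used: \emph{equality} and \emph{compatibility} say that $R^{\mathbb{PR}_{(k_1,k_2)}\mathcal{A}}((\overline{s},i_1),\dotsc,(\overline{s},i_m))$ encodes exactly that $R^{\mathcal{A}}$ holds of the pebbled elements at positions $i_1,\dotsc,i_m$, while the \emph{active pebble} clause matches the use of $\last_z$ in the winning condition, which records only the most recent occurrence of each pebble up to time $i = \max\{i_1,\dotsc,i_m\}$. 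Reconciling these two formulations is the main obstacle, though it amounts to a careful unwinding of the definitions.

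For \Cref{morphism:tree}, I proceed analogously with the pebbling comonad. A coKleisli morphism is a homomorphism $f \colon \mathbb{P}^q_{(k_1,k_2)}\mathcal{A} \to \mathcal{B}$, whose universe consists of sequences of length at most $q$, so it encodes exactly a play of at most $q$ rounds. Given $f$, Duplicator responds at round $i$ to the current play $\overline{s}[1,i]$ with the element $f(\overline{s}[1,i]) \in V(\mathcal{B})$, which depends only on the history and hence is a legal strategy, the length bound $q$ being exactly the round bound. Conversely, a winning strategy determines $f$ by $f(\overline{s}) \coloneqq d_{|\overline{s}|}$. As before, the \emph{comparability} and \emph{active pebble} conditions defining $R^{\mathbb{P}^q_{(k_1,k_2)}\mathcal{A}}$ correspond to the fact that the pebbled positions all lie on one play and to the $\last_z$-based winning condition, so that $f$ being a homomorphism is equivalent to Duplicator's strategy being winning. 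In both cases the admissible plays place each non-reusable pebble at most once, which is precisely the reusability constraint imposed on the universes of $\mathbb{PR}_{(k_1,k_2)}$ and $\mathbb{P}^q_{(k_1,k_2)}$; this guarantees that the reductions are well defined and lets us inherit the unrestricted arguments of \cite{Montacute2024,Abramsky2017} essentially verbatim.
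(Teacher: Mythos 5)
Your proposal follows essentially the same route as the paper: both reduce the existence of a coKleisli morphism for $\mathbb{PR}_{(k_1,k_2)}$ (resp.\ $\mathbb{P}^q_{(k_1,k_2)}$) to a Duplicator winning strategy in $\AP^{(k_1,k_2)}$ (resp.\ the $q$-round $\exists\text{-}(k_1,k_2)$-pebble game) by translating between $f$ and Duplicator's responses via $f(\overline{s},i)$, checking that pebble sequences and hence the reusability constraint are preserved, and then inheriting the unrestricted arguments of \cite{Montacute2024,Abramsky2017} together with \Cref{prop:exist:games}. The verification that relation preservation matches the $\last_z$-based partial-homomorphism winning condition via the equality, compatibility, and active-pebble clauses is exactly the point the paper also delegates to the cited proofs, so the argument is correct.
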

\begin{proof}
  \Cref{morphism:path}: Assume Duplicator has a winning strategy for $\AP^{(k_1,k_2)}(\mathcal{A}, \mathcal{B})$.
  Then for every choice $\overline{s} = [(z_1,a_1),\dotsc,(z_n,a_n)] \in ([x_{k_1}, y_{k_2}] \times V(\mathcal{A}))^n$
  there exists a winning choice $\overline{d} = [(z_1,a_1),\dotsc,(z_n,a_n)] \in ([x_{k_1}, y_{k_2}] \times V(\mathcal{B}))^n$ for Duplicator. 
  We define $f^\ast \colon \mathbb{PR}_{(k_1,k_2)}\mathcal{A} \to \mathbb{PR}_{(k_1,k_2)}\mathcal{B}$ by setting 
  $f^\ast(\overline{s}, i) = (\overline{d}, i)$ for each $i \in [n]$. This coextension preserves pebble sequences and hence reusability constraints. 
  The function $\varepsilon_{\mathcal{B}} \circ f^\ast$ is the desired coKleisli morphism, see the proof of \cite[Theorem 5.5]{Montacute2024}. 

  Assume that there exists a coKleisli morphism $f \colon \mathbb{PR}_{(k_1,k_2)} \mathcal{A} \to \mathcal{B}$.
  Let Spoiler choose the sequence $\overline{s} = [(z_1,a_1),\dotsc,(z_n,a_n)] \in ([x_{k_1}, y_{k_2}] \times V(\mathcal{A}))^n$ in $\AP^{(k_1,k_2)}(\mathcal{A}, \mathcal{B})$,
  then for $f^\ast(\overline{s}, n) = (\overline{d}, n)$ with  $\overline{d} = [(z_1, f(\overline{s},1)),\dotsc, (z_n, f(\overline{s}, n))]$
  let Duplicator choose $\overline{d}$ as response. This in particular respects the pebble sequence chosen by Spoiler. 
  It follows that setting $\eta_i(\last_z(\overline{s}[1,i])) \coloneqq \last_z(\overline{d}[1,i])$ for each $z \in [x_{k_1}, y_{k_2}]$ is a partial homomorphism between $\mathcal{A}$ and $\mathcal{B}$
  from the active pebble and compatibility conditions. 

  \Cref{morphism:tree}: All constructions in the proof of \cite[Theorem 13]{Abramsky2017} preserve pebble sequences and thus also the number of rounds and reusability. 
\end{proof}
  \section{Conclusion}
In this work, we extended the analysis of homomorphism indistinguishability to graph classes characterized by graph decompositions with restricted reusability. 
We demonstrate how decomposition-based approaches offer robust and adaptable techniques for characterizing indistinguishability relations as well as for establishing homomorphism distinguishing closedness.
Moreover, by integrating these results within the broader framework of game comonads, we present a unified categorical perspective on the role of requantification in finite variable counting logics.
We list some open questions for future work: 
\begin{itemize}
    \item The homomorphism indistinguishability relation $\equiv_{\mathcal{T}^{(k_1,k_2)}_q}$ can be decided by a more space-efficient variant of the $(k_1+k_2)$-dimensional Weisfeiler-Leman algorithm \cite{Rassmann2025}. 
    It would be interesting to develop a \emph{linearized variant} to efficiently decide the relation $\equiv_{\mathcal{P}^{(k_1,k_2)}}$.  
    The logic $\exists^+ \mathsf{L}^k$ is closely related to the \emph{$k$-consistency algorithm} for solving \emph{constraint satisfaction problems},  
    so it might be fruitful to explore whether restricting reusability yields improved algorithmic techniques. 
    \item Motivated by the constructive nature of our results, it is natural to ask for more connections between model-comparison and pursuit-evasion games. 
    Specifically, identifying broader classes of games that align with logical equivalences could yield more results on characterizations and h.d. closedness.
    \item While we have established h.d. closedness for specific graph classes,  
    a comonadic treatment of this property could provide a deeper understanding of its categorical structure.  
    In particular, investigating whether h.d. closedness can be characterized via coalgebraic properties of game comonads 
    might reveal fundamental principles governing homomorphism indistinguishability of relational structures as in \cite{Dawar2021}. 
\end{itemize} 
  \bibliography{bibliography}
\end{document}